\newcommand{\NP}{{\sf NP}}
\DeclareMathOperator{\tw}{tw}
\DeclareMathOperator{\cw}{cw}
\newcommand{\ssi}{\subseteq_i}
\newcommand{\si}{\supseteq_i}
\newcommand\displaycase[1]{{\bf #1}}
\newcolumntype{V}{>{\centering\arraybackslash} m{.17\textwidth} }
\newcolumntype{W}{>{\centering\arraybackslash} m{.13\textwidth} }
\newenvironment{enumeratei}
    {\begin{enumerate}[(i)]

    }
    {\end{enumerate}}
\newenvironment{enumerate1}
    {\begin{enumerate}[1.]

    }
    {\end{enumerate}}
\renewcommand\subparagraph[1]{{\bf #1. }}
\newtheorem{oproblem}{Open Problem}
\newcounter{ctrclaim}[theorem]
\title{Clique-width of Graph Classes Defined by\\ Two Forbidden Induced Subgraphs\thanks{The research in this paper was supported by EPSRC (EP/G043434/1 and EP/K025090/1) and ANR (TODO ANR-09-EMER-010).}}
\author{Konrad K. Dabrowski\inst{1} \and Dani\"el Paulusma\inst{1}}
\institute{
School of Engineering and  Computing Sciences, Durham University,\\
Science Laboratories, South Road,\\
Durham DH1 3LE, United Kingdom\\
\texttt{\{konrad.dabrowski,daniel.paulusma\}@durham.ac.uk}
}
\begin{document}
\maketitle
\begin{abstract}
\begin{sloppypar}
If a graph has no induced subgraph isomorphic to any graph in a finite family $\{H_1,\ldots,H_p\}$, it is said to be $(H_1,\ldots,H_p)$-free.
The class of $H$-free graphs has bounded clique-width if and only if $H$ is an induced subgraph of the 4-vertex path $P_4$. 
We study the (un)boundedness of the clique-width of graph classes defined by two forbidden induced subgraphs~$H_1$ and $H_2$.
Prior to our study it was not known whether the number of open cases was finite.
We provide a positive answer to this question.
To reduce the number of open cases we determine new graph classes of bounded clique-width 
and new graph classes of unbounded clique-width.
For obtaining the latter results we first present a new, generic construction for graph classes of unbounded clique-width.
Our results settle the boundedness or unboundedness of the clique-width of the class of $(H_1,H_2)$-free graphs
\end{sloppypar}
\begin{enumeratei}
\begin{sloppypar}
\item for all pairs $(H_1,H_2)$, both of which are connected, except two non-equivalent cases, and
\item for all pairs $(H_1,H_2)$, at least one of which is not connected, except~11 non-equivalent cases.
\end{sloppypar}
\end{enumeratei}
We also consider classes characterized by forbidding a finite family of graphs $\{H_1,\ldots,H_p\}$ as subgraphs, minors and topological minors, respectively, and completely determine which of these classes have bounded clique-width.
Finally, we show algorithmic consequences of our results 
for the graph colouring problem restricted to $(H_1,H_2)$-free graphs.
\keywords{clique-width, forbidden induced subgraph, graph class}
\end{abstract}

\section{Introduction}\label{s-intro}

Clique-width is a well-known graph parameter studied both in a structural and in an algorithmic context; we refer to the surveys of Gurski~\cite{Gu07} and Kami\'nski, Lozin and Milani\v{c}~\cite{KLM09} for an in-depth study of the properties of clique-width. 
However, our understanding of clique-width, which is one of the
most difficult graph parameters to deal with, is still very limited. For example, no
polynomial-time algorithms are known for computing the clique-width of very
restricted graph classes, such as unit interval graphs, or for deciding whether
a graph has clique-width at most $c$ for any fixed $c\geq 4$ (as an aside, we
note that such an algorithm does exist for $c=3$~\cite{CHLRR12}).

In order to get more structural insight into clique-width, we are interested in determining 
whether the clique-width of some given class of graphs is {\it bounded}, that is, whether there exists
a constant~$c$ such that every graph from the class has clique-width at most~$c$
(our secondary motivation is algorithmic, as we will explain in detail later).
The graph classes that we consider consist 
of graphs in which one or more specified graphs are forbidden as a ``pattern''.
In particular, we consider classes of graphs that contain no graph from some specified family $\{H_1,\ldots,H_p\}$ as an {\it induced subgraph}; such classes are said to
be $(H_1,\ldots,H_p)$-free. Our research is well embedded in the literature, as there are many papers that determine the boundedness or unboundedness of the clique-width of graph classes characterized by one or more forbidden induced subgraphs; 
see e.g.~\cite{BL02,BGMS14,BDHP15,BELL06,BKM06,BK05,BLM04b,BLM04,BM02,BM03,DGP14,DHP0,DLRR12,DP14,GR99b,LR04,LR06,LV08,MR99}.

As we show later, it is not difficult to verify that the class of $H$-free graphs has bounded clique-width if and only if $H$ is an induced subgraph of the 4-vertex path $P_4$. Hence, it is natural to consider the following problem:

\smallskip
\noindent
{\em For which pairs $(H_1,H_2)$ does the class of $(H_1,H_2)$-free graphs have bounded clique-width?}

\medskip
\noindent
In this paper we address this question by narrowing the gap between the known and open cases significantly; 
in particular we show that the number of open cases is finite.
We emphasise that the {\it underlying} research question is: what kind of properties of a graph class ensure that its clique-width is bounded?
Our paper is to be interpreted as a further step towards this direction, and
in our research project (see also~\cite{BDHP15,DHP0,DP14}) we aim 
to develop 
{\it general} techniques for attacking a number of the open cases simultaneously.

\medskip
\noindent
\subparagraph{Algorithmic Motivation}
For problems that are \NP-complete in general, one naturally seeks to find subclasses of graphs on which they are tractable,
and graph classes of bounded clique-width have been studied extensively for this purpose, as we discuss below.

Courcelle, Makowsky and Rotics~\cite{CMR00} showed that all MSO$_1$ graph problems, which are
problems  definable in Monadic Second Order Logic using quantifiers on vertices but not on edges, 
can be solved in linear time on graphs with clique-width at most~$c$, provided that a $c$-expression of the input graph is given. Later, Espelage, Gurski and Wanke~\cite{EGW01}, Kobler and Rotics~\cite{KR03b} and Rao~\cite{Ra07} proved the same result for many non-MSO$_1$ graph problems.
Although computing the clique-width of a given graph is 
\NP-hard, as shown by Fellows, Rosamond, Rotics and Szeider~\cite{FRRS09}, it is possible to find an $(8^{c}-1)$-expression for any $n$-vertex graph with clique-width at most $c$ in cubic time. This is a result of Oum~\cite{Oum08} after a similar result (with a worse bound and running time) had already been shown 
by~Oum and Seymour~\cite{OS06}.
Hence, the \NP-complete problems considered in the aforementioned papers~\cite{CMR00,EGW01,KR03b,Ra07} are all polynomial-time solvable on any graph class of bounded clique-width even if no $c$-expression of the input graph is given.

As a consequence of the above, when solving an \NP-complete problem on some  graph class~${\cal G}$, it is natural to try to determine {\em first} whether the clique-width of ${\cal G}$ is bounded. 
In particular this is the case if we aim to determine the computational complexity of some \NP-complete problem when restricted to graph classes characterized by some common type of property.
This property may be the absence of a family of forbidden induced subgraphs $H_1,\ldots,H_p$ and
we may want to classify for which families of graphs $H_1,\ldots,H_p$ the problem is still \NP-hard and for which ones it becomes polynomial-time solvable (in order
to increase our understanding of the hardness of the problem in general). 
We give examples later.

\medskip
\noindent
\subparagraph{Our Results}
In Section~\ref{sec:prelim} we state a number of basic results on clique-width and two results on $H$-free bipartite graphs that we showed in a very recent paper~\cite{DP14}; we need these results for proving our new results.
We then identify a number of new classes of $(H_1,H_2)$-free graphs of  bounded clique-width (Section~\ref{sec:bounded}) and
unbounded clique-width (Section~\ref{sec:unbounded}). 
In particular, the new unbounded cases are obtained from a new, general construction for graph classes of unbounded clique-width.
In Section~\ref{sec:twographs}, we first observe for which graphs $H_1$ the class of $H_1$-free graphs have bounded clique-width. We then 
present our main theorem that 
gives a summary of our current knowledge of those pairs $(H_1,H_2)$ for which the class of $(H_1,H_2)$-free graphs has  bounded
clique-width and unbounded clique-width, respectively.\footnote{Before finding the combinatorial proof of our main theorem we first obtained a computer-assisted proof using Sage~\cite{sage} and the Information System on Graph Classes and their Inclusions~\cite{isgci} (which keeps a record of classes for which boundedness or unboundedness of clique-width is known). In particular, we would like to thank Nathann Cohen and 
Ernst de Ridder for their help.}
In this way we are able to narrow the gap to~13 open cases (up to some equivalence relation, which we explain later); when we only consider
pairs $(H_1,H_2)$ of connected graphs the number of non-equivalent open cases is only two.
In order to present our summary, we will need several results from the papers listed above.
We will also need these results in Section~\ref{s-other}, where we consider graph classes characterized by forbidding a finite family of graphs $\{H_1,\ldots,H_p\}$ as subgraphs, minors and topological minors, respectively. For these containment relations we are able to completely determine which of these classes have bounded clique-width.

\medskip
\noindent
\subparagraph{Algorithmic Consequences} Our results are of interest for any \NP-complete problem that is solvable in polynomial time on graph classes of bounded clique-width.
In Section~\ref{s-conclusions} we give a concrete application of our results by considering the well-known {\sc Colouring} problem, which 
is that of testing whether a graph can be coloured with at most $k$ colours for some given integer $k$ and which is 
solvable in polynomial time on any graph class of bounded clique-width~\cite{KR03b}.
The complexity of {\sc Colouring} has been studied extensively for $(H_1,H_2)$-free  graphs~\cite{DGP14,DLRR12,GP14,KKTW01,Ma13,Sc05}, but
a full classification is still far from being settled.
Many of the polynomial-time results follow directly from bounding the clique-width in such classes.
As such this forms a direct motivation for our research.
Another example for which our study might be of interest is the {\sc List $k$-Colouring} problem 
(another problem mentioned in the paper of Kobler and Rotics~\cite{KR03b}). The complexity of this problem was recently investigated for $(H_1,H_2)$-free graphs when $H_1$ is a path and $H_2$ is a cycle~\cite{HJP14}.

\medskip
\noindent
\subparagraph{Related Work}
We finish this section by briefly discussing some related results. 

First,
a graph class ${\cal G}$ has power-bounded clique-width if there is a constant $r$ so that the class consisting of all $r$-th powers of all graphs from ${\cal G}$ has bounded clique-width. Recently, Bonomo, Grippo, Milani\v{c} and Safe~\cite{BGMS14} determined all pairs  
of connected graphs $H_1,H_2$ 
for which the class of $(H_1,H_2)$-free graphs has power-bounded clique-width. If a graph class has bounded clique-width, it has power-bounded clique-width.
However, the reverse implication does not hold in general. The latter can be seen as follows. Bonomo et al.~\cite{BGMS14} showed that the class of $H$-free graphs has power-bounded clique-width if and only if $H$ is a linear forest (recall that such a class has bounded clique-width if and only if $H$ is an induced subgraph of $P_4$). Their classification for connected graphs~$H_1,H_2$ is the following. Let~$S_{1,i,j}$ be the graph obtained from a 4-vertex star by subdividing one leg $i-1$ times and another leg $j-1$ times. Let $T_{1,i,j}$ be the line graph of~$S_{1,i,j}$. Then the class of $(H_1,H_2)$-free graphs has power-bounded clique-width if and only if one of the  following two cases applies: (i) one of $H_1, H_2$ is a path or (ii) one of~$H_1, H_2$ is isomorphic to
$S_{1,i,j}$ for some $i,j\geq 1$ and the other one is isomorphic to $T_{1,i',j'}$ for some $i',j'\geq 1$. 
In particular, the classes of power-unbounded clique-width were already known to have unbounded clique-width.

Second, Kratsch and Schweitzer~\cite{KS12} initiated a study into the computational complexity of the {\sc Graph Isomorphism} problem (GI) for graph classes defined by two forbidden induced subgraphs.
The exact number of open cases  is still not known, but
Schweitzer~\cite{Sc15} very recently proved that this number is finite.
There are similarities between classifying the boundedness of clique-width and solving GI for classes of graphs characterized by one or more forbidden induced subgraphs.
This was noted by Schweitzer\cite{Sc15}, who proved
that any graph class that allows a so-called simple path encoding has unbounded clique-width.
Indeed, a common technique (see e.g.~\cite{KLM09}) for showing that a class of graphs has unbounded clique-width relies on showing that it contains simple path encodings of walls or of graphs in some other specific graph class known to have unbounded clique-width.
For $H$-free graphs, GI is polynomial-time solvable  if $H$ is an induced subgraph of 
$P_4$~\cite{CLS81} and GI-complete otherwise~\cite{KS12}. Hence, if only one induced subgraph is forbidden, the dichotomy classifications for clique-width and GI are identical.

\section{Preliminaries}\label{sec:prelim}

Below we define the graph terminology used throughout our paper. For any undefined terminology we refer to Diestel~\cite{Di12}.

Let $G$ be a graph.  
The set $N(u)=\{v\in V(G)\; |\; uv\in E(G)\}$ is the {\em (open) neighbourhood} of $u\in V(G)$ and $N[u] = N(u) \cup \{u\}$ is the {\em closed
neighbourhood} of $u\in V(G)$.
The {\em degree} of a vertex in a graph is the size of its neighbourhood.
The {\em maximum degree} of a graph is the maximum vertex degree.
For a subset $S\subseteq V(G)$, we let $G[S]$ denote the subgraph of $G$ {\it induced} by $S$, which has vertex set~$S$ and edge set $\{uv\; |\; u,v\in S, uv\in E(G)\}$.
If $S=\{s_1,\ldots,s_r\}$ then, to simplify notation, we may also write $G[s_1,\ldots,s_r]$ instead of $G[\{s_1,\ldots,s_r\}]$.
Let~$H$ be another graph.
We write $H\ssi G$ to indicate that $H$ is an induced subgraph of~$G$.

Let  $\{H_1,\ldots,H_p\}$ be a set of graphs. We say that
a graph $G$ is {\it $(H_1,\ldots,H_p)$-free} if $G$ has no induced subgraph isomorphic to a graph in $\{H_1,\ldots,H_p\}$.
If $p=1$, we may write $H_1$-free instead of $(H_1)$-free.
The {\it disjoint union} $G+H$ of two vertex-disjoint graphs $G$ and $H$ is the graph with vertex set 
$V(G)\cup V(H)$ and edge set  $E(G)\cup E(H)$. 
We denote the disjoint union of $r$ copies of $G$ by $rG$.

For positive integers $s$ and $t$, the  {\em Ramsey number} $R(s,t)$ is the smallest number $n$ such that all  graphs on $n$
vertices contain an independent set of size~$s$ or a clique of size $t$.  Ramsey's Theorem~\cite{Ra30} states that such a number exists for all positive integers $s$ and $t$.

The {\em clique-width} of a graph $G$, denoted $\cw(G)$, is the minimum 
number of labels needed to
construct $G$ by
using the following four operations:
\begin{enumerate}
\item creating a new graph consisting of a single vertex $v$ with label $i$ (denoted by $i(v)$);
\item taking the disjoint union of two labelled graphs $G_1$ and $G_2$ (denoted by $G_1\oplus G_2$);
\item joining each vertex with label $i$ to each vertex with label $j$ ($i\neq j$, denoted by $\eta_{i,j}$);
\item renaming label $i$ to $j$ (denoted by $\rho_{i\rightarrow j}$).
\end{enumerate}
An algebraic term that represents such a construction of $G$ and uses at most~$k$ labels is said to be a {\em $k$-expression} of $G$ (i.e. the clique-width of $G$ is the minimum $k$ for which $G$ has a $k$-expression).
For instance, an induced path on four consecutive vertices $a, b, c, d$ has clique-width equal to 3, and the following 3-expression can be used to construct it:
$$
\eta_{3,2}(3(d)\oplus \rho_{3\rightarrow 2}(\rho_{2\rightarrow 1}(\eta_{3,2}(3(c)\oplus \eta_{2,1}(2(b)\oplus 1(a)))))).
$$
Alternatively, any $k$-expression for a graph $G$ can be represented by a rooted tree, where the leaves correspond to
the operations of vertex creation and the internal nodes correspond to the other three operations.
The rooted tree representing the above $k$-expression is depicted in \figurename~\ref{fig:cwtree}.
A class of graphs~${\cal G}$ has \emph{bounded} clique-width if 
there is a constant $c$ such that the clique-width of every graph in~${\cal G}$ is at most $c$; otherwise the clique-width of ${\cal G}$ is 
{\em unbounded}.

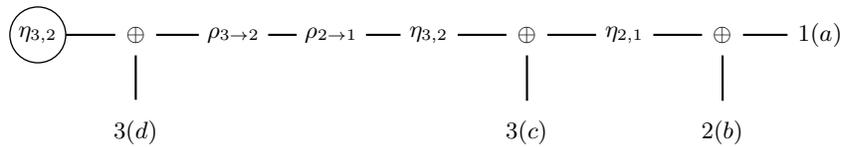
\begin{figure}
\begin{center}
\begin{tikzpicture}[scale=1.3]
\GraphInit[vstyle=Empty]
\Vertex[x=5,y=1,L=$\oplus$           ]{v51}
\Vertex[x=5,y=0,L=$3(d)$          ]{v50}
\Vertex[x=6,y=1,L=$\rho_{3\rightarrow 2}$]{v61}
\Vertex[x=7,y=1,L=$\rho_{2\rightarrow 1}$]{v71}
\Vertex[x=8,y=1,L=$\eta_{3,2}$           ]{v81}
\Vertex[x=9,y=1,L=$\oplus$           ]{v91}
\Vertex[x=9,y=0,L=$3(c)$          ]{v90}
\Vertex[x=10,y=1,L=$\eta_{2,1}$           ]{v101}
\Vertex[x=11,y=1,L=$\oplus$           ]{v111}
\Vertex[x=11,y=0,L=$2(b)$          ]{v110}
\Vertex[x=12,y=1,L=$1(a)$                 ]{v121}
\SetVertexNormal[MinSize=6pt]
\Vertex[x=4,y=1,L=$\eta_{3,2}$           ]{v41}
\Edge(v50)(v51)
\Edge(v90)(v91)
\Edge(v110)(v111)
\Edges(v41,v51,v61,v71,v81,v91,v101,v111,v121)

\end{tikzpicture}
\caption{The rooted tree representing a 3-expression for $P_4$.}\label{fig:cwtree}
\end{center}
\end{figure}

Let $G$ be a graph.
The {\it complement} of $G$, denoted by $\overline{G}$, has vertex set $V(\overline{G})=V(G)$ and an edge between two distinct vertices 
if and only if these vertices are not adjacent in~$G$.

Let $G$ be a graph. We define the following five operations.
The {\em contraction} of an edge $uv$ removes~$u$ and~$v$ from $G$, and
replaces them by a new vertex made adjacent to precisely those vertices that
were adjacent to $u$ or $v$ in $G$. By definition, edge contractions create neither self-loops nor multiple edges.
The \emph{subdivision} of an edge $uv$ replaces $uv$ by a new vertex~$w$ with edges $uw$ and~$vw$.
Let $u\in V(G)$ be a vertex that has exactly two neighbours $v,w$, and moreover let $v$ and $w$ be non-adjacent.
The \emph{vertex dissolution} of~$u$ removes~$u$ and adds the edge~$vw$.
For an induced subgraph $G'\ssi G$,  the {\em subgraph complementation} operation (acting on $G$ with respect to $G'$) replaces every edge present in $G'$ 
by a non-edge, and vice versa.  
Similarly, for two disjoint vertex subsets~$X$ and~$Y$ in~$G$, the {\em bipartite complementation} operation with respect to $X$ and~$Y$ acts on~$G$ by replacing
every edge with one end-vertex in~$X$ and the other one in $Y$ by a non-edge and vice versa.

We now state some useful facts for dealing with clique-width. We will use these facts throughout the paper.
Let $k\geq 0$ be a constant and let $\gamma$ be some graph operation. 
We say that a graph class ${\cal G'}$ is {\it $(k,\gamma)$-obtained} from a graph class ${\cal G}$ 
if the following two conditions hold:
\begin{enumeratei}
\item every graph in ${\cal G'}$ is obtained from a graph in ${\cal G}$ by performing $\gamma$ at most $k$ times, and
\item for every $G\in {\cal G}$ there exists at least one graph 
in ${\cal G'}$ obtained from $G$ by performing $\gamma$ at most~$k$ times.
\end{enumeratei}
If we do not impose a finite upper bound $k$ on the number of applications of $\gamma$ then we write  that ${\cal G}'$ is $(\infty,\gamma)$-obtained from ${\cal G}$.

We say that $\gamma$ {\em preserves} boundedness of clique-width if
for any finite constant~$k$ and any graph class ${\cal G}$, any graph class 
${\cal G}'$ that is $(k,\gamma)$-obtained from ${\cal G}$
has bounded clique-width if and only if  ${\cal G}$  has bounded clique-width.

\begin{enumerate}[{Fact} 1.]
\item \label{fact:del-vert}Vertex deletion preserves boundedness of clique-width~\cite{LR04}.\\[-1em]

\item \label{fact:comp}Subgraph complementation preserves boundedness of clique-width~\cite{KLM09}.\\[-1em]

\item \label{fact:bip}Bipartite complementation preserves boundedness of clique-width~\cite{KLM09}.\\[-1em]

\item \label{fact:subdiv}
For a class of graphs ${\cal G}$ of {\em bounded} maximum 
degree, let ${\cal G}'$ be a class of graphs that is $(\infty,{\tt es})$-obtained from ${\cal G}$, where ${\tt es}$ is the edge subdivision operation.
Then~${\cal G}$ has bounded clique-width if and only if  ${\cal G}'$ has
bounded clique-width~\cite{KLM09}.
\end{enumerate}

\medskip
\noindent
It is easy to show that the condition on the maximum degree in Fact~\ref{fact:subdiv} is necessary
for the reverse (i.e.  the ``only if'') direction:
for a graph $G$ of arbitrarily large clique-width, take a clique~$K$ (which has clique-width at most~2) 
with vertex set $V(K)=V(G)$, apply an edge subdivision 
on an edge~$uv$ in~$K$ if and only if $uv$ is not an edge in $G$ and, in order to obtain $G$ from this graph, remove any vertex introduced by 
an edge subdivision (this does not increase the clique-width).
As 
another 
aside, note that the reverse direction of Fact~\ref{fact:subdiv} also holds if we replace ``edge subdivisions'' by ``edge contractions''.\footnote{Combine the fact that a class of graphs of bounded maximum degree has bounded clique-width if and only if it has bounded tree-width~\cite{GW00} with the well-known fact that
 edge contractions do not increase the tree-width of a graph.}
It was an open problem~\cite{Gu07} whether the condition on maximum degree was also necessary in this case. 
This was recently solved by Courcelle~\cite{Co14}, who showed that if~${\cal G}$ is the class of graphs of clique-width~3 and~${\cal G}'$ is the class of graphs obtained from graphs in~${\cal G}$ by applying one or more edge contraction operations then~${\cal G}'$ has unbounded clique-width.

We also use a number of other elementary results on the clique-width of graphs.
The first one is well known (see e.g.~\cite{CO00}) and straightforward to check. 

\begin{lemma}\label{l-atmost2}
The clique-width of a graph with maximum degree at most~$2$ is at most~$4$.
\end{lemma}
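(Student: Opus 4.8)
The plan is to use the structural characterization of graphs of maximum degree at most~$2$: every such graph is a disjoint union of paths and cycles (isolated vertices being paths on one vertex). Since the disjoint union operation $\oplus$ is one of the four clique-width operations and labels may be freely reused between its two operands, the clique-width of a disjoint union equals the maximum of the clique-widths of its components. Hence it suffices to bound by~$4$ the clique-width of a single path and of a single cycle.

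For a path $P_n$ on vertices $p_1,\ldots,p_n$ in order, I would build the graph with a ``sliding'' $3$-expression that generalizes the one displayed for $P_4$ above. The invariant is that the most recently added vertex carries a distinguished label while all earlier vertices share a single ``finished'' label. To append $p_{i+1}$, create it with a fresh label, join it to the distinguished vertex $p_i$, then rename $p_i$ to the finished label and rename the new vertex to the distinguished label. Three labels suffice throughout, so $\cw(P_n)\le 3$.

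For a cycle $C_n$ on vertices $v_1,\ldots,v_n$ the difficulty is that, in order to close the cycle with the edge $v_1v_n$, the first vertex $v_1$ must remain individually identifiable at the very end of the construction, yet during the path-building phase every already-processed vertex is merged into the single ``finished'' label. The remedy is to reserve a fourth label exclusively for $v_1$: create $v_1$ with label~$4$ and never rename it. Using labels $1,2,3$ exactly as in the path construction (with $v_1$ playing the role of the initial vertex, joined to $v_2$ through its label~$4$), build the path $v_1v_2\cdots v_n$. At the end $v_n$ still carries its distinguished label, which is shared with no other vertex, so a single join $\eta_{4,\ell}$ between label~$4$ and $v_n$'s label $\ell$ adds precisely the edge $v_1v_n$ and completes $C_n$. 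This uses only labels $1,2,3,4$, giving $\cw(C_n)\le 4$.

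Combining these two bounds with the disjoint-union fact yields clique-width at most~$4$ for every graph of maximum degree at most~$2$. The only genuinely nontrivial point, and the step I would verify most carefully, is the claim that reserving label~$4$ for $v_1$ does not interfere with the path-building phase: one must check that every $\eta$ and $\rho$ operation used to construct the path $v_1v_2\cdots v_n$ touches only labels in $\{1,2,3\}$, so that $v_1$ acquires no spurious edges and remains an isolated-label endpoint until the final closing join.
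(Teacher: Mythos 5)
Your proof is correct: the decomposition into paths and cycles, the fact that clique-width of a disjoint union is the maximum over components, the sliding $3$-label construction for paths, and the reserved fourth label used only to close each cycle all work exactly as you describe, and the final closing join $\eta_{4,\ell}$ indeed adds only the one missing edge since each of the two labels is held by a single vertex. The paper itself gives no proof of this lemma --- it states that the result is well known (citing Courcelle and Olariu) and ``straightforward to check'' --- and your argument is precisely the standard verification that this citation alludes to, so there is nothing to compare beyond noting that you have supplied the details the paper omits.
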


We also need the well-known notion of a {\em wall}. We do not formally define this notion but instead refer to 
\figurename~\ref{f-walls}, in which three examples of walls of different height are depicted.
The class of walls is well known to have unbounded clique-width; see for example~\cite{KLM09}.
(Note that walls have maximum degree at most~3, hence the degree bound in Lemma~\ref{l-atmost2} is tight.)

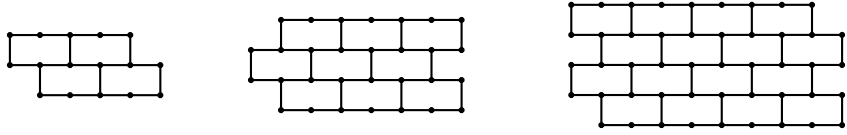
\begin{figure}
\begin{center}
\begin{minipage}{0.2\textwidth}
\centering
\begin{tikzpicture}[scale=0.4, every node/.style={scale=0.3}]
\GraphInit[vstyle=Simple]
\SetVertexSimple[MinSize=6pt]
\Vertex[x=1,y=0]{v10}
\Vertex[x=2,y=0]{v20}
\Vertex[x=3,y=0]{v30}
\Vertex[x=4,y=0]{v40}
\Vertex[x=5,y=0]{v50}

\Vertex[x=0,y=1]{v01}
\Vertex[x=1,y=1]{v11}
\Vertex[x=2,y=1]{v21}
\Vertex[x=3,y=1]{v31}
\Vertex[x=4,y=1]{v41}
\Vertex[x=5,y=1]{v51}

\Vertex[x=0,y=2]{v02}
\Vertex[x=1,y=2]{v12}
\Vertex[x=2,y=2]{v22}
\Vertex[x=3,y=2]{v32}
\Vertex[x=4,y=2]{v42}

\Edges(    v10,v20,v30,v40,v50)
\Edges(v01,v11,v21,v31,v41,v51)
\Edges(v02,v12,v22,v32,v42)

\Edge(v01)(v02)

\Edge(v10)(v11)

\Edge(v21)(v22)

\Edge(v30)(v31)

\Edge(v41)(v42)

\Edge(v50)(v51)

\end{tikzpicture}
\end{minipage}
\begin{minipage}{0.3\textwidth}
\centering
\begin{tikzpicture}[scale=0.4, every node/.style={scale=0.3}]
\GraphInit[vstyle=Simple]
\SetVertexSimple[MinSize=6pt]
\Vertex[x=1,y=0]{v10}
\Vertex[x=2,y=0]{v20}
\Vertex[x=3,y=0]{v30}
\Vertex[x=4,y=0]{v40}
\Vertex[x=5,y=0]{v50}
\Vertex[x=6,y=0]{v60}
\Vertex[x=7,y=0]{v70}

\Vertex[x=0,y=1]{v01}
\Vertex[x=1,y=1]{v11}
\Vertex[x=2,y=1]{v21}
\Vertex[x=3,y=1]{v31}
\Vertex[x=4,y=1]{v41}
\Vertex[x=5,y=1]{v51}
\Vertex[x=6,y=1]{v61}
\Vertex[x=7,y=1]{v71}

\Vertex[x=0,y=2]{v02}
\Vertex[x=1,y=2]{v12}
\Vertex[x=2,y=2]{v22}
\Vertex[x=3,y=2]{v32}
\Vertex[x=4,y=2]{v42}
\Vertex[x=5,y=2]{v52}
\Vertex[x=6,y=2]{v62}
\Vertex[x=7,y=2]{v72}

\Vertex[x=1,y=3]{v13}
\Vertex[x=2,y=3]{v23}
\Vertex[x=3,y=3]{v33}
\Vertex[x=4,y=3]{v43}
\Vertex[x=5,y=3]{v53}
\Vertex[x=6,y=3]{v63}
\Vertex[x=7,y=3]{v73}

\Edges(    v10,v20,v30,v40,v50,v60,v70)
\Edges(v01,v11,v21,v31,v41,v51,v61,v71)
\Edges(v02,v12,v22,v32,v42,v52,v62,v72)
\Edges(    v13,v23,v33,v43,v53,v63,v73)

\Edge(v01)(v02)

\Edge(v10)(v11)
\Edge(v12)(v13)

\Edge(v21)(v22)

\Edge(v30)(v31)
\Edge(v32)(v33)

\Edge(v41)(v42)

\Edge(v50)(v51)
\Edge(v52)(v53)

\Edge(v61)(v62)

\Edge(v70)(v71)
\Edge(v72)(v73)
\end{tikzpicture}
\end{minipage}
\begin{minipage}{0.35\textwidth}
\centering
\begin{tikzpicture}[scale=0.4, every node/.style={scale=0.3}]
\GraphInit[vstyle=Simple]
\SetVertexSimple[MinSize=6pt]
\Vertex[x=1,y=0]{v10}
\Vertex[x=2,y=0]{v20}
\Vertex[x=3,y=0]{v30}
\Vertex[x=4,y=0]{v40}
\Vertex[x=5,y=0]{v50}
\Vertex[x=6,y=0]{v60}
\Vertex[x=7,y=0]{v70}
\Vertex[x=8,y=0]{v80}
\Vertex[x=9,y=0]{v90}

\Vertex[x=0,y=1]{v01}
\Vertex[x=1,y=1]{v11}
\Vertex[x=2,y=1]{v21}
\Vertex[x=3,y=1]{v31}
\Vertex[x=4,y=1]{v41}
\Vertex[x=5,y=1]{v51}
\Vertex[x=6,y=1]{v61}
\Vertex[x=7,y=1]{v71}
\Vertex[x=8,y=1]{v81}
\Vertex[x=9,y=1]{v91}

\Vertex[x=0,y=2]{v02}
\Vertex[x=1,y=2]{v12}
\Vertex[x=2,y=2]{v22}
\Vertex[x=3,y=2]{v32}
\Vertex[x=4,y=2]{v42}
\Vertex[x=5,y=2]{v52}
\Vertex[x=6,y=2]{v62}
\Vertex[x=7,y=2]{v72}
\Vertex[x=8,y=2]{v82}
\Vertex[x=9,y=2]{v92}

\Vertex[x=0,y=3]{v03}
\Vertex[x=1,y=3]{v13}
\Vertex[x=2,y=3]{v23}
\Vertex[x=3,y=3]{v33}
\Vertex[x=4,y=3]{v43}
\Vertex[x=5,y=3]{v53}
\Vertex[x=6,y=3]{v63}
\Vertex[x=7,y=3]{v73}
\Vertex[x=8,y=3]{v83}
\Vertex[x=9,y=3]{v93}

\Vertex[x=0,y=4]{v04}
\Vertex[x=1,y=4]{v14}
\Vertex[x=2,y=4]{v24}
\Vertex[x=3,y=4]{v34}
\Vertex[x=4,y=4]{v44}
\Vertex[x=5,y=4]{v54}
\Vertex[x=6,y=4]{v64}
\Vertex[x=7,y=4]{v74}
\Vertex[x=8,y=4]{v84}

\Edges(    v10,v20,v30,v40,v50,v60,v70,v80,v90)
\Edges(v01,v11,v21,v31,v41,v51,v61,v71,v81,v91)
\Edges(v02,v12,v22,v32,v42,v52,v62,v72,v82,v92)
\Edges(v03,v13,v23,v33,v43,v53,v63,v73,v83,v93)
\Edges(v04,v14,v24,v34,v44,v54,v64,v74,v84)

\Edge(v01)(v02)
\Edge(v03)(v04)

\Edge(v10)(v11)
\Edge(v12)(v13)

\Edge(v21)(v22)
\Edge(v23)(v24)

\Edge(v30)(v31)
\Edge(v32)(v33)

\Edge(v41)(v42)
\Edge(v43)(v44)

\Edge(v50)(v51)
\Edge(v52)(v53)

\Edge(v61)(v62)
\Edge(v63)(v64)

\Edge(v70)(v71)
\Edge(v72)(v73)

\Edge(v81)(v82)
\Edge(v83)(v84)

\Edge(v90)(v91)
\Edge(v92)(v93)
\end{tikzpicture}
\end{minipage}
\caption{Walls of height 2, 3, and 4, respectively.}\label{f-walls}
\end{center}
\end{figure}

A \emph{$k$-subdivided wall} is a graph obtained from a wall after subdividing each edge 
exactly $k$ times for some constant $k\geq 0$.

The following lemma is well known and follows from combining Fact~\ref{fact:subdiv} with the aforementioned fact that walls 
have maximum degree at most~3 and unbounded clique-width.
 
\begin{lemma}[\cite{LR06}]\label{l-walls}
For any constant $k\geq 0$, the class of $k$-subdivided walls has unbounded clique-width.
\end{lemma}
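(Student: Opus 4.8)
The plan is to derive Lemma~\ref{l-walls} by combining the two ingredients that the excerpt has already put in place, namely Fact~\ref{fact:subdiv} and the established fact that the class of walls has unbounded clique-width and bounded maximum degree (at most~$3$). Fix an arbitrary constant $k\geq 0$. First I would let $\mathcal{G}$ denote the class of all walls, which has bounded maximum degree (every vertex of a wall has degree at most~$3$), and let $\mathcal{G}'$ denote the class of $k$-subdivided walls. The key observation is that a $k$-subdivided wall is, by definition, exactly the graph obtained from a wall by subdividing each of its edges precisely $k$ times; since each original edge is subdivided the same number $k$ of times, every graph in $\mathcal{G}'$ arises from a graph in $\mathcal{G}$ by a finite (though unbounded over the whole class) sequence of edge subdivisions, and conversely every wall yields such a $k$-subdivided wall.

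Next I would verify that $\mathcal{G}'$ is $(\infty,{\tt es})$-obtained from $\mathcal{G}$ in the precise sense defined in the excerpt: condition~(i) holds because each $k$-subdivided wall is obtained from a wall by applying the edge subdivision operation ${\tt es}$ some finite number of times (namely $k$ times on each edge), and condition~(ii) holds because for every wall $G\in\mathcal{G}$ the corresponding $k$-subdivided wall lies in $\mathcal{G}'$ and is obtained from $G$ in this way. Note that the total number of subdivisions is not bounded by a single constant across the whole class (it grows with the number of edges of the wall), which is exactly why we use the $(\infty,{\tt es})$ formulation rather than a $(k',{\tt es})$ one for fixed $k'$.

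Having set this up, I would apply Fact~\ref{fact:subdiv} directly. Since $\mathcal{G}$ has bounded maximum degree and $\mathcal{G}'$ is $(\infty,{\tt es})$-obtained from $\mathcal{G}$, the fact tells us that $\mathcal{G}$ has bounded clique-width if and only if $\mathcal{G}'$ has bounded clique-width. As the class of walls $\mathcal{G}$ has \emph{unbounded} clique-width, it follows that $\mathcal{G}'$, the class of $k$-subdivided walls, also has unbounded clique-width, which is precisely the statement of the lemma.

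The argument is essentially a bookkeeping exercise, so I do not expect a serious obstacle; the only point that warrants care is checking that the class-level quantifiers in the definition of $(\infty,{\tt es})$-obtained are satisfied, in particular that we are free to subdivide different edges and to use an unbounded total number of subdivisions as the walls grow. The degree bound on $\mathcal{G}$ is the essential hypothesis that licenses the use of Fact~\ref{fact:subdiv} (and, as the excerpt already remarks, cannot be dropped), so I would make sure to invoke it explicitly rather than treat it as automatic.
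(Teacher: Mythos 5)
Your proof is correct and matches the paper's own argument: the paper derives Lemma~\ref{l-walls} exactly by combining Fact~\ref{fact:subdiv} with the fact that walls have maximum degree at most~$3$ and unbounded clique-width. Your explicit verification of the $(\infty,{\tt es})$-obtained conditions is a careful spelling-out of what the paper leaves implicit.
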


For $r\geq 1$, the graphs $C_r$, $K_r$, $P_r$ denote the cycle, complete graph and path on $r$ vertices, respectively, and the graph $K_{1,r}$ denotes 
the star on $r+1$ vertices. The graph~$K_{1,3}$ is also called the {\em claw}.
For $1\leq h\leq i\leq j$, let $S_{i,j,k}$ denote the tree that has only one vertex~$x$ of degree $3$ and that has exactly three leaves, which are of distance $i$, $j$ and $k$ from~$x$, respectively.
Observe  that $S_{1,1,1}=K_{1,3}$. A graph $S_{i,j,k}$ is said to be a {\it subdivided claw}. We let ${\cal S}$ be the class of graphs each connected component of which is either a subdivided claw or a path. 

Like Lemma~\ref{l-atmost2}, the following lemma is also well known and follows from Lemma~\ref{l-walls}, by choosing appropriate values for $k$.
 
\begin{lemma}[\cite{LR06}]\label{l-classS}
Let $\{H_1,\ldots,H_p\}$ be a finite set of graphs. 
If $H_i\notin {\cal S}$ for $i=1,\ldots,p$ then the class of $(H_1,\ldots,H_p)$-free 
graphs has unbounded~clique-width.
\end{lemma}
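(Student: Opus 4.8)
The plan is to derive the result from Lemma~\ref{l-walls} by exhibiting, for a suitably chosen constant $k$, a subclass of $(H_1,\ldots,H_p)$-free graphs of unbounded clique-width. Concretely, I would show that there is a constant $k$ such that no $H_i$ is an induced subgraph of any $k$-subdivided wall; the class of $k$-subdivided walls is then contained in the class of $(H_1,\ldots,H_p)$-free graphs, and since the former has unbounded clique-width by Lemma~\ref{l-walls}, so does the latter. Since an induced subgraph embedding of $H_i$ restricts to an induced subgraph embedding of each of its connected components, it suffices to show that for every $i$ there is a constant $k_i$ and a component $C_i$ of $H_i$ that is not an induced subgraph of any $k$-subdivided wall once $k\geq k_i$; one then takes $k=\max_i k_i$. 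Because $H_i\notin {\cal S}$, each $H_i$ has at least one component $C_i$ that is neither a path nor a subdivided claw, and this is the component I would track.

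The argument rests on three structural properties of a $k$-subdivided wall $W$, all immediate from the fact that $W$ arises from a wall by subdividing every edge exactly $k$ times: (a) $W$ has maximum degree at most $3$, so its degree-$3$ vertices are exactly the images of the branch vertices of the underlying wall; (b) any two distinct branch vertices of $W$ lie at distance at least $k+1$, since each wall edge becomes a path of length $k+1$ and original vertices are joined only through such paths; and (c) every induced cycle of $W$ has length at least $6(k+1)$, because the wall has girth $6$. I would record these as preliminary observations before the case analysis.

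With $C:=C_i$ fixed, the core is a case distinction. If $C$ has a vertex of degree at least $4$, then by (a) it does not embed into any $W$, for every $k$. Otherwise $C$ has maximum degree at most $3$; if $C$ contains a cycle, its shortest induced cycle has some fixed length $g$, which would have to appear as an induced cycle of $W$ whenever $C\ssi W$, and (c) forbids this once $6(k+1)>g$. In the remaining case $C$ is a tree of maximum degree at most $3$ which, being neither a path nor a subdivided claw, has at least two vertices of degree $3$; these must map to distinct branch vertices of $W$, hence by (b) to vertices at distance at least $k+1$, whereas in $C$ they are joined by a path of length at most $\operatorname{diam}(C)$ whose image bounds that distance from above, a contradiction once $k+1>\operatorname{diam}(C)$. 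Taking $k_i$ larger than the relevant constants for $C_i$ then settles $H_i$, and these three cases are exhaustive.

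The step I expect to require the most care is the last case. It is tempting to hope that subdivided walls avoid every graph outside ${\cal S}$ for a single fixed subdivision, but this is \emph{false}: for any fixed $k$, a $k$-subdivided wall already contains induced trees with two branch vertices (take two branch vertices of the wall together with short legs). The essential point, and the reason the lemma is phrased in terms of choosing $k$, is that enlarging $k$ drives the branch vertices of $W$ apart faster than the fixed diameters of the forbidden graphs, so every sufficiently subdivided wall eventually escapes all the $H_i$ simultaneously. Verifying (b) and (c) rigorously, in particular that distances and induced-cycle lengths in $W$ scale by the factor $k+1$, is the only genuinely technical part and is routine.
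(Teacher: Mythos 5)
Your proposal is correct and takes essentially the same route as the paper: the paper's justification for this lemma is precisely that it ``follows from Lemma~\ref{l-walls}, by choosing appropriate values for $k$'', and your case analysis (degree at least $4$, short induced cycle, or a tree with two branch vertices versus the max-degree, girth-scaling and branch-vertex-distance properties of $k$-subdivided walls) fills in exactly those details. The paper's own fleshed-out version of this argument (in the proof of Theorem~\ref{t-finite}\ref{t-finite-subgraph}) merely repackages your three cases through the gadgets $C_j$ and $I_j$, so the two proofs are the same in substance.
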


We say that~$G$ is  \emph{bipartite} if its vertex set can be partitioned into two (possibly empty) independent sets $B$ and $W$. We say that $(B,W)$ is a {\em bipartition} of $G$.
Let $H$ be a bipartite graph with a fixed partition $(B_H,W_H)$.
A bipartite graph $G$ is {\em strongly} $H$-free if $G$ is $H$-free or else $G$ has no bipartition $(B_G,W_G)$  with  $B_H\subseteq B_G$ and $W_H\subseteq W_G$
such that $bw \in E(G)$ if and only if $bw \in E(H)$ for all $b \in B_H$ and $w \in W_H$.
Lozin and Volz~\cite{LV08} characterized all bipartite graphs $H$ for which the class of strongly $H$-free bipartite graphs 
has bounded clique-width.
Recently, we proved a similar characterization for $H$-free bipartite graphs; we will use this result in Section~\ref{sec:twographs}.

\begin{lemma}[\cite{DP14}]\label{l-bipartite}
Let $H$ be a graph. The class of $H$-free bipartite graphs has bounded
clique-width if and only if one of the following cases holds:
\begin{itemize}
\item $H=sP_1$ for some $s\geq 1$
\item $H\ssi K_{1,3}+3P_1$
\item $H\ssi K_{1,3}+P_2$ 
\item $H\ssi P_1+S_{1,1,3}$
\item $H\ssi S_{1,2,3}$.
\end{itemize}
\end{lemma}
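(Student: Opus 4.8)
The plan is to prove the two implications separately and, in each direction, to reduce to a handful of extremal graphs using monotonicity under the induced-subgraph order. For boundedness it suffices to treat the $\ssi$-maximal graphs in the list, namely $K_{1,3}+3P_1$, $K_{1,3}+P_2$, $P_1+S_{1,1,3}$ and $S_{1,2,3}$, together with the separate family $sP_1$: if $H\ssi H^*$ then every $H$-free graph is $H^*$-free, so the class of $H$-free bipartite graphs is contained in that of $H^*$-free bipartite graphs and inherits boundedness (the family $sP_1$ is genuinely not subsumed, since e.g. $7P_1$ is not an induced subgraph of any of the four maximal graphs). Dually, for unboundedness it suffices to treat the $\ssi$-minimal members of ${\cal S}$ that are not covered by any of the five cases, since any larger forbidden graph defines a superclass.

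For unboundedness I would first dispose of the case $H\notin{\cal S}$. Here Lemma~\ref{l-classS} already gives that the class of all $H$-free graphs has unbounded clique-width, and its witnessing family may be taken to be $k$-subdivided walls for a suitably large $k$; these are bipartite (subdividing multiplies every even cycle length of a wall by $k+1$, preserving bipartiteness), so the class of $H$-free \emph{bipartite} graphs is already unbounded. It then remains to handle the finitely many $\ssi$-minimal members of ${\cal S}$ that are not induced subgraphs of any of the four maximal graphs above; a short computation in the induced-subgraph poset produces an explicit list, including $P_7$, $2P_3$, $S_{2,2,2}$, $S_{1,1,4}$, $P_2+4P_1$ and $P_4+2P_1$. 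For each such $H$ I would exhibit a sequence of bipartite graphs of unbounded clique-width that is $H$-free, building the examples from subdivided walls and grid-type bipartite incidence graphs and, where convenient, applying bipartite complementations (which by Fact~\ref{fact:bip} neither create nor destroy boundedness) to suppress exactly the long induced paths or second branch vertices that the relevant $H$ would otherwise force.

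For boundedness, the family $sP_1$ is immediate: a bipartite graph with no independent set of size $s$ has both sides of size at most $s-1$, hence fewer than $2s$ vertices, so this class is finite. For the four maximal graphs I would exploit the link, noted just before the statement, with the Lozin--Volz classification of \emph{strongly} $H$-free bipartite graphs: since every $H$-free bipartite graph is strongly $H$-free, any maximal graph already lying in the Lozin--Volz bounded list is handled at once. The remaining, genuinely new, cases---chiefly $S_{1,2,3}$---demand a direct structural argument: I would study the global structure of a connected $S_{1,2,3}$-free bipartite graph, partition its vertex set into a bounded number of parts each inducing a subgraph of bounded clique-width, and then reassemble these pieces, using that vertex deletion, subgraph complementation and bipartite complementation all preserve boundedness (Facts~\ref{fact:del-vert}--\ref{fact:bip}).

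I expect the difficulty to be concentrated in two places. On the boundedness side, the case $S_{1,2,3}$ is delicate precisely because $S_{1,2,3}$ is large enough to admit rich bipartite structure, so the decomposition must be engineered to mesh with the clique-width operations; this is also the case where the $H$-free and strongly-$H$-free classifications genuinely diverge, so no shortcut through Lozin--Volz is available. On the unboundedness side, the real work is to design, for each minimal uncovered $H$, a bipartite family that is simultaneously of unbounded clique-width \emph{and} $H$-free---two requirements that pull against each other, since unbounded clique-width forces structural complexity while $H$-freeness forbids exactly the substructures (long induced paths, or pairs of far-apart branch vertices) that such complexity tends to create---and verifying $H$-freeness then requires a careful analysis of which induced subgraphs each construction can contain.
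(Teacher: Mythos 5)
Your high-level skeleton --- reduce boundedness to the $\ssi$-maximal graphs, reduce unboundedness to the $\ssi$-minimal uncovered graphs, and dispose of $H\notin{\cal S}$ via bipartite $k$-subdivided walls --- is sound, and it is essentially how the cited paper~\cite{DP14} organises its argument (note that the present paper does not prove this lemma at all; it imports it from~\cite{DP14}). The genuine gap is in your list of minimal uncovered graphs, and it is fatal rather than cosmetic. By Lemma~\ref{l-addit}, a graph $H\in{\cal S}$ lies outside the five bounded cases if and only if it contains one of $2P_1+2P_2$, $2P_1+P_4$, $4P_1+P_2$, $3P_2$, $2P_3$; these five graphs are exactly the minimal obstructions. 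Your list omits $3P_2$ and $2P_1+2P_2$, and neither of these contains any of the six graphs you do list: $3P_2$ has maximum degree~$1$ (excluding $P_7$, $2P_3$, $S_{2,2,2}$, $S_{1,1,4}$ and $P_4+2P_1$) and independence number~$3$ (excluding $P_2+4P_1$), and the same checks dispose of $2P_1+2P_2$. So under your scheme the unboundedness of the classes of $3P_2$-free and $(2P_1+2P_2)$-free bipartite graphs is never established, and the ``only if'' direction fails at two of the five essential cases. Moreover, three of your six graphs are not minimal at all: $P_7$ contains $2P_3$, $S_{2,2,2}$ contains $3P_2$ (its three legs), and $S_{1,1,4}$ contains $2P_1+P_4$ (the two short legs together with the four outer vertices of the long leg). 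This shows the poset computation itself went wrong, rather than merely being truncated by the word ``including''.

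The second gap is that the proposal defers precisely the steps that constitute the proof, and the deferred plan would not work as described. Every sufficiently large $k$-subdivided wall contains $2P_3$, $3P_2$, $2P_1+2P_2$, $2P_1+P_4$ and $4P_1+P_2$ as induced subgraphs (all five are scattered unions of short paths, and large walls contain arbitrarily many pairwise non-adjacent short paths), so no wall-based or grid-incidence-based family can be $H$-free for any of the minimal obstructions; and ``applying bipartite complementations to suppress'' forbidden subgraphs is not an argument, since after such an operation both $H$-freeness and unboundedness must be re-verified for what is now an entirely different graph family --- this verification is exactly the content of dedicated constructions such as Theorem~\ref{thm:generalunbounded} and its applications, or the $2P_3$-free construction of Lozin and Volz~\cite{LV08}. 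Symmetrically, on the bounded side, your treatment of $S_{1,2,3}$ (which you correctly identify as the case where no shortcut through the strongly-$H$-free classification is available) is a statement of intent to ``partition and reassemble'', not a decomposition; that case is the bulk of the work in~\cite{DP14}. In short: the reduction framework is right, but the case list it feeds is wrong, and the five constructions plus the structural analysis that the framework requires are absent.
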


From the same paper we will also need the following lemma.

\begin{lemma}[\cite{DP14}]\label{l-addit}
 Let $H\in {\cal S}$. Then $H$ is 
 $(2P_1+2P_2,\allowbreak 2P_1+P_4,\allowbreak 4P_1+P_2,\allowbreak3P_2,\allowbreak 2P_3)$-free
if and only if $H=sP_1$ for some integer $s\geq 1$ or $H$ is an induced subgraph of one of the graphs in 
$\{K_{1,3}+3P_1,K_{1,3}+P_2, P_1+S_{1,1,3}, S_{1,2,3}\}$.
\end{lemma}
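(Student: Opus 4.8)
The plan is to prove both directions by exploiting the component structure of graphs in ${\cal S}$, reserving the bulk of the work for the ``only if'' direction. For the ``if'' direction I would note that each of the five forbidden graphs contains at least one edge, so $sP_1$ is trivially five-free; it then remains to check that each of the four named graphs $K_{1,3}+3P_1$, $K_{1,3}+P_2$, $P_1+S_{1,1,3}$ and $S_{1,2,3}$ is $(2P_1+2P_2,2P_1+P_4,4P_1+P_2,3P_2,2P_3)$-free. Since freeness from a fixed graph is inherited by induced subgraphs, verifying these four maximal graphs suffices. Each such check is a finite inspection: for the two star-based graphs one uses that a star has induced matching number $1$ and longest induced path $P_3$, while for the two subdivided-claw graphs one uses that the centre dominates the first vertex of every leg, which severely limits how many pairwise non-adjacent vertices and edges can be chosen simultaneously.

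For the ``only if'' direction, let $H\in{\cal S}$ be five-free and decompose it into connected components, each of which is a path or a subdivided claw. If $H$ has no edge then $H=sP_1$ and we are done, so assume $H$ has an edge. Call a component \emph{large} if it contains an induced $P_3$; these are exactly the subdivided claws and the paths $P_r$ with $r\ge 3$, whereas the only edge-containing non-large component is $P_2$. First I would extract the global constraints: being $3P_2$-free forces at most two non-trivial components (three edges from distinct components would induce $3P_2$); being $2P_3$-free forces at most one large component (two large components would induce $2P_3$); and being $4P_1+P_2$-free forces at most three isolated vertices (an edge together with four isolated vertices would induce $4P_1+P_2$). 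Hence $H=D'+tP_1$ with $t\le 3$, where $D'$ consists of at most two non-trivial components, at most one of which is large.

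Next I would pin down the large component $D$, if present. Using that a subdivided claw with all legs of length at least two contains $3P_2$ (take the three leaf-edges) I would show its shortest leg has length $1$; using that $S_{1,j,k}$ with $k\ge 4$ contains $2P_3$ (take the last three vertices of the long leg together with a $P_3$ through the centre) I would show $k\le 3$; and a further $3P_2$ check rules out $S_{1,3,3}$. This leaves only claws that are induced subgraphs of $S_{1,2,3}$, and since any large path has at most six vertices (by $2P_3$-freeness) and $P_6\ssi S_{1,2,3}$, in all cases $D\ssi S_{1,2,3}$. In particular, if $D'=D$ and $t=0$ then $H\ssi S_{1,2,3}$ immediately.

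The hard part will be the final balancing step: combining the bounded pieces ($D$, a possible extra $P_2$, and up to three isolated vertices) and showing that every surviving configuration embeds into one of the four named graphs. The difficulty is that a large component already contains independent vertices and independent edges internally (for example $S_{1,2,3}$ contains an induced $2P_2$ and a vertex non-adjacent to it), so the forbidden graphs $2P_1+2P_2$, $2P_1+P_4$ and $4P_1+P_2$ couple the size of $D$ to the number of admissible extra pieces. Working through the few cases for $D'$ (namely $P_2$, $2P_2$, a single large $D$, or a large $D$ plus a $P_2$) and applying these three forbidden graphs to bound $t$ and the extra $P_2$ in each case, I expect each surviving $H$ to be forced into $K_{1,3}+3P_1$ (small claw, three isolated vertices), $P_1+S_{1,1,3}$ (medium claw, one isolated vertex), $K_{1,3}+P_2$, or $S_{1,2,3}$, which are exactly the maximal five-free graphs in ${\cal S}$. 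This casework, rather than any single clever argument, is the main obstacle.
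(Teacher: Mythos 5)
First, a point of reference: the paper does not prove this lemma at all --- it is imported from~\cite{DP14} as a black box --- so there is no in-paper proof to compare you against, and your argument has to stand on its own. On its own terms, your skeleton is correct. The ``if'' direction is indeed a finite check of the four maximal graphs, and every structural reduction you make in the ``only if'' direction is sound: $3P_2$-freeness limits $H$ to at most two components containing an edge, $2P_3$-freeness to at most one component containing an induced $P_3$, and $4P_1+P_2$-freeness to at most three isolated vertices once an edge is present; the shortest leg of a subdivided-claw component has length~$1$ (otherwise the three leaf-edges induce $3P_2$), the longest leg has length at most~$3$, and $S_{1,3,3}$ is excluded by $3P_2$, so the claw is an induced subgraph of $S_{1,2,3}$; large paths have at most six vertices and $P_6\ssi S_{1,2,3}$. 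One precision issue: for the ``longest leg at most $3$'' step, the $P_3$ through the centre must be routed through the \emph{two short legs} (centre plus one vertex from each); if it used the first vertex of the long leg it would fail for leg-length exactly~$4$, since that vertex is adjacent to the last three vertices of the leg.

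The gap is that your decisive final step is asserted (``I expect each surviving $H$ to be forced into\ldots'') rather than carried out, and that is where essentially all the content of the lemma lives. The casework does close --- so your plan is viable --- but any execution must catch two genuinely delicate subcases that your outline does not flag: (i) $K_{1,3}+P_2+P_1$ contains $4P_1+P_2$ (the three leaves of the star together with the isolated vertex are four independent vertices avoiding the $P_2$ component), which is why $K_{1,3}+P_2$ tolerates no additional isolated vertex even though $K_{1,3}+3P_1$ tolerates three; and (ii) $S_{1,1,2}+P_2$ contains $2P_1+2P_2$ (the outer edge of the length-two leg, the $P_2$ component, and the two leaves), which is why the only large components that can coexist with a $P_2$ are $P_3$, $P_4$ and $K_{1,3}$, and with a $P_2$ \emph{and} a $P_1$ only $P_3$. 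Until the dozen-odd cases for $D'\in\{P_2,\,2P_2,\,D,\,D+P_2\}$ with $t\le 3$ are actually worked through --- including these two traps --- what you have is a correct and complete plan, but a plan rather than a proof.
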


We say that a graph $G$ is \emph{complete multipartite} if $V(G)$ can be partitioned into $k$ independent sets $V_1,\ldots,V_k$ for some integer $k$, such that two vertices are adjacent if and only if they belong to two different sets $V_i$ and $V_j$.
The next result is due to Olariu~\cite{Olariu88} (the graph $\overline{P_1+P_3}$ is also called the {\em paw}).

\begin{lemma}[\cite{Olariu88}]\label{l-olariu}
Every connected $(\overline{P_1+P_3})$-free graph is either complete multipartite or $K_3$-free.
\end{lemma}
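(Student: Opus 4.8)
The plan is to first unpack the forbidden graph: $\overline{P_1+P_3}$ is the \emph{paw}, a triangle with one extra vertex joined to exactly one of its three vertices. The crucial local consequence of paw-freeness is that no vertex can be adjacent to \emph{exactly one} vertex of a triangle: any vertex outside a triangle $T$ has $0$, $2$, or $3$ neighbours in $T$, since seeing exactly one would create an induced paw. I would also recall the standard fact that a graph is complete multipartite if and only if it is $(P_1+P_2)$-free (equivalently, non-adjacency is an equivalence relation); this explains why the stated dichotomy is the natural one, since the obstruction I must rule out is precisely a triangle coexisting with an induced $P_1+P_2$.

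For the main argument, let $G$ be connected and paw-free. If $G$ is $K_3$-free we are in the first case and done, so assume $G$ contains a triangle. I would then take a \emph{maximal} (with respect to vertex set) induced subgraph $M$ of $G$ that is complete multipartite and contains a triangle; such an $M$ exists because the triangle itself qualifies, and its three vertices lie in three distinct parts, so $M$ has parts $P_1,\dots,P_k$ with $k\geq 3$. The goal is to prove $V(M)=V(G)$. Suppose not; since $G$ is connected there is a vertex $u\notin V(M)$ with at least one neighbour in $M$, and I will show that $M+u$ is again complete multipartite, contradicting maximality.

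The heart of the proof is two claims about how $u$ attaches to $M$. First, the non-neighbours of $u$ inside $M$ all lie in a single part: if $u$ missed two vertices $s,t$ from different parts, then, taking any neighbour $r$ of $u$ in $M$ and a vertex $q$ in a third part (available since $k\geq 3$), a short case analysis on the part containing $r$ produces a triangle spanning three distinct parts on which $u$ has exactly one neighbour, i.e.\ a paw. Second, within that exceptional part $P_i$, the vertex $u$ is adjacent either to all of $P_i$ or to none of it: if $u\sim p$ and $u\not\sim p'$ with $p,p'\in P_i$, then for any $q$ in another part---which $u$ sees by the first claim---the set $\{u,p,q\}$ is a triangle in which $p'$ is adjacent to exactly one vertex (only $q$), again a paw. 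Combining the two claims, $u$ is adjacent to all of $V(M)\setminus P_i$ and either to all of $P_i$ (so $u$ forms a new singleton part) or to none of $P_i$ (so $u$ joins $P_i$); in both cases $M+u$ is complete multipartite, the desired contradiction. Hence $M=G$ and $G$ is complete multipartite.

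I expect the main obstacle to be the first claim: one must carefully handle where the guaranteed neighbour $r$ of $u$ sits relative to the parts of $s$ and $t$, and in each subcase conjure a triangle on three distinct parts on which $u$ sees exactly one vertex. This is exactly where having at least three parts (forced by the presence of a triangle) is essential, and it is the step most prone to a missed subcase; by contrast, the expansion framework and the second claim are routine once the local ``$0$, $2$, or $3$'' rule is in hand.
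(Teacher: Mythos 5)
Your proof is correct, but there is nothing in the paper to compare it against: Lemma~\ref{l-olariu} is imported from Olariu~\cite{Olariu88} with only a citation, and no proof appears in the text. So your argument stands as an independent, self-contained derivation, and it works. The ``$0$, $2$, or $3$ neighbours in any triangle'' consequence of paw-freeness is exactly right, and growing a maximal induced complete multipartite subgraph $M$ that contains a triangle (hence has at least three parts) by an attached outside vertex $u$ does produce the required contradiction. The step you flagged as risky, your first claim, does close. Suppose $u$ misses $s$ and $t$ lying in different parts and has a neighbour $r$ in $M$. If $r$ lies outside the parts of both $s$ and $t$, then $\{r,s,t\}$ is a triangle in which $u$ sees only $r$. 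Otherwise, say $r$ lies in the part of $s$; pick $q$ in a third part. If $u\not\sim q$, then $\{r,q,t\}$ is a triangle in which $u$ sees only $r$; if $u\sim q$, then $\{q,s,t\}$ is a triangle in which $u$ sees only $q$. In every case $u$ has exactly one neighbour in a triangle, i.e.\ an induced paw. Combined with your second claim, $u$ is adjacent to every vertex of $M$ outside one part and to all or none of that part, so $G[V(M)\cup\{u\}]$ is again complete multipartite and still contains a triangle, contradicting the maximality of $M$. The only cosmetic issue is that your notation $P_1,\ldots,P_k$ for the parts collides with the paper's use of $P_k$ for paths; rename the parts if this text is to sit alongside the paper.
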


Every complete multipartite graph has clique-width at most~2. Also, the definition of clique-width directly implies that the clique-width
of any graph is equal to the maximum clique-width of its connected components. Hence, Lemma~\ref{l-olariu} immediately implies the following 
(well-known) result.

\begin{lemma}\label{l-paw}
For any graph $H$, the class of $(\overline{P_1+P_3},H)$-free graphs has bounded clique-width
if and only if the class of $(K_3,H)$-free graphs has bounded clique-width.
\end{lemma}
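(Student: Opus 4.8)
The plan is to establish the two implications separately, and the single observation that drives the whole argument is that $\overline{P_1+P_3}$, the paw, contains a triangle: the three vertices of its triangle induce a $K_3$, so $K_3\ssi\overline{P_1+P_3}$. Consequently every $K_3$-free graph is automatically $(\overline{P_1+P_3})$-free, since an induced paw would contain an induced $K_3$. Hence the class of $(K_3,H)$-free graphs is contained in the class of $(\overline{P_1+P_3},H)$-free graphs. This containment immediately yields one direction: if the larger class $(\overline{P_1+P_3},H)$-free has bounded clique-width, then so does its subclass $(K_3,H)$-free. No further work is needed here beyond verifying the induced-subgraph containment $K_3\ssi\overline{P_1+P_3}$.

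For the reverse implication I would suppose that the class of $(K_3,H)$-free graphs has clique-width bounded by some constant $c$, and show that every $(\overline{P_1+P_3},H)$-free graph $G$ has clique-width at most $\max(2,c)$. Since the clique-width of a graph equals the maximum clique-width over its connected components, it suffices to bound each component. Each component is itself $(\overline{P_1+P_3},H)$-free (both properties are hereditary, i.e.\ preserved under taking induced subgraphs) and connected, so Lemma~\ref{l-olariu} applies: each component is either complete multipartite or $K_3$-free. A complete multipartite component has clique-width at most $2$, while a $K_3$-free component is $(K_3,H)$-free (it inherits $H$-freeness) and hence has clique-width at most $c$ by assumption.

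Combining these bounds over all components gives clique-width at most $\max(2,c)$ for $G$, completing the direction. I do not anticipate a serious obstacle: the argument is essentially a case analysis driven by Olariu's structural dichotomy. The only points requiring care are (a) confirming that $\overline{P_1+P_3}$ genuinely contains $K_3$ as an induced subgraph, which underpins the subclass containment, and (b) being explicit that forbidding induced subgraphs is a hereditary condition, so that decomposing into connected components preserves both $(\overline{P_1+P_3})$-freeness and $H$-freeness and licenses the component-wise formula for clique-width.
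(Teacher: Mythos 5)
Your proposal is correct and follows exactly the paper's (implicit) argument: the paper derives Lemma~\ref{l-paw} directly from Lemma~\ref{l-olariu} together with the facts that complete multipartite graphs have clique-width at most~2 and that clique-width equals the maximum clique-width over connected components, while the forward direction is the same subclass containment $K_3\ssi\overline{P_1+P_3}$ you use. Nothing in your write-up deviates from or adds a gap to that route.
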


\begin{sloppypar}
Kratsch and Schweitzer~\cite{KS12} proved that the {\sc Graph Isomorphism}
problem is graph-isomorphism complete for the class of
$(K_4,P_1+P_4)$-free graphs. It is a straightforward exercise to simplify their construction 
and use analogous arguments to prove that the class of $(K_4,P_1+P_4)$-free graphs 
has unbounded clique-width. 
Recall that Schweitzer~\cite{Sc15} proved
that any graph class that allows a so-called simple path encoding has unbounded clique-width, implying this result as a direct consequence.
\end{sloppypar}

\begin{lemma}[\cite{Sc15}]\label{l-k4-p1+p-4}
The class of $(K_4,P_1+P_4)$-free graphs has unbounded clique-width.
\end{lemma}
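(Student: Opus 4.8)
The plan is to follow the standard route for proving unboundedness: exhibit a subclass of $(K_4,P_1+P_4)$-free graphs that is $(k,\gamma)$-obtained, for some fixed constant $k$ and boundedness-preserving operations $\gamma$, from a class already known to have unbounded clique-width, and then invoke Facts~\ref{fact:del-vert}--\ref{fact:bip} together with Lemma~\ref{l-walls}. The abstract reason such a subclass must exist is Schweitzer's observation~\cite{Sc15} that a simple path encoding forces unbounded clique-width; concretely, I would build the required graphs directly from subdivided walls and verify the encoding by hand, so as to keep the argument self-contained.

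For the explicit family, I would start from a $k$-subdivided wall $W$ (for a suitable fixed $k$), which by Lemma~\ref{l-walls} ranges over a class of unbounded clique-width, is triangle-free, and has maximum degree at most~$3$. Partition $V(W)$ into at most three independent sets $V_1,V_2,V_3$ and define $G$ from $W$ by performing, between each relevant pair $(V_i,V_j)$, a bipartite complementation, so that $G$ is ``complete multipartite minus the sparse graph $W$''. Since the resulting family is $(3,\text{bipartite complementation})$-obtained from the class of $k$-subdivided walls, Fact~\ref{fact:bip} (applied a bounded number of times) shows that it again has unbounded clique-width. Moreover, $K_4$-freeness then comes for free: any clique of $G$ meets each of the at most three independent parts in at most one vertex, so $\omega(G)\le 3$.

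The main obstacle is to certify that $G$ is $(P_1+P_4)$-free, and it is here that the construction must be engineered rather than taken off the shelf. I would reduce $(P_1+P_4)$-freeness to showing that $G[\overline{N(v)}]$ is $P_4$-free for every vertex $v$, i.e.\ that every non-neighbourhood induces a $P_4$-free graph. Now $\overline{N(v)}$ consists of the whole part containing $v$ (an independent set of $G$) together with the few vertices in other parts that were non-adjacent to $v$ in $G$, namely the $W$-neighbours of $v$; since $W$ has bounded degree, there are only a constant number of the latter. The crux is to choose the subdivision parameter $k$ and the three-part colouring of $W$ so that these constantly many ``private'' non-neighbours can never combine with the large independent own-part to induce a~$P_4$, while at the same time enough of the sparse wall structure survives to keep the clique-width unbounded. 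Balancing these two competing requirements is exactly the delicate point, and it is precisely the simplification of the Kratsch--Schweitzer gadget~\cite{KS12} alluded to after the statement; once this forbidden-subgraph check goes through, unboundedness of the clique-width is immediate from the previous paragraph.
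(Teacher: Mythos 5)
Your skeleton is sound as far as it goes: three independent parts give $K_4$-freeness for free, and a bounded number of bipartite complementations applied to $k$-subdivided walls preserves unboundedness of clique-width by Fact~\ref{fact:bip} and Lemma~\ref{l-walls}. But the proposal never produces the object it needs: you fix no subdivision parameter, exhibit no colouring, and never verify $(P_1+P_4)$-freeness, deferring it as ``the delicate point''. That deferred point is the entire mathematical content of the lemma, and it is a genuine gap rather than routine checking, because the natural instantiation of your construction provably fails. A $k$-subdivided wall $W$ is connected and bipartite, so its partition into two independent sets is \emph{unique}; let $G$ be the corresponding bipartite complement. Pick any vertex $v$ with wall-neighbours $b_1,b_2$, a wall-neighbour $a_1\neq v$ of $b_2$, and a vertex $a_2\notin\{v,a_1\}$ on $v$'s side of the bipartition that is wall-nonadjacent to $b_1$ and $b_2$ (walls are large, so $a_2$ exists). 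Since subdivided walls have girth at least~$6$, $a_1b_1\notin E(W)$, so in $G$ the pairs $a_1b_1$, $b_1a_2$, $a_2b_2$ are edges while $a_1a_2$, $b_1b_2$, $a_1b_2$ are not; hence $G[a_1,b_1,a_2,b_2]$ is a $P_4$, and $v$ is $G$-nonadjacent to all four of these vertices, so $G[v,a_1,b_1,a_2,b_2]$ is isomorphic to $P_1+P_4$ --- for \emph{every} choice of $k$. The same five-vertex pattern kills any three-part colouring in which two wall-neighbours of some vertex both have further wall-neighbours inside that vertex's part, so the ``engineering'' you postpone is not a tuning step; it is the proof. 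Note also that the paper itself takes a different route entirely: it gives no construction, but quotes Schweitzer's theorem that any class admitting a simple path encoding has unbounded clique-width~\cite{Sc15}, obtained by simplifying the Kratsch--Schweitzer gadget~\cite{KS12}.

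For what it is worth, your approach \emph{can} be completed, and then it yields a nice self-contained alternative to the citation; what is missing is precisely the following choice. Take $k=2$, so each subdivided edge is a path $x,p,q,y$ between branch vertices $x,y$; orient each original wall edge arbitrarily, put every branch vertex into $V_0$, every first internal vertex $p$ into $V_1$ and every second internal vertex $q$ into $V_2$. These are independent sets, and in the three-partite complement $G$ every set $V(G)\setminus N[v]$ induces a $P_4$-free graph. Indeed, for a branch vertex $v$, every $b\in N_W(v)$ has its only other wall-neighbour outside $V_0$, so $G[V(G)\setminus N[v]]$ consists of the independent set $V_0\setminus\{v\}$ completely joined to the at most three vertices of $N_W(v)$, and no $P_4$ can meet both sides of a complete join (a $P_4$ has no vertex adjacent to all other three, and only three edges). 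For $v=p$ on the edge $x,p,q,y$, the vertex $q$ has wall-neighbours $v$ and $y$ only, hence is adjacent in $G$ to every other vertex of $V(G)\setminus N[v]$; no $P_4$ passes through such a universal vertex, and deleting it leaves $x$ joined to all but at most two vertices of the independent set $V_1\setminus\{v\}$, i.e.\ a star plus isolated vertices, which is $P_4$-free. The case $v=q$ is symmetric. So $G$ is $(K_4,P_1+P_4)$-free and the class of such $G$ has unbounded clique-width, as desired. Without this (or some equivalent) explicit colouring and verification, however, your text is a plan for a proof, not a proof.
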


\section{New Classes of Bounded Clique-width}\label{sec:bounded}

In this section we identify two new graph classes that have bounded
clique-width, namely the classes of $(\overline{P_1+P_3},P_1+S_{1,1,2})$-free
graphs and $(\overline{P_1+P_3},\allowbreak K_{1,3}+\nobreak 3P_1)$-free graphs.

We first prove that the class of  $(\overline{P_1+P_3},P_1+S_{1,1,2})$-free graphs has bounded clique-width.
To do so we use a similar approach to that used by 
Dabrowski, Lozin, Raman and Ries~\cite{DLRR12} to prove that the classes of $(K_3,S_{1,1,3})$-free and
$(K_3,K_{1,3}+P_2)$-free graphs have bounded clique-width.

\begin{theorem}\label{thm:K_3s_112+P_1}
The class of $(\overline{P_1+P_3},P_1+S_{1,1,2})$-free graphs has bounded clique-width.
\end{theorem}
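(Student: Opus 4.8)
The plan is to eliminate the complement graph from the hypothesis immediately and work purely in the triangle-free world. By Lemma~\ref{l-paw}, the class of $(\overline{P_1+P_3},P_1+S_{1,1,2})$-free graphs has bounded clique-width if and only if the class of $(K_3,P_1+S_{1,1,2})$-free graphs does, so I would prove the latter statement instead. Since the clique-width of a graph equals the maximum clique-width taken over its connected components, it then suffices to bound the clique-width of an arbitrary connected $(K_3,P_1+S_{1,1,2})$-free graph $G$.

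I would distinguish two cases according to whether $G$ contains an induced $S_{1,1,2}$. If it does not, then $G$ is $(K_3,S_{1,1,2})$-free, and I would treat this more rigid case by a direct structural analysis: a connected triangle-free graph with no induced ``chair'' $S_{1,1,2}$ has a very restricted shape, and I would argue that after deleting a bounded number of vertices (Fact~\ref{fact:del-vert}) and applying bipartite complementations (Fact~\ref{fact:bip}) it reduces to a bipartite graph. As $S_{1,1,2}\ssi S_{1,2,3}$, such a bipartite graph is $S_{1,1,2}$-free and hence has bounded clique-width by Lemma~\ref{l-bipartite}.

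The substantive case is when $G$ contains an induced $S_{1,1,2}$, say on a vertex set $D$ with $|D|=5$. The key observation is a domination property: since $G$ is $(P_1+S_{1,1,2})$-free, no vertex can be non-adjacent to all of $D$, because such a vertex would together with $D$ induce a copy of $P_1+S_{1,1,2}$; hence $D$ dominates $G$. I would then partition $V(G)\setminus D$ into classes according to the neighbourhood each vertex has inside $D$. Because $G$ is triangle-free, every such neighbourhood is an independent set of $G[D]$, and the chair $S_{1,1,2}$ has exactly $13$ non-empty independent sets, so there are at most $13$ classes; moreover two vertices of the same class share a common neighbour in $D$ and are therefore non-adjacent, so each class is independent, and domination forces every vertex of $V(G)\setminus D$ to lie in one of these classes. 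Triangle-freeness further forces edges between two classes to occur only when their associated neighbourhoods in $D$ are disjoint. Deleting $D$ (Fact~\ref{fact:del-vert}) thus leaves a triangle-free graph partitioned into a bounded number of independent sets obeying this restricted adjacency pattern.

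The hard part will be controlling the bipartite graphs induced between pairs of these classes: a partition into a bounded number of independent sets does not on its own bound the clique-width, since even a single bipartite graph between two independent sets can have unbounded clique-width. Here I would again exploit $(P_1+S_{1,1,2})$-freeness, together with triangle-freeness, to show that each of these between-class bipartite graphs has a highly restricted structure, the goal being that after a bounded number of bipartite complementations (Fact~\ref{fact:bip}) every such bipartite graph becomes empty or complete, so that the resulting graph is a blow-up of a graph on a bounded number of vertices and hence has bounded clique-width. Where this cannot be forced outright, I would fall back on the fact that each between-class bipartite graph is an induced bipartite subgraph of $G$, hence $(P_1+S_{1,1,2})$-free, and invoke $P_1+S_{1,1,2}\ssi P_1+S_{1,1,3}$ with Lemma~\ref{l-bipartite} to bound it, before combining the bounded collection of pieces. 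Turning these pairwise bounds into a single global bound on the clique-width — rather than merely bounding each bipartite piece in isolation — is exactly where the interplay between the chair structure on $D$ and the triangle-free constraint has to be pushed through, and this is the main obstacle of the proof.
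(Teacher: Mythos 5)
Your setup is sound: the reduction via Lemma~\ref{l-paw}, the domination observation (a vertex with no neighbour in the chair $D$ would give an induced $P_1+S_{1,1,2}$), the count of $13$ non-empty independent sets of $S_{1,1,2}$, and the conclusion that $V(G)\setminus D$ splits into at most $13$ independent classes with edges only between classes whose traces on $D$ are disjoint. But the proof stops exactly where it would need to start. You never establish the central claim that boundedly many bipartite complementations make every between-class bipartite graph empty or complete, and your stated fallback is not a valid argument: knowing that each between-class bipartite graph, taken in isolation, has bounded clique-width by Lemma~\ref{l-bipartite} does not bound the clique-width of $G$, because clique-width does not compose over such a decomposition. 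A concrete counterexample: take a $1$-subdivided wall and partition its vertices into the two sides $A$, $B$ of the original wall plus the set $M$ of subdivision vertices; all three classes are independent, each between-class graph is a forest of stars (clique-width at most $3$), yet the whole graph has unbounded clique-width by Lemma~\ref{l-walls}. Since you yourself flag this combination step as ``the main obstacle of the proof,'' the proposal is a plan with its key lemma missing, not a proof. The same criticism applies, less critically, to your Case 1: the assertion that a connected $(K_3,S_{1,1,2})$-free graph reduces to a bipartite one after bounded vertex deletions is unproved, though this case could simply be closed by citing the known result that $(K_3,S_{1,1,3})$-free graphs have bounded clique-width~\cite{DLRR12}, since $S_{1,1,2}\ssi S_{1,1,3}$.

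For comparison, the paper avoids the dominating chair entirely and gets the needed control from an odd cycle instead. If $G$ is bipartite, Lemma~\ref{l-bipartite} applies directly; otherwise $G$ contains an induced odd cycle $C$ of length $k\geq 5$. For $k\geq 7$ the paper shows $G=C$, so Lemma~\ref{l-atmost2} applies. For $k=5$ it partitions $V(G)\setminus V(C)$ by neighbourhoods on the $C_5$ into sets $X$, $V_1,\ldots,V_5$, $W_1,\ldots,W_5$ and proves eight adjacency claims forcing every between-class graph to be complete, empty, or (for $V_i$ versus $W_i$) a near-complete bipartite graph; after deleting $C$ and all small sets (at most $15$ vertices, Fact~\ref{fact:del-vert}) and applying boundedly many bipartite complementations (Fact~\ref{fact:bip}), what remains is a graph of maximum degree one, of clique-width at most $2$. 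This collapse of the between-class structure is exactly what your chair-based partition lacks, and nothing in your outline indicates how to recover it across the roughly $\binom{13}{2}$ pairs of classes.
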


\begin{proof}
Let $G$ be a $(\overline{P_1+P_3},P_1+S_{1,1,2})$-free graph.
By Lemma~\ref{l-paw} we may assume~$G$ is $(K_3,\allowbreak P_1+\nobreak S_{1,1,2})$-free.
Without loss of generality, we may also assume that~$G$ is connected (as otherwise we could consider each connected component of~$G$ separately).
If~$G$ is bipartite, then $G$ has bounded clique-width by Lemma~\ref{l-bipartite}.
For the remainder of the proof we assume that $G$ is not bipartite, that is, $G$ contains an induced odd cycle $C=v_1v_2\cdots v_kv_1$. Because~$G$ is $K_3$-free, $k\geq 5$.

First, suppose that $k\geq 7$.
We claim that $G=C$. Indeed, suppose not. Since~$G$ is connected,~$G$ must have
a vertex $x\notin V(C)$ that is adjacent to a vertex of~$C$. Since $G$ is
$K_3$-free, $x$ cannot be adjacent to any two consecutive vertices of the cycle
$C$. Since $C$ is an odd cycle, $x$ must therefore have two consecutive non-neighbours on the cycle.
Without loss of generality we assume that~$x$ is
adjacent to $v_1$ and non-adjacent to $v_{k-1}$ and $v_k$.
Then $x$ must be adjacent to $v_4$, otherwise $G[v_1,x,v_2,v_k,v_{k-1},v_4]$ would be isomorphic to $P_1+S_{1,1,2}$. 
Now $x$ cannot be adjacent to $v_3$ or~$v_5$, since $G$ is $K_3$-free.
However, then $G[v_1,x,v_k,v_2,v_3,v_5]$ would be a $P_1+S_{1,1,2}$, which is a contradiction. Hence, $G=C$ and as such has 
clique-width at most~4 by Lemma~\ref{l-atmost2}.

From now on we assume that $k=5$.
Every vertex not on $C$ has at most two neighbours on the cycle, and if it has two, then
these neighbours on $C$ cannot be consecutive vertices of~$C$ 
(since $G$ is $K_3$-free).  
We now partition the vertices of $G$ not in $C$ into sets, depending on their
neighbourhood in~$C$. We let $X$ denote the vertices with no neighbours on the cycle. We let $V_i$ denote the set of all vertices not on the cycle $C$ that are adjacent to both $v_{i-1}$ and~$v_{i+1}$, where subscripts are interpreted modulo~5.
We let $W_i$ denote the set of all vertices that are adjacent to~$v_i$ but to no other vertices of $C$. We say
that a set $V_i$ or $W_i$ is {\em large} if it contains at least 
two  vertices, otherwise we say that it is {\em small}. We say that a set in $\{V_i,W_i\}$ and a set in $\{V_j,W_j\}$ 
are {\em consecutive} if $v_i$ and $v_j$ are consecutive vertices on $C$, otherwise, we say that they are {\em opposite}.
Note that each $V_i$ and each $W_i$ is an independent set, since $G$ is
$K_3$-free. We now investigate the possible adjacencies between vertices of these sets through a
series of eight claims.

\begin{enumerate}
\item {\em $X$ is an independent set and every vertex in $X$ is adjacent to
every vertex in~$V_i$ and~$W_i$.} Suppose there is a vertex $x\in X$.  Since
$G$ is connected, there must be a vertex $y \not \in V(C)$ with a neighbour on
the cycle. We may assume without loss of generality that $y$ is adjacent to
$v_1$, but not to $v_2,v_3$ or $v_5$. Then $x$ must be adjacent to $y$,
otherwise $G[v_1,y,v_5,v_2,v_3,x]$ would be isomorphic to $P_1+S_{1,1,2}$.
Hence every vertex in $X$ is adjacent to every vertex in $V_i$ and $W_i$ for
all $i$. Because of the fact that if $X$ is non-empty then some $V_i$ or $W_i$
must also be non-empty and the fact that~$G$ is $K_3$-free, $X$ must be an
independent set.

\item {\em If $V_i$ and $V_j$ are opposite then no vertex of~$V_i$ is adjacent to a vertex of~$V_j$.} This
follows from the fact that any two such vertices have a common neighbour on $C$ and the fact that $G$ is $K_3$-free.

\item {\em If $V_i$ and $V_j$ are consecutive and large then every vertex of $V_i$
is adjacent to every vertex of $V_j$.} Without loss of generality, let
$i=1,j=2$.  Suppose $y \in V_1$ is not adjacent to $z_1,z_2\in V_2$. Then
$G[v_1,z_1,z_2,v_2,y,v_4]$ is a $P_1+S_{1,1,2}$. Now suppose that $y$ is
adjacent to $z_1$, but not to~$z_2$, then $G[y,v_2,z_1,v_5,v_4,z_2]$ is isomorphic to
$P_1+S_{1,1,2}$, which is a contradiction.

\item {\em If $V_i$ and $W_j$ are consecutive then one of them must be empty.}
Suppose, for contradiction, that there exist vertices $x \in V_1$ and~$y \in W_2$. Then $x$ and~$y$
are non-adjacent, as $G$ is $K_3$-free. However, 
then $G[v_5,v_1,x,v_4,v_3,y]$ is isomorphic to $P_1+S_{1,1,2}$, which is a contradiction.

\item {\em If $V_i$ and $W_j$ are opposite and $W_j$ is large then no vertex of
$V_i$ has a neighbour in $W_j$.} Let $y \in V_1$ and $z_1,z_2 \in W_3$.
If $y$ is adjacent to both $z_1$ and $z_2$, then $G[y,z_1,z_2,v_2,v_1,v_4]$ is
isomorphic to $P_1+S_{1,1,2}$. So $y$ is adjacent to at most one vertex of $W_3$, say
$y$ is adjacent to $z_1$, but not to $z_2$. Then
$G[v_5,v_1,v_4,y,z_1,z_2]$ is isomorphic to $P_1+S_{1,1,2}$, which is a contradiction.

\begin{sloppypar}
\item {\em Every vertex in $V_i$ has at most one non-neighbour in~$W_i$ and
vice versa.} If $y_1 \in V_1$  has two non-neighbours $z_1, z_2 \in W_1$ then
the graph $G[v_1,z_1,z_2,v_2,y_1,v_4]$ is isomorphic to $P_1+S_{1,1,2}$, which
is a contradiction.  If $z_1 \in W_1$ has two non-neighbours $y_1,y_2 \in V_1$
then $G[v_2,y_1,y_2,v_1,z_1,v_4]$ is isomorphic to $P_1+S_{1,1,2}$, which is
again a contradiction.

\item {\em If $W_i$ and $W_j$ are consecutive and $W_j$ is large then $W_i$ is
empty.} Without loss of generality, let $i=\nobreak 1$ and $j=2$. Suppose, for
contradiction, that~$y \in W_1$ and $z_1,z_2\in W_2$.  If $y$ is adjacent to
both~$z_1$ and $z_2$ then $G[y,z_1,z_2,v_1,v_5,v_3]$ is isomorphic to
$P_1+S_{1,1,2}$. Without loss of generality, we therefore assume that $y$ is
not adjacent to $z_1$. If $y$ is not adjacent to $z_2$ then
$G[v_2,z_1,z_2,v_1,y,v_4]$ is isomorphic to $P_1+S_{1,1,2}$. If $y$ is adjacent
to $z_2$, then $G[v_2,v_3,z_1,z_2,y,v_5]$ is isomorphic to $P_1+S_{1,1,2}$.
Hence in all three cases we have a contradiction.
\end{sloppypar}

\item {\em If $W_i$ and $W_j$ are opposite then every vertex of $W_i$ must be
adjacent to every vertex of~$W_j$.} Without loss of generality, let $i=1$, $j=3$, 
$x \in W_1$, and $y \in W_3$. If $x$ and $y$ are not adjacent, then
$G[v_1,v_2,x,v_5,v_4,y]$ is isomorphic to $P_1+S_{1,1,2}$, which is not possible.
\end{enumerate}

We now do as follows. First, we remove 
the vertices of $C$ and all small sets~$V_i$ or~$W_i$ if they exist. In this way we remove at most $5+5+5=15$ vertices. Hence,~$G$ has bounded clique-width if and only 
if the resulting graph~$G'$ has bounded clique-width, by Fact~\ref{fact:del-vert}.
We then consider the remaining sets~$X$,~$V_i$ and~$W_i$ in~$G'$. 
We complement the edges between the vertices in $X$ and the vertices not in~$X$.
If~$V_i$ and~$V_j$ are consecutive, we complement the edges between them. 
If~$W_i$ and~$W_j$ are opposite, we complement the edges between them. 
Finally, for any pair $V_i$ and~$W_i$, we complement the edges between them.
Then $G'$ has bounded clique-width if and only if the resulting graph $G^*$ has bounded clique-width, by Fact~\ref{fact:bip}. 
If two vertices are adjacent in $G^*$, then they must be members of
some~$V_i$ and~$W_i$, respectively. By construction, $G^*[V_i \cup W_i]$ is a (not necessarily perfect) matching. Thus $G^*$ has clique-width at most~2, completing the proof.\qed
\end{proof}

Next, we prove that the class of $(\overline{P_1+P_3},K_{1,3}+3P_1)$-free graphs has bounded clique-width.
To do so we first prove Lemma~\ref{l:K_3s_111+2P_1}, which says that the class of
$(\overline{P_1+P_3},\allowbreak K_{1,3}+\nobreak 2P_1)$-free graphs has bounded clique-width.
We then use this
result to prove Theorem~\ref{thm:K_3s_111+3P_1}, which says that the larger
class of $(\overline{P_1+P_3},\allowbreak K_{1,3}+\nobreak 3P_1)$-free graphs also has bounded clique-width.
It is also possible to prove Theorem~\ref{thm:K_3s_111+3P_1} by
combining very similar arguments to those in the proof of 
Lemma~\ref{l:K_3s_111+2P_1} together with the fact that the class of
$(\overline{P_1+P_3},K_{1,3}+P_1)$-free graphs has bounded clique-width (which follows from
Theorem~\ref{thm:K_3s_112+P_1}).
However, we believe that such a combined proof would be much
harder to follow. 

\begin{lemma}\label{l:K_3s_111+2P_1}
The class of $(\overline{P_1+P_3},K_{1,3}+2P_1)$-free graphs has bounded clique-width.
\end{lemma}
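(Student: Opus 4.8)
The plan is to mirror the strategy of Theorem~\ref{thm:K_3s_112+P_1}. First, since $\overline{P_1+P_3}$ is the paw, Lemma~\ref{l-paw} lets me replace $\overline{P_1+P_3}$-freeness by $K_3$-freeness, so I may assume $G$ is $(K_3,K_{1,3}+2P_1)$-free, and (since clique-width is the maximum over connected components) connected. If $G$ is bipartite then, because $K_{1,3}+2P_1 \ssi K_{1,3}+3P_1$, Lemma~\ref{l-bipartite} already gives that $G$ has bounded clique-width. Hence I may assume $G$ is triangle-free, connected and non-bipartite, so it contains a shortest odd cycle $C=v_1\cdots v_kv_1$, which is induced and, as $G$ is $K_3$-free, satisfies $k\ge 5$. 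Choosing $C$ shortest is convenient: a standard parity/arc argument shows every vertex off $C$ has at most two neighbours on $C$ and, if it has two, they lie at distance exactly two along $C$. This is what makes the partition below well defined for every relevant value of $k$, not just $k=5$.

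The key new lever, which replaces the single ``far'' vertex used for $P_1+S_{1,1,2}$, is the following consequence of forbidding $K_{1,3}+2P_1$: whenever $G$ contains an induced claw $K$, the set $V(G)\setminus N[V(K)]$ contains no induced $2P_1$, and since $G$ is triangle-free this set is a clique on at most two vertices. Equivalently, every induced claw dominates all but at most two vertices of $G$. I would use this first to dispose of long cycles: an off-cycle neighbour $x$ of some $v_i$ creates the induced claw $\{v_i;v_{i-1},v_{i+1},x\}$, and the domination statement forces $x$ to be adjacent to all but at most two vertices of the far arc of $C$; since $N(x)\cap V(C)$ is independent (no two consecutive), a simple count shows this is impossible once $k$ exceeds a small constant. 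Thus for all large $k$ we get $G=C$, which has maximum degree~$2$ and hence clique-width at most~$4$ by Lemma~\ref{l-atmost2}.

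For the finitely many remaining (small, odd) values of $k$ I would follow Theorem~\ref{thm:K_3s_112+P_1} and partition the vertices off $C$ into the set $X$ of vertices with no neighbour on $C$, the sets $V_i$ of vertices adjacent to exactly $v_{i-1}$ and $v_{i+1}$, and the sets $W_i$ of vertices adjacent to $v_i$ only; each such set is independent because $G$ is $K_3$-free. As before I would call a set \emph{large} if it has at least two vertices and \emph{small} otherwise. The crucial point driving all the adjacency claims is that each of these sets is independent, so any two vertices of a large set already form an induced $2P_1$; consequently no induced claw of $G$ may be disjoint from and non-adjacent to a large set. Working this out should yield a series of claims that pin down, for every pair of large sets, whether the bipartite graph between them is empty, complete, or a near-matching (and likewise the behaviour of $X$). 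I would then delete the constantly many vertices lying on $C$ or in small sets using Fact~\ref{fact:del-vert}, and apply subgraph and bipartite complementations (Facts~\ref{fact:comp} and~\ref{fact:bip}) to turn every complete or empty bipartite piece into a non-edge, leaving a graph that is a disjoint union of matchings and therefore has clique-width at most~$2$.

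I expect the adjacency case analysis to be the main obstacle, and harder than in Theorem~\ref{thm:K_3s_112+P_1}. The second isolated vertex of $K_{1,3}+2P_1$ means that many claims must branch on whether the sets involved are large or small, since a large set supplies a ready-made $2P_1$; and, unlike the $P_1+S_{1,1,2}$ case, a single off-cycle vertex need not create a forbidden subgraph, so odd cycles of length $7$ and $9$ cannot be eliminated outright and must be carried through the partition argument. Verifying that after the complementations every surviving bipartite piece really is a matching (rather than some richer structure that would not have clique-width~$2$) is the step that will require the most care.
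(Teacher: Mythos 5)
Your preliminary reductions are sound: Lemma~\ref{l-paw} lets you pass to $(K_3,K_{1,3}+2P_1)$-free graphs, Lemma~\ref{l-bipartite} handles the bipartite case, and your claw-domination observation (the vertices outside the closed neighbourhood of an induced claw pairwise see each other, so by $K_3$-freeness there are at most two of them) is correct and does eliminate long odd cycles. But the proof stops exactly where the real work begins. For the surviving short odd cycles (at least $C_5$, also $C_7$ by your own count) you only promise that ``working this out should yield a series of claims'' pinning down the bipartite graphs between the sets $X$, $V_i$, $W_i$, and you yourself flag this as the hardest step. That case analysis \emph{is} the content of the lemma, and it is missing. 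Moreover, the template you plan to imitate does not transfer: in Theorem~\ref{thm:K_3s_112+P_1} the forbidden graph $P_1+S_{1,1,2}$ is a claw plus \emph{one} independent vertex, so a single off-cycle vertex forces adjacencies outright, and each of the eight claims there is an absolute statement (``complete'', ``empty'', ``at most one non-neighbour''). With $K_{1,3}+2P_1$ one needs \emph{two} pairwise non-adjacent non-neighbours of a claw, so every analogous claim degrades to an ``all but at most two vertices'' statement; the exceptional vertices from different claims accumulate and interact, and you give no argument that, after deleting boundedly many vertices and complementing, what remains between each pair of sets is really a matching rather than some denser bipartite graph. So there is a genuine gap, both in execution (the claims are not proved) and in the justification that the intended final structure is achievable.

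For comparison, the paper's proof avoids odd cycles entirely and is considerably shorter. It picks an arbitrary vertex $x$, sets $N_1=N(x)$ (independent by $K_3$-freeness) and $N_2=V(G)\setminus N[x]$; note $G[N_2]$ is $(K_{1,3}+P_1)$-free, so if $|N_1|\le 2$ one deletes $\{x\}\cup N_1$ and applies Theorem~\ref{thm:K_3s_112+P_1} with Fact~\ref{fact:del-vert}. Otherwise it fixes $x_1,x_2,x_3\in N_1$ and applies your domination lever to the claw $G[x,x_1,x_2,x_3]$: all but at most two vertices of $N_2$ see one of $x_1,x_2,x_3$, so after deleting the exceptions $N_2$ splits into independent sets $A$, $B$, $C$ (by adjacency to $x_1$, then $x_2$, then $x_3$) with $|C|\le 2$. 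A counting argument shows that either one of $A,B$ is small or $G[A\cup B]$ is complete bipartite; since no vertex of $N_1$ can see both sides of a complete bipartite graph, $G[N_1\cup A\cup B]$ is then bipartite, and Lemma~\ref{l-bipartite} plus Fact~\ref{fact:del-vert} finish the proof. In short, you found the right key observation but deployed it inside the harder, cycle-based decomposition; to salvage your route you would have to actually carry out the $C_5$ and $C_7$ analyses, whereas the neighbourhood decomposition around a single vertex makes them unnecessary.
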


\begin{proof}
Let~$G$ be a $(\overline{P_1+P_3},K_{1,3}+2P_1)$-free graph. By
Lemma~\ref{l-paw}, we may assume~$G$ is $(K_3,\allowbreak K_{1,3}+\nobreak 2P_1)$-free. Let $x$ be
an arbitrary vertex
in~$G$. Let $N_1=N(x)$ and $N_2=V(G)\setminus N[x]$. 
Since~$G$ is $K_3$-free, $N_1$ must be an independent set.
Since~$G$ is $(K_{1,3}+2P_1)$-free,~$G[N_2]$ must be $(K_{1,3}+P_1)$-free. 
Then~$G[N_2]$ must have bounded clique-width by 
Theorem~\ref{thm:K_3s_112+P_1}.

Suppose that $|N_1|\leq 2$. Then we delete $x$ and the vertices of $N_1$ and obtain 
a graph of bounded clique-width, namely $G[N_2]$.
By Fact~\ref{fact:del-vert}, we find that $G$ also has bounded clique-width.
Hence we may assume that $|N_1|\geq 3$.

We prove the following claim.

\medskip
\phantomsection\refstepcounter{ctrclaim}
\noindent
\displaycase{Claim \thectrclaim.}
\label{l:K_3s_111+2P_1r:clm1}
 {\em Let $S\subseteq N_2$ with $|S|\leq k$ for some $k$.  If
$G[N_2\setminus S]$ is complete bipartite, then the clique-width of $G$ is
bounded by a function of $k$. In particular, this includes the case where
$G[N_2\setminus S]$ is an independent set.}

\medskip
\noindent
To prove Claim~\ref{l:K_3s_111+2P_1r:clm1}, suppose that $G[N_2\setminus S]$ is complete bipartite. 
No vertex in $N_1$ has a neighbour in both partition classes of $G[N_2\setminus S]$, due to the fact that~$G$ is $K_3$-free.
Because $N_1$ is an independent set, this means that $G[N_1\cup (N_2\setminus S)]$ is bipartite, in addition to being  $(K_3,K_{1,3}+2P_1)$-free.
Hence, $G[N_1\cup (N_2\setminus S)]$ has bounded clique-width by 
Lemma~\ref{l-bipartite}.
Then by Fact~\ref{fact:del-vert}, $G=G[N_1\cup (N_2\setminus S)\cup S\cup \{x\}]$ has clique-width bounded by some function of $|S|$. This proves Claim~\ref{l:K_3s_111+2P_1r:clm1}.

\medskip
\noindent
We will use Claim~\ref{l:K_3s_111+2P_1r:clm1} later in the proof and now proceed as follows.
We fix three arbitrary vertices $x_1,x_2,x_3 \in N_1$; such vertices exist because $|N_1|\geq 3$.
Let $y_1,y_2,y_3$ be three arbitrary vertices of~$N_2$. We will show that at least one of them is adjacent to at least one of $x_1$, $x_2$,~$x_3$.
Because $G$ is $K_3$-free, two of $y_1$, $y_2$, $y_3$ are not pairwise adjacent, say $y_1y_2\notin E(G)$. If both $y_1$ and $y_2$ have no neighbour in $\{x_1,x_2,x_3\}$,
then $G[x,x_1,x_2,x_3,y_1,y_2]$ is isomorphic to $K_{1,3}+2P_1$, a contradiction. Hence, 
all vertices of $N_2$ except at most two have at least one neighbour in $\{x_1,x_2,x_3\}$. 
Then, by Fact~\ref{fact:del-vert}, we may assume without loss of generality that all vertices of~$N_2$ have at least one neighbour in $\{x_1,x_2,x_3\}$.

Let $A$ consist of those vertices of $N_2$ that are adjacent to $x_1$.
Let $B$ consist of those vertices of $N_2$ that are adjacent to $x_2$ but not to $x_1$.
Let $C=N_2\setminus (A\cup B)$. Note that every vertex in $C$ is adjacent to $x_3$ but not to $x_1$ or $x_2$.
Moreover, $A, B, C$ are three independent sets due to the fact that~$G$ is $K_3$-free.
If $C$ contains at least three vertices, say $c_1,c_2,c_3$, then $G[x_3,c_1,c_2,c_3,x_1,x_2]$ is isomorphic to
$K_{1,3}+2P_1$. Thus $|C|\leq 2$.
If $|A|\leq 7$, then $|A\cup C|\leq 9$. Moreover, $G[N_2\setminus (A\cup C)]= G[B]$ is complete bipartite, because $B$ is an independent set.
Hence, we may apply Claim~\ref{l:K_3s_111+2P_1r:clm1}. From now on we assume that $|A|\geq 8$, and similarly, that~$|B|\geq 8$.

At least one vertex of any pair from $B$ must be adjacent to at least one vertex of any triple from $A$; otherwise these five vertices, together with $x_1$,
induce a subgraph isomorphic to $K_{1,3}+2P_1$, since $A$ and $B$ are independent sets and $x_1$ is adjacent to all vertices of $A$ and to none of $B$. 
Fix three vertices $a_1,a_2,a_3\in A$. Then at most one vertex of $B$ has no neighbours in $\{a_1,a_2,a_3\}$. 
Because $|B|\geq 8$, this means that at least one of $a_1,a_{2},a_{3}$ must have at least three neighbours in~$B$.
By repeating this argument with different choices of $a_{1},a_{2},a_{3}$, we
find that all but at most two vertices in $A$ have at least three neighbours in $B$.
So, at least six vertices in $A$ have at least three neighbours in $B$, and vice versa.

Let $a\in A$ be adjacent to at least three vertices $b_1,b_2,b_3$ of $B$. If $a$ is not adjacent to some $b_4\in B$, then
$G[a_1,b_1,b_2,b_3,b_4,x]$ is isomorphic to
$K_{1,3}+2P_1$. Hence, every vertex of~$A$ with at least three neighbours in $B$ is adjacent to all vertices of $B$. 
By reversing the roles of $A$ and $B$, we find that every vertex in~$B$ with at least three neighbours in $A$ must be adjacent to all vertices of $A$.
Because there are at least  six vertices in $A$ with at least three neighbours in $B$, and vice versa, we conclude that all vertices of $A$ are adjacent to all vertices of $B$, that is,
$G[N_2\setminus C]=G[A\cup B]$ is complete bipartite. Because $|C|\leq 2$, we may apply Claim~\ref{l:K_3s_111+2P_1r:clm1} to complete the proof.\qed
\end{proof}

\begin{theorem}\label{thm:K_3s_111+3P_1}
The class of $(\overline{P_1+P_3},K_{1,3}+3P_1)$-free graphs has bounded clique-width.
\end{theorem}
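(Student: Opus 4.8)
The plan is to mirror the proof of Lemma~\ref{l:K_3s_111+2P_1}, using that lemma in place of Theorem~\ref{thm:K_3s_112+P_1} as the base case. By Lemma~\ref{l-paw} I may assume $G$ is $(K_3,K_{1,3}+3P_1)$-free, and I may assume $G$ is connected. Fix an arbitrary vertex $x$ and set $N_1=N(x)$ and $N_2=V(G)\setminus N[x]$; since $G$ is $K_3$-free, $N_1$ is independent. The key observation is that $G[N_2]$ is $(K_3,K_{1,3}+2P_1)$-free: it is $K_3$-free, and if it contained an induced $K_{1,3}+2P_1$ then, since $x$ is non-adjacent to every vertex of $N_2$, adding $x$ would yield an induced $K_{1,3}+3P_1$ in $G$. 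Hence $G[N_2]$ has bounded clique-width by Lemma~\ref{l:K_3s_111+2P_1}. If $|N_1|\leq 2$ I delete $x$ and $N_1$ and finish using Fact~\ref{fact:del-vert}, so I may assume $|N_1|\geq 3$.

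Next I transfer the Claim from the proof of Lemma~\ref{l:K_3s_111+2P_1} verbatim: if there is a set $S\subseteq N_2$ of bounded size with $G[N_2\setminus S]$ complete bipartite, then $G$ has bounded clique-width. The argument is identical, since $K_3$-freeness prevents any vertex of the independent set $N_1$ from having neighbours in both classes of the complete bipartite graph $G[N_2\setminus S]$, so that $G[N_1\cup(N_2\setminus S)]$ is bipartite; being an induced subgraph of $G$ it is $(K_{1,3}+3P_1)$-free, and since $K_{1,3}+3P_1$ appears in the list of Lemma~\ref{l-bipartite}, it has bounded clique-width. Restoring the bounded set $S$ and the vertex $x$ via Fact~\ref{fact:del-vert} then bounds $\cw(G)$. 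The whole problem therefore reduces to showing that, after deleting boundedly many vertices, $G[N_2]$ is complete bipartite.

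To do this I fix three vertices $x_1,x_2,x_3\in N_1$ and first prune $N_2$. Using Ramsey's Theorem ($R(3,3)=6$), among any six vertices of the $K_3$-free graph $G[N_2]$ there is an independent triple; so if at least six vertices of $N_2$ had no neighbour in $\{x_1,x_2,x_3\}$, an independent triple among them together with the claw on $x,x_1,x_2,x_3$ would induce $K_{1,3}+3P_1$. Hence all but at most five vertices of $N_2$ have a neighbour in $\{x_1,x_2,x_3\}$, and by Fact~\ref{fact:del-vert} I may assume every vertex of $N_2$ does. I then partition $N_2$ into the independent sets $A$ (adjacent to $x_1$), $B$ (adjacent to $x_2$ but not $x_1$) and $C$ (adjacent to $x_3$ but not to $x_1$ or $x_2$), exactly as in Lemma~\ref{l:K_3s_111+2P_1}, and run the analogous counting arguments, aiming to force $G[A\cup B]$ to be complete bipartite and $C$ to be small, after which the Claim finishes the proof.

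The main obstacle is precisely that the extra $P_1$ weakens every counting step, because each forbidden $K_{1,3}+3P_1$ now demands three, rather than two, mutually non-adjacent ``spare'' vertices. Concretely, the bound $|C|\leq 2$ from the lemma is lost: a claw centred at $x_3$ with three leaves in $C$ leaves only $x_1,x_2$ as ready-made isolated vertices, one short of the three needed. Likewise, the clean conclusion that a high-degree vertex of $A$ is adjacent to \emph{all} of $B$ degrades to adjacency to all but at most one vertex of $B$, so a priori $G[A\cup B]$ is only complete bipartite minus a matching. I expect to close both gaps simultaneously by exploiting that $G[N_2]$ is itself $(K_{1,3}+2P_1)$-free: a matched non-edge between $A$ and $B$, or a large independent set $C$, can be combined with one further vertex to exhibit an induced $K_{1,3}+2P_1$ inside $N_2$, a contradiction. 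Carrying out this bookkeeping, namely showing that the matching, the boundedly many low-degree exceptional vertices of $A$ and $B$, and the set $C$ are all small at once, is the crux; once they are deleted, $G[N_2\setminus S]$ is genuinely complete bipartite and the Claim applies.
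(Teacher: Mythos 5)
Your proposal is incomplete at exactly the point you flag as the crux, and the route you propose for closing that gap is provably blocked. Your plan stands or falls with the reduction target ``after deleting boundedly many vertices, $G[N_2]$ is complete bipartite,'' and you hope to achieve it by turning any matched non-edge between $A$ and $B$ (or any large $C$) into an induced $K_{1,3}+2P_1$ inside $N_2$. But consider the crown graph: $A=\{a_1,\ldots,a_n\}$, $B=\{b_1,\ldots,b_n\}$, with $a_ib_j\in E$ if and only if $i\neq j$. It is $K_3$-free, and it is $(K_{1,3}+2P_1)$-free: any claw has centre (say) $a_i$ and three leaves in $B\setminus\{b_i\}$, and the only vertex non-adjacent to all four claw vertices is $b_i$, so at most one independent vertex can be added. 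Now build $G$ with vertex set $\{x,x_1,x_2,x_3\}\cup A\cup B$ and edges $xx_1,xx_2,xx_3$, all edges between $x_1$ and $A$, all edges between $x_2$ and $B$, and the crown between $A$ and $B$. A routine case check over all claws shows $G$ is $(K_3,K_{1,3}+3P_1)$-free (every claw in $G$ admits at most two pairwise non-adjacent vertices independent of it). Relative to $x$ one gets $N_1=\{x_1,x_2,x_3\}$, $Y=\emptyset$, $C=\emptyset$, and $G[N_2]=G[A\cup B]$ is a crown of arbitrary size: no deletion of boundedly many vertices makes it complete bipartite, and no $K_{1,3}+2P_1$ contradiction exists to eliminate the matching. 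So matched non-edges are a genuine, unbounded feature of this class, not bookkeeping noise; the bookkeeping you defer cannot be carried out.

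The paper's proof works around precisely this obstruction with two ideas absent from your proposal. First, a minimum-degree reduction: if $G$ has a vertex $v$ of degree at most $18$, delete $N[v]$; the remainder is $(K_3,K_{1,3}+2P_1)$-free (your own observation, applied to $v$ instead of $x$), so Lemma~\ref{l:K_3s_111+2P_1} and Fact~\ref{fact:del-vert} finish. Hence one may assume minimum degree at least $19$. Second, the paper settles for the weaker structure that you correctly predicted but tried to avoid: for any two of $A,B,C$ of size at least $9$, after removing one vertex from each, the graph between them is complete bipartite \emph{minus a matching}. The matching is then neutralized not inside $N_2$ but via $N_1$: in the main case ($|A|,|B|\geq 9$, $|C|\leq 8$), any $x'\in N_1$ with neighbours $a'\in A\setminus\{a\}$ and $b'\in B\setminus\{b\}$ would, by $K_3$-freeness, have $N(x')\subseteq\{a,b,a',b',x\}\cup C\cup Y$, i.e.\ degree at most $18$, a contradiction. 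Therefore no vertex of $N_1$ sees both sides, so $G[N_1\cup(A\setminus\{a\})\cup(B\setminus\{b\})]$ is bipartite --- not complete bipartite --- and being $(K_{1,3}+3P_1)$-free it has bounded clique-width directly by Lemma~\ref{l-bipartite}, whose list contains $K_{1,3}+3P_1$; Fact~\ref{fact:del-vert} restores the at most $3+8+5$ deleted vertices. (The case where all of $A,B,C$ have at least $9$ vertices is impossible: the three near-complete bipartite graphs between them force a triangle.) In short, your transferred Claim is sound but its hypothesis of complete bipartiteness is unattainable; the fix is to demand only bipartiteness of $G[N_1\cup A'\cup B']$ and to buy that with the minimum-degree argument.
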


\begin{proof}
Let $G$ be a $(\overline{P_1+P_3},K_{1,3}+3P_1)$-free graph. By Lemma~\ref{l-paw}, we may assume~$G$ is $(K_3,\allowbreak K_{1,3}+\nobreak 3P_1)$-free.
Suppose that $G$ contains a vertex of degree at most~18. If we remove this vertex and its neighbours, we obtain
a $(K_3,K_{1,3}+2P_1)$-free graph, which has bounded clique-width by Lemma~\ref{l:K_3s_111+2P_1}.
Hence,~$G$ also has bounded clique-width, by Fact~\ref{fact:del-vert}.
From now on we assume that $G$ has minimum degree at least~19 (the reason for choosing this number becomes clear later).

Let $x\in V(G)$. Let $N_1=N(x)$ and $N_2=V(G)\setminus N[x]$. Note that $|N_1|\geq 19$ and fix three arbitrarily-chosen vertices $x_1,x_2,x_3 \in N_1$. 
Let $Y$ be the set of vertices in $N_2$ that have no neighbour in~$\{x_1,x_2,x_3\}$.
We will need the following claim.

\medskip
\phantomsection\refstepcounter{ctrclaim}
\noindent
\displaycase{Claim \thectrclaim.}
\label{thm:K_3s_111+3P_1:clm1}
{\em $|Y|\leq 5$.}

\medskip
\noindent
We prove Claim~\ref{thm:K_3s_111+3P_1:clm1} as follows.
Suppose that there are three vertices $y_1,y_2,y_3 \in N_2$ that are pairwise non-adjacent.  
Then at least one of $y_1,y_2,y_3$ must be adjacent to at least one of $x_1,x_2,x_3$, as otherwise $G[x,x_1,x_2,x_3,y_1,y_2,y_3]$ would be isomorphic to $K_{1,3}+3P_1$.
Hence~$G[Y]$ is $3P_1$-free. Because~$G[Y]$ is also $K_3$-free, we apply Ramsey's Theorem and find that $|Y|\leq R(3,3)-1=6-1=5$.
This proves Claim~\ref{thm:K_3s_111+3P_1:clm1}.

\medskip
\noindent
We proceed as follows. Let $N_2'=N_2\setminus Y$.
Let $A$ consist of those vertices of $N_2'$ that are adjacent to~$x_1$.
Let $B$ consist of those vertices of $N_2'$ that are adjacent to~$x_2$ but not to $x_1$.
Let $C=N_2'\setminus (A\cup B)$. Note that every vertex in $C$ is adjacent to $x_3$, but not to $x_1$ or $x_2$.
Moreover, $A$, $B$, $C$ are three independent sets due to the fact that~$G$ is $K_3$-free.

We need the following claim.

\medskip
\medskip
\phantomsection\refstepcounter{ctrclaim}
\noindent
\displaycase{Claim \thectrclaim.}
\label{thm:K_3s_111+3P_1:clm2}
{\em Let $S,T\in \{A,B,C\}$ with $S \neq T$, $|S|\geq 9$ and $|T|\geq 9$.
Then there exist vertices $s\in S$ and $t\in T$ such that $G[(S\setminus \{s\})\cup (T\setminus \{t\})]$
is a complete bipartite graph minus a matching.}

\medskip
\noindent
We prove Claim~\ref{thm:K_3s_111+3P_1:clm2} as follows.
Suppose $S=A$ and $T=B$ with $|A|\geq 9$ and $|B|\geq 9$. 
Let $a,a',a''\in\nobreak A$ and $b,b',b''\in B$ be pairwise distinct.
Recall that $A$ and $B$ are independent sets.
Then at least one of $a,a',a''$ must be adjacent to at least one of $b,b',b''$, as otherwise the graph $G[x_1,a,a',a'',b,b',b'']$ would be isomorphic to 
$K_{1,3}+3P_1$. This means that at most two vertices in $B$ have no neighbour in $\{a,a',a''\}$. Hence, as $|B|\geq 9$,
at least one of $a,a',a''$ has at least three neighbours in $B$.
Repeating this argument with different choices of $a,a',a''$, we
find that all but at most two vertices in $A$ have at least three neighbours in $B$.

\begin{sloppypar}
Every vertex $a' \in A$ that is adjacent to at least three vertices of $B$,
say $b_1,b_2,b_3$, must be adjacent to all but at most one vertex of $B$, since
if $a'$ is not adjacent to $b_4,b_5 \in B$, then $G[a',b_1,b_2,b_3,x,b_4,b_5]$
would be a $K_{1,3}+3P_1$.  Because  all but at most two vertices in $A$
have at least three neighbours in~$B$, this means that all but at most two
vertices of $A$ are adjacent to all but at most one vertex of $B$.  Because
$|A|\geq 9> 7$, this means that every vertex of $B$ except at most one has at
least three neighbours in $A$. Let $b\in B$ be this exceptional vertex; if it
does not exist then we pick $b\in B$ arbitrarily.  If $b' \in B\setminus
\{b\}$, let $a_1,a_2,a_3$ be three of its neighbours in $A$. Then $b'$ cannot
be non-adjacent to two vertices, say $a_4,a_5$ in $A$, otherwise
$G[b',a_1,a_2,a_3,x,a_4,a_5]$ would be a $K_{1,3}+3P_1$. Thus every vertex
in $B\setminus \{b\}$ is adjacent to all but at most one vertex of $A$.  Since
$|B\setminus \{b\}| \geq 8 > 5$, every vertex in $A$, except at most one has at
least three neighbours in $B\setminus \{b\}$ and as stated above must therefore
be adjacent to all but at most one vertex of $B$. 
We let $a\in A$ denote this exceptional vertex; if it does not exist, then we pick $a\in A$ arbitrarily.
Because $A$ and $B$ are independent sets,  we conclude that $G[(A\setminus \{a\}) \cup (B\setminus \{b\})]$ is a complete bipartite graph minus a 
(not necessarily perfect) matching.
If a different pair of sets in $\{A,B,C\}$ both have at least nine vertices, the claim follows by the same arguments.
\end{sloppypar}

\medskip
\noindent
We now consider three different cases.

\medskip
\noindent
\displaycase{Case 1.} {\em At least two sets out of $A,B,C$ have less than nine vertices.}\\
Suppose  $|A|\leq 8$ and $|B|\leq 8$. Recall that $C, N_1$ are independent sets and that $G$ is $(K_{1,3}+3P_1)$-free.
Then $G[V(G)\setminus (\{x\}\cup A\cup B\cup Y)]=G[C\cup N_1]$ is bipartite and $(K_{1,3}+3P_1)$-free. Consequently, it has bounded clique-width by 
Lemma~\ref{l-bipartite}.
We have $|Y|\leq 5$ by Claim~\ref{thm:K_3s_111+3P_1:clm1}. Then $|\{x\}\cup A\cup B\cup Y|\leq 1+8+8+5=22$. Hence,~$G$ has bounded clique-width by Fact~\ref{fact:del-vert}.
If a different pair of sets in $\{A,B,C\}$ both have less than nine vertices, we apply the same arguments.

\medskip
\noindent
\displaycase{Case 2.} {\em Exactly one set out of $A,B,C$ has less than nine vertices.}\\
Suppose $|C|\leq 8$. Hence $|A|\geq 9$ and $|B|\geq 9$.
By Claim~\ref{thm:K_3s_111+3P_1:clm2} we find that there exist two vertices $a\in A$ and $b\in B$ such that $G[(A\setminus \{a\}) \cup (B\setminus \{b\})]$ is a complete bipartite graph minus a matching.
Let $x'\in\nobreak N_1$.
 Suppose, for contradiction, that $x'$ is adjacent to a vertex $a'\in A\setminus \{a\}$ and to a vertex $b'\in B\setminus \{b\}$. Then $x'$ is not adjacent to any other vertices of $(A\setminus \{a\}) \cup (B\setminus \{b\})$, 
 otherwise $G$ would not be $K_3$-free.
Recall that~$N_1$ is an independent set.
Hence 
$N(x')\subseteq \{a,b,a',b',x\}\cup C\cup Y$. We have $|Y|\leq 5$ by
Claim~\ref{thm:K_3s_111+3P_1:clm1}. Hence, $|N(x')|\leq 5+8+5=18$, which is a
contradiction since~$G$ has minimum degree at least~19.
We conclude that no vertex in $N_1$ has neighbours in both $A\setminus \{a\}$ and~$B\setminus \{b\}$. Because~$N_1$ is independent and $G$ is $(K_{1,3}+3P_1)$-free, this means that
$G[V(G)\setminus (\{a,b,x\}\cup C\cup\nobreak Y)]=\allowbreak G[N_1\cup (A\setminus \{a\}) \cup (B\setminus \{b\})]$ is bipartite and $(K_{1,3}+3P_1)$-free.
Consequently, it has bounded clique-width by Lemma~\ref{l-bipartite}.
Because $|\{a,b,x\} \cup C\cup Y|\leq 3+8+5=16$, we conclude that~$G$ has bounded clique-width by Fact~\ref{fact:del-vert}.
If $|A|\leq 8$ or $|B|\leq 8$, we repeat the above arguments with $A$ and~$B$ replaced by $B$ and $C$, or~$A$ and $C$, respectively.

\medskip
\noindent
\displaycase{Case 3.} {\em None of the sets $A,B,C$ has less than nine vertices.}\\
By Claim~\ref{thm:K_3s_111+3P_1:clm2}, we find that there exist vertices $a,a',b,b',c,c'$ such that  $G[(A\setminus\nobreak \{a\}) \cup (B\setminus \{b\})]$, $G[(A\setminus \{a'\}) \cup (C\setminus \{c\})]$, and $G[(B\setminus \{b'\}) \cup (C\setminus \{c'\})]$ are complete bipartite graphs minus a matching.
Hence $G[(A\setminus \{a,a'\}) \cup (B\setminus \{b,b'\})]$, $G[(A\setminus\nobreak \{a,a'\}) \cup (C\setminus \{c,c'\})]$, and $G[(B\setminus \{b,b'\}) \cup (C\setminus \{c,c'\})]$ are also complete bipartite graphs minus a matching.
Because $|A|\geq 9> 2$, $|B|\geq 9> 3$ and $|C|\geq 9> 4$, there exist vertices  $a_1\in A\setminus \{a,a'\}$, $b_1,b_2\in B\setminus \{b,b'\}$ and $c_1,c_2,c_3\in C\setminus \{c,c'\}$. 
Then $a_1$ is adjacent
to at least one of $b_1,b_2$  and to at least two of $c_1,c_2,c_3$. Moreover,
$b_1$ and $b_2$ are each adjacent to at least two of $c_1, c_2, c_3$. Hence $G$ is not $K_3$-free. This contradiction completes the proof.\qed
\end{proof}

\section{New Classes of Unbounded Clique-width}\label{sec:unbounded}

In order to prove our results, we
first present a general construction for obtaining graph classes of unbounded clique-width. 
We then show how we can use our construction to obtain two new classes of unbounded clique-width.
Our construction generalizes the constructions used by Golumbic and Rotics~\cite{GR99b},\footnote{The class of 
 (square) grids was first shown to have unbounded clique-width by
Makowsky and Rotics~\cite{MR99}.  The construction of~\cite{GR99b} determines the
exact clique-width of square grids and narrows the clique-width of non-square grids to two values.} 
Brandst\"adt et al.~\cite{BELL06} and Lozin and Volz~\cite{LV08} to prove that the classes of square grids, $K_4$-free co-chordal graphs 
and $2P_3$-free graphs, respectively, have unbounded clique-width. It can also be used to show directly that the classes of $k$-subdivided walls have unbounded clique-width (Lemma~\ref{l-walls}).

\begin{theorem}\label{thm:generalunbounded}
For $m\geq 0$ and $n > m+1$ the clique-width of a graph $G$ is at least 
$\lfloor\frac{n-1}{m+1}\rfloor+1$
if~$V(G)$ has a partition into sets $V_{i,j} (i,j \in \{0,\ldots,n\})$ with 
the following properties:
\begin{enumerate}
\item \label{prop:v_i0-small} $|V_{i,0}| \leq 1$ for all $i\geq 1$.\\[-9pt]
\item \label{prop:v_0j-small} $|V_{0,j}| \leq 1$ for~all~$j\geq 1$.\\[-9pt]
\item \label{prop:v_ij-nonempty} $|V_{i,j}|\geq 1$ for all $i,j\geq 1$.\\[-9pt]
\item \label{prop:row-connected} $G[\cup^n_{j=0}V_{i,j}]$ is connected for all $i\geq 1$.\\[-9pt]
\item \label{prop:column-connected} $G[\cup^n_{i=0}V_{i,j}]$ is connected for all $j\geq 1$.\\[-9pt]
\item \label{prop:v_k0-nbrs} For $i,j,k\geq 1$, if a vertex of $V_{k,0}$ is adjacent to a vertex of $V_{i,j}$ then $i \leq k$.\\[-9pt]
\item \label{prop:v_0k-nbrs} For $i,j,k\geq 1$, if a vertex of $V_{0,k}$ is adjacent to a vertex of $V_{i,j}$ then $j \leq k$.\\[-9pt]
\item \label{prop:v_ij-nbrs} For $i,j,k,\ell\geq 1$, if a vertex of $V_{i,j}$ is adjacent to a vertex of $V_{k,\ell}$ then $|k-i|\leq m$ and $|\ell-j| \leq m$.\\[-9pt]
\end{enumerate}
\end{theorem}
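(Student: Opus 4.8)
The plan is to bound clique-width from below by tracking how labels are forced to proliferate in any $k$-expression, using the standard ``many vertices with distinct neighbourhoods sharing a label'' obstruction. Recall that the class of walls (and grids) is shown to have unbounded clique-width precisely because in any expression building such a graph, at some internal node a large set of vertices that have already been pairwise combined must still be distinguishable by future join operations, forcing many labels. The partition into $V_{i,j}$ with the adjacency restrictions (Properties~6--8) is designed to encode exactly this kind of rigid ``grid-like'' neighbourhood structure, so I would adapt that counting argument to the abstract setting.

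\medskip\noindent\textbf{Setup.} First I would fix an arbitrary $k$-expression for $G$, represented as a rooted tree $T$ (as in \figurename~\ref{fig:cwtree}). For each node $t$ of $T$, let $G_t$ be the labelled subgraph constructed at $t$, with vertex set $V_t \subseteq V(G)$. The key notion is that two vertices $u,v \in V_t$ are \emph{twins with respect to $t$} if they carry the same label in $G_t$: any subsequent join or relabel operation treats them identically, so every edge added after $t$ is added to both or neither. The strategy is to locate a node $t$ where $V_t$ contains vertices from ``many'' columns but misses vertices from ``many'' columns still to be added, and argue that these surviving vertices must receive pairwise distinct labels, giving at least $\lfloor\frac{n-1}{m+1}\rfloor+1$ labels.

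\medskip\noindent\textbf{The core counting step.} I would choose, for each $j \in \{1,\dots,n\}$, a representative vertex from the ``spine'' of column $j$ --- the column connectivity (Property~5) together with the small boundary sets $V_{0,j}$ (Property~2) lets me treat each column as a connected unit anchored at $V_{0,j}$. Consider a lowest node $t$ in $T$ such that $V_t$ meets the columns indexed by a large arithmetic-progression-spaced set $J \subseteq \{1,\dots,n\}$ with consecutive gaps exceeding $m$; by the spacing $|J| \geq \lfloor\frac{n-1}{m+1}\rfloor+1$. Property~8 guarantees that a vertex of $V_{i,j}$ only has neighbours within $m$ of $j$ in the column index, so two representatives from columns $j, j'$ with $|j-j'| > m$ have \emph{no common neighbour} outside the already-built structure and \emph{disjoint} future neighbourhoods among the spine/boundary vertices. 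Using Properties~6 and~7, the boundary vertices $V_{k,0}$ and $V_{0,k}$ act as ``probes'': each can be joined to a controlled band of the $V_{i,j}$'s, and because $|V_{k,0}|,|V_{0,k}| \leq 1$ these probes are added one at a time, so to realise all the required distinct adjacencies the representatives in the chosen columns cannot share labels at $t$. Formally, I would argue that if two such representatives were twins at $t$, then some edge forced by the partition (present in $G$ between one representative and a later boundary probe, absent for the other) could not be realised --- contradicting that $T$ builds $G$ correctly.

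\medskip\noindent\textbf{Expected main obstacle.} The delicate part is the bookkeeping that converts the adjacency bands (Property~8's window of width $m$, and the ``directional'' Properties~6--7) into a clean lower bound of the stated form $\lfloor\frac{n-1}{m+1}\rfloor+1$. Specifically, I expect the hard step to be choosing the node $t$ and the column set $J$ simultaneously: I need $t$ low enough that many whole columns are still incomplete (so their future probes distinguish labels), yet high enough that $|J|$ columns are already represented. The factor $\frac{1}{m+1}$ strongly suggests that one groups columns into blocks of size $m+1$ so that representatives in \emph{different} blocks are guaranteed to be non-adjacent and to have disjoint probe-neighbourhoods, while within a block the window of width $m$ may cause interference; getting this block decomposition to interact correctly with the tree structure of $T$, and verifying via Properties~4--7 that each block's probe genuinely separates labels, is where the real work lies. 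Once that separation is established, the bound follows by counting the $\lfloor\frac{n-1}{m+1}\rfloor+1$ blocks, each contributing a fresh label at $t$.
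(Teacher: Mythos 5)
Your proposal has the right high-level shape (work in the expression tree, find one node at which vertices from lines spaced $m+1$ apart are forced to carry distinct labels), but it is missing the mechanism that actually forces distinct labels, and it assigns the boundary sets a role they cannot play. The missing idea is the existence of \emph{pending} edges. To force two same-labelled vertices of $G_t$ apart you need, for each chosen representative, an edge of $G$ incident to it that is created \emph{above} $t$; a vertex all of whose incident edges are already present in $G_t$ can share its label with anything, so nothing distinguishes it. Your choice of $t$ --- a lowest node whose vertex set merely \emph{meets} the columns indexed by $J$ --- gives no such guarantee: a column of $J$ may already be completely built inside $G_t$, together with all edges incident to your representative. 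The paper manufactures pending edges by a different choice of node: $x$ is a \emph{deepest $\oplus$ node such that $G(x)$ contains an entire row or an entire column}. If, say, an entire column lies in $G(x)$, then Property 3 puts a vertex of $G(x)$ in \emph{every row}; by depth-minimality no row lies entirely inside one child of $x$, so every row is split across the two children (and possibly the outside); and row-connectivity (Property 4) then yields in every row $i$ two adjacent vertices $v_i,w_i$ of different ``colours'', i.e.\ an edge of $G$ that is \emph{not} yet present in $G(x)$, with $v_i\in G(x)$. Note also that the completed line and the distinguished vertices live in \emph{transversal} directions (a full column forces pending edges in every row), whereas your proposal tries to take both in the column direction, which cannot work.

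The second error is the ``probe'' reading of Properties 6--7. All of Properties 6--8 are purely negative: they forbid adjacencies and never force any, so a graph satisfying the hypotheses may have no edges at all between the boundary vertices $V_{k,0},V_{0,k}$ and your representatives; the ``required distinct adjacencies'' your twin argument appeals to need not exist. In the actual proof the boundary sets enter only through their size: since $|V_{k,0}|\leq 1$ (Property 1), at least one endpoint of the pending edge $v_kw_k$ of row $k$ lies in $R_k\setminus V_{k,0}$, and Properties 6 and 8 make that endpoint non-adjacent to both $v_i$ and $w_i$ whenever $k>i+m$. Hence if $v_i$ and $v_k$ shared a label at $x$, then the join above $x$ that creates $v_iw_i$, or the one that creates $v_kw_k$, would also create an edge that is absent in $G$ --- a contradiction. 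Applying this to rows $1,\,(m+1)+1,\,2(m+1)+1,\ldots$ gives the $\lfloor\frac{n-1}{m+1}\rfloor+1$ distinct labels. So your block/spacing intuition is correct, but without the deepest-$\oplus$-node-with-a-complete-line choice and the resulting colouring/connectivity argument, the claimed label separation does not follow.
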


\begin{proof}
Fix integers $n,m$ with $m\geq 0$ and $n > m+1$, and let $G$ be a graph with a partition as described above.  
For $i>0$ we let $R_i=\cup^n_{j=0}V_{i,j}$ be a {\em row} of $G$ and for $j>0$ we let
$C_j=\cup^n_{i=0}V_{i,j}$ be a {\em column} of $G$.  Note that $G[R_i]$ and
$G[C_j]$ are non-empty by Property~\ref{prop:v_ij-nonempty}. They are connected graphs by Properties~\ref{prop:row-connected} and~\ref{prop:column-connected}, respectively.

Consider a $k$-expression for $G$. We will show that $k \geq
\lfloor\frac{n-1}{m+1}\rfloor+1$.
As stated in Section~\ref{sec:prelim}, this $k$-expression can be
represented by a rooted tree $T$, whose leaves correspond to the operations of
vertex creation and whose internal nodes correspond to the other three
operations (see \figurename~\ref{fig:cwtree} for an example). 
We denote the subgraph of $G$ that corresponds to
the  subtree of $T$ rooted at node $x$ by~$G(x)$.
Note that $G(x)$ may not be an induced subgraph of~$G$ as missing edges can be added by operations 
corresponding to $\eta_{i,j}$ nodes higher up in $T$. 

Let $x$ be a deepest (i.e. furthest from the root) $\oplus$ node in $T$ such that $G(x)$ contains an entire row or
an entire column of $G$ (the node $x$ may not be unique). Let $y$ and $z$ be the children of~$x$ in~$T$. Colour all vertices
in~$G(y)$ blue and all vertices in $G(z)$ red. Colour all remaining vertices of $G$
yellow. Note that a vertex of $G$ appears in $G(x)$ if and only if it is
coloured either red or blue and that there is no edge in $G(x)$ between a red and a blue vertex.
Due to our choice of $x$, $G$ contains a row or a column none of whose vertices are
yellow, but no row or column of $G$ is entirely blue or entirely red.
Without loss of generality, assume that $G$ contains a non-yellow column.

Because $G$ contains a non-yellow column, each row of~$G$ contains a non-yellow vertex, by Property~\ref{prop:v_ij-nonempty}.
Since no row is entirely red or entirely blue, every row of~$G$ is therefore coloured
with at least two colours. Let $R_i$ be an arbitrary row.
Since $G[R_i]$ is connected, there must be two
adjacent vertices $v_i,w_i \in R_i$ in~$G$, such that $v_i$ is either red or blue and
$w_i$ has a different colour than $v_i$. Note that~$v_i$ and $w_i$ are therefore not adjacent in $G(x)$
(recall that if $w_i$ is yellow then it is not even present as a vertex of $G(x)$).

Now consider indices $i,k\geq 1$ with $k>i+m$.  
By Properties~\ref{prop:v_k0-nbrs} and~\ref{prop:v_ij-nbrs}, no vertex of $R_i$ is adjacent to
a vertex of $R_{k}\setminus V_{k,0}$ in $G$.  Therefore, since $|V_{k,0}|\leq
1$ by Property~\ref{prop:v_i0-small}, we conclude that either $v_i$ and $w_i$  are not  adjacent to $v_k$ in $G$, or 
$v_i$ and~$w_i$ are not adjacent to $w_k$ in $G$.
In particular, this implies that $w_i$ is not adjacent to~$v_k$ in $G$ or that $w_k$ is not adjacent to~$v_i$ in $G$.
Recall that~$v_i$ and $w_i$ are adjacent in $G$ but not in $G(x)$, and the same holds for $v_k$ and $w_k$.
Hence, a~$\eta_{i,j}$ node higher up in the tree, makes $w_i$ adjacent to $v_i$ but not to $v_k$, or makes 
$w_k$ adjacent to~$v_k$ but not to $v_i$.
This means that $v_i$ and $v_k$ must have different labels in~$G(x)$.
We conclude that
$v_1,v_{(m+1)+1},v_{2(m+1)+1},v_{3(m+1)+1},\ldots,v_{\left(\left\lfloor\frac{n-1}{m+1}\right\rfloor\right)(m+1)+1}$
must all have different labels in $G(x)$. Hence, the $k$-expression of $G$ uses at least
$\lfloor\frac{n-1}{m+1}\rfloor+1$ 
labels.\qed
\end{proof}

We now use Theorem~\ref{thm:generalunbounded} to determine two new graph classes 
that have unbounded clique-width. 

\newpage
\begin{theorem}\label{thm:P_6diamond}
The class of $(P_6,\overline{2P_1+P_2})$-free graphs has unbounded
clique-width.
\end{theorem}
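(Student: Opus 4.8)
The plan is to apply Theorem~\ref{thm:generalunbounded} by exhibiting, for each sufficiently large parameter, a $(P_6,\overline{2P_1+P_2})$-free graph whose vertex set admits a partition $\{V_{i,j}\}$ satisfying the eight properties. Since $\overline{2P_1+P_2}$ is a small fixed graph (the complement of two isolated vertices plus an edge, which is $P_2+P_2$ complemented, i.e. a diamond-like graph on four vertices), the construction must be engineered so that every induced $P_6$ is destroyed and no induced $\overline{2P_1+P_2}$ appears. First I would take as a skeleton the natural ``grid-like'' family that Theorem~\ref{thm:generalunbounded} is designed for, placing one vertex in each cell $V_{i,j}$ for $1\le i,j\le n$, together with the boundary singletons $V_{i,0}$ and $V_{0,j}$ that serve as the ``row'' and ``column'' indicators enforcing Properties~\ref{prop:v_k0-nbrs} and~\ref{prop:v_0k-nbrs}.

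The key design decision is the adjacency pattern, which I would choose with $m$ as small as possible (ideally $m=1$, giving clique-width at least $\lfloor\frac{n-1}{2}\rfloor+1\to\infty$) so that Property~\ref{prop:v_ij-nbrs} forces edges only between cells whose indices differ by at most $m$ in each coordinate. The interior vertices would be joined to make each row $R_i$ and each column $C_j$ connected (Properties~\ref{prop:row-connected}, \ref{prop:column-connected}), typically by making consecutive cells within a row and within a column adjacent. To guarantee $\overline{2P_1+P_2}$-freeness I expect to need the graph to be ``locally dense'' in a controlled way: since the complement of $2P_1+P_2$ is a graph on four vertices containing a triangle plus a pendant-type structure, forbidding it as an induced subgraph is a strong constraint that typically means the graph cannot contain four vertices inducing that pattern. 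I would verify this by a short case analysis on how four vertices can be distributed among the cells, using the bounded-range adjacency (Property~\ref{prop:v_ij-nbrs}) to limit which quadruples can possibly be adjacent at all. The $P_6$-freeness would be checked similarly: any induced path must stay within a bounded index window in each direction, so a long induced path is impossible once the horizontal and vertical ``spread'' of a path is controlled by the locality of edges.

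The main obstacle, and where I would concentrate the real work, is reconciling the two forbidden subgraphs simultaneously. Making the graph dense enough to kill long induced paths ($P_6$) tends to create the dense $4$-vertex pattern $\overline{2P_1+P_2}$, while sparsifying to avoid $\overline{2P_1+P_2}$ tends to permit long induced paths. The resolution is almost certainly a carefully chosen adjacency rule between cells and their boundary markers: I anticipate making each interior vertex of $V_{i,j}$ adjacent to the boundary vertices $V_{i,0}$ and $V_{0,j}$ (consistent with Properties~\ref{prop:v_k0-nbrs}, \ref{prop:v_0k-nbrs} reading as $i\le k$, $j\le k$), and then tuning the interior-to-interior edges so that any potential induced $P_6$ is ``shortcut'' through a boundary marker or a nearby cell, while any potential $\overline{2P_1+P_2}$ is blocked by a missing edge dictated by the locality constraint.

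Concretely, the proof would proceed as follows: (1) define the graph $G_n$ explicitly by listing the cells and the adjacency rules; (2) verify Properties~\ref{prop:v_i0-small}--\ref{prop:v_ij-nbrs} of Theorem~\ref{thm:generalunbounded} one by one, which are mostly immediate from the construction; (3) prove $G_n$ is $\overline{2P_1+P_2}$-free by arguing that any four vertices forming that pattern would require an edge violating the index-locality of Property~\ref{prop:v_ij-nbrs} or the monotonicity of the boundary adjacencies; (4) prove $G_n$ is $P_6$-free by bounding the index-window any induced path can occupy and showing six vertices cannot fit; and (5) invoke Theorem~\ref{thm:generalunbounded} to conclude $\cw(G_n)\ge\lfloor\frac{n-1}{m+1}\rfloor+1$, which is unbounded as $n\to\infty$. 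Steps~(3) and~(4) are the substantive ones; I would expect each to reduce to a finite, if slightly tedious, case distinction driven entirely by the locality parameter $m$ and the boundary-marker monotonicity.
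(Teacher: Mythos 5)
There is a genuine gap: your proposal is a plan whose two substantive steps are never carried out, and the central idea of the plan --- directly constructing a $(P_6,\overline{2P_1+P_2})$-free graph that itself satisfies the eight properties of Theorem~\ref{thm:generalunbounded} --- runs into exactly the tension you identify but do not resolve. Note first that $\overline{2P_1+P_2}$ is the diamond ($K_4$ minus an edge), so triangle-freeness suffices for one half; the hard half is $P_6$-freeness, and your argument for it is backwards. Locality of edges (Property~\ref{prop:v_ij-nbrs} with small $m$) does \emph{not} prevent long induced paths: an induced $P_6$ needs only six vertices, which fit comfortably inside any bounded index window --- indeed grids, the most ``local'' construction of all, are full of induced $P_6$'s. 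Concretely, with singleton cells and $m=0$, Properties~\ref{prop:row-connected} and~\ref{prop:column-connected} force each boundary vertex to be adjacent to every cell of its row (resp.\ column), and with the natural monotone rule ($b_k\sim v_{i,j}$ iff $i\le k$, $w_k\sim v_{i,j}$ iff $j\le k$) the graph is triangle-free but contains induced $P_6$'s such as $v_{3,1}\,w_1\,v_{2,1}\,b_2\,v_{1,2}\,b_1$; making the construction denser to kill such paths creates diamonds. Your proposal offers no adjacency rule that escapes this, only the hope that one exists.

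The paper escapes it by a mechanism your proposal misses entirely: the graph fed into Theorem~\ref{thm:generalunbounded} is \emph{not} in the target class. The paper builds an auxiliary graph $G_n$ with three vertices $b_{i,j},r_{i,j},w_{i,j}$ per interior cell and $m=0$ (this $G_n$ contains induced $P_6$'s, e.g.\ $b_{1,1}\,r_{1,1}\,w_{1,1}\,b_1\,w_{1,2}\,r_{1,2}$), concludes $\cw(G_n)\ge n$ from the theorem, and then applies a single bipartite complementation between the sets $B_2=\{b_{i,j}\}$ and $W_2=\{w_{i,j}\}$. By Fact~\ref{fact:bip} this preserves unboundedness of clique-width, and the resulting graphs $H_n$ --- which violate Property~\ref{prop:v_ij-nbrs} wholesale, so the theorem is never applied to them --- are then shown to be $(P_6,\overline{2P_1+P_2})$-free by a \emph{global} structural argument (degree and neighbourhood constraints for the diamond; for $P_6$, the fact that $H_n[B_1\cup W_2]$ and $H_n[B_2\cup W_1]$ are $2P_2$-free chain-like bipartite graphs, plus case analysis). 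So the decisive idea is ``construct outside the class, then transfer into the class by a clique-width-preserving operation,'' not ``tune a construction inside the class,'' and without it your steps (3)--(5) cannot be completed as described.
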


\begin{proof}
Let $n\geq 1$ be an integer. 
Using the notation of Theorem~\ref{thm:generalunbounded}, we construct a
graph~$G_n$ as follows. We define vertex subsets
\begin{align*}
V_{0,0}&=\emptyset\\
V_{i,0}&=\{b_i\}\; \text{for}\; i\geq 1\\
V_{0,j}&=\{w_j\}\; \text{for}\; j\geq 1\\
V_{i,j}&=\{b_{i,j},r_{i,j},w_{i,j}\}\; \text{for}\; i,j\geq 1.
\end{align*}
We define edge subsets 
\begin{align*}
E_1&=\{b_{i,j}r_{i,j}, r_{i,j}w_{i,j}\; |\; i,j \in \{1,\ldots,n\}\}\\
E_2&=\{b_kw_{i,j}\; |\; i,j,k \in \{1,\ldots,n\}, i\leq k\}\\
E_3&=\{w_kb_{i,j}\; |\; i,j,k \in \{1,\ldots,n\}, j\leq k\}.
\end{align*}
Let $V(G_n)$ be the union of the sets $V_{i,j}$ for
$i,j \in \{0,\ldots,n\}$, and let $E(G_n)=E_1\cup E_2 \cup E_3$. By
Theorem~\ref{thm:generalunbounded} with 
$m=0$, the graph $G_n$ has clique-width at least~$n$.

We now define the sets 
\begin{align*}
B_1&=\{b_i\; |\; i \in \{1,\ldots,n\}\}\\
W_1&=\{w_j\; |\; j \in \{1,\ldots,n\}\}\\
B_2&=\{b_{i,j}\; |\; i,j \in \{1,\ldots,n\}\}\\
R_2&=\{r_{i,j}\; |\; i,j \in \{1,\ldots,n\}\}\\
W_2&=\{w_{i,j}\; |\; i,j \in \{1,\ldots,n\}\}.
\end{align*}
Let $H_n$ be the graph obtained from $G_n$ by complementing the edges between
$B_2$ and $W_2$. By Fact~\ref{fact:bip}, the class of graphs $\{H_n\}_{n\geq 1}$ has
unbounded clique-width.
Note that $H_n[B_1 \cup W_2]$ and $H_n[B_2 \cup W_1]$ are $2P_2$-free bipartite graphs.
We claim that every $H_n$ is
$(P_6,\overline{2P_1+P_2})$-free.

First we show that $H_n$ is $(\overline{2P_1+P_2})$-free.
For contradiction, suppose that $\overline{2P_1+P_2}$ is present as an induced subgraph.
Consider one of the vertices of degree 3 in the $\overline{2P_1+P_2}$. It
cannot be in~$B_1$ or~$W_1$ since those vertices have neighbourhoods that are
independent sets. It cannot be a vertex in~$R_2$, since those vertices have
degree 2. Therefore one of these vertices must be in $B_2$ and the other in
$W_2$. Therefore the other two vertices in the diamond must both be in $R_2$,
which is a contradiction, since every vertex in $B_2$ has a unique neighbour
in~$R_2$. Therefore $H_n$ is indeed $(\overline{2P_1+P_2})$-free.

We now show that $H_n$ is $P_6$-free.
For contradiction, suppose that $P_6$ is present as an induced subgraph. We will first show that no
vertex of the $P_6$ may contain a vertex of $R_2$. Indeed, if one of the
vertices in the $P_6$ is in $R_2$, it must be an end-vertex of the $P_6$ (since
the neighbourhood of any vertex in $R_2$ induces a $P_2$, but $P_6$ does not
contain a $K_3$).  Let $x_1,\ldots,x_6$ be the vertices of the $P_6$, in order.
Note that $x_2,x_3,x_4,x_5 \not \in R_2$. Suppose that $x_1 \in R_2$. Without
loss of generality, we may assume $x_2 \in W_2$. If $x_3 \in B_1$, then we must
have $x_4 \in W_2$. But then there is no possible choice for $x_5$: we cannot
have $x_5 \in R_2$ (as noted above), we cannot have $x_5 \in B_2$ (since then
$x_2$ would be adjacent to~$x_5$) and we cannot have $x_5 \in B_1$, since then
$x_6$ would be in~$W_2$ and $H_n[x_2,x_3,x_5,x_6]$ would be a~$2P_2$,
contradicting the fact that $H_n[B_1 \cup W_2]$ is a 
$2P_2$-free bipartite graph.
Thus if $x_1 \in R_2$, $x_2 \in W_2$ then $x_3 \in B_2$ (since every vertex in
$W_2$ has a unique neighbour in $R_2$). Now $x_4 \not \in W_1$ (otherwise~$x_5$
would be in $B_2$, which would mean that $x_2$ would be adjacent to~$x_5$) and
$x_4 \not \in R_2$ (as explained above), so $x_4 \in W_2$. But this cannot
happen, since $x_5 \not \in R_2$ (as explained above), $x_5 \not \in B_2$
(since $x_5$ is not adjacent to $x_2$), so $x_5 \in B_1$, so $x_6 \in W_2$,
contradicting the fact that~$x_3$ and $x_6$ are not adjacent. We conclude that
no $P_6$ in $H_n$ can include a vertex of $R_2$.

By symmetry, any induced $P_6$ must therefore contain at least three vertices
in $W_1 \cup B_2$. In this case, it must have at least two vertices in $B_2$
since $W_1$ is an independent set. If the~$P_6$ also has a vertex in $W_2$ then it
must have exactly one vertex in $W_2$, two in $B_2$, none in $B_1$ and three in
$W_1$, which is impossible, by a parity argument. Thus the whole of the~$P_6$
must be contained in $H_n[W_1 \cup B_2]$, which leads to $H_n[W_1 \cup B_2]$
containing a $2P_2$, which contradicts the fact that $H_n[W_1 \cup B_2]$ is
a~$2P_2$-free bipartite graph. This completes the proof.\qed
\end{proof}

\begin{theorem}\label{thm:P_6P_2+P_43P_2gem}
The class of $(3P_2,P_2+P_4,P_6,\overline{P_1+P_4})$-free graphs has unbounded
clique-width.
\end{theorem}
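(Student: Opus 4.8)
The plan is to follow exactly the template of the proof of Theorem~\ref{thm:P_6diamond}: first build a family $\{G_n\}_{n\geq 1}$ fitting the partition hypotheses of Theorem~\ref{thm:generalunbounded} (so that $\cw(G_n)$ grows without bound), then apply a bounded number of subgraph complementations and bipartite complementations to obtain a family $\{H_n\}$ that, by Facts~\ref{fact:comp} and~\ref{fact:bip}, still has unbounded clique-width; the remaining work is to verify that every $H_n$ is $(3P_2,P_2+P_4,P_6,\overline{P_1+P_4})$-free. A natural starting point is the construction already used for Theorem~\ref{thm:P_6diamond}: its graphs $H_n$ are $P_6$-free and $(\overline{2P_1+P_2})$-free, i.e. diamond-free. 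Since the gem $\overline{P_1+P_4}$ contains an induced diamond (the apex together with the two endpoints of the inner $P_4$ and one of its neighbours induces $\overline{2P_1+P_2}$), diamond-freeness already yields $\overline{P_1+P_4}$-freeness. Thus two of the four exclusions, namely $P_6$ and the gem, come essentially for free from a construction of that type, and the whole problem reduces to additionally enforcing $3P_2$-freeness and $P_2+P_4$-freeness without destroying the other two.

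For the gem I would keep the invariant that \emph{every} neighbourhood $G[N(v)]$ induces a cluster graph (a disjoint union of cliques), which is equivalent to diamond-freeness; this is easy to maintain if the core between the two ``inner'' classes $B_2$ and $W_2$ is made complete bipartite and all other adjacencies are cell-local. For $3P_2$ the key observation is that a complete bipartite core has induced matching number $1$ (any two of its edges are joined by a core edge), so the only way to obtain three pairwise non-adjacent edges is through the auxiliary/independent vertices of the construction (the per-cell ``middle'' vertices, which in the Theorem~\ref{thm:P_6diamond} construction supply three independent edges and hence an induced $3P_2$). The remedy is to densify that auxiliary class, for instance by a further subgraph complementation turning it into a clique: then a short argument — splitting a hypothetical induced $3P_2$ according to whether its three edges touch $W_2$, touch $B_2$, or touch neither, and using that (i) the core is complete bipartite, (ii) the auxiliary class is a clique, and (iii) each marker class $B_1,W_1$ meets the core in a threshold (chain) graph, hence contributes no two independent edges — shows that no three independent edges exist. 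The same three structural facts, together with $P_6$-freeness, should also drive the $P_2+P_4$ analysis.

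The hard part, and the main obstacle, is reconciling these densifications with $P_6$-freeness (and $P_2+P_4$-freeness). The very features introduced to suppress induced matchings — a complete bipartite core, a clique on the auxiliary vertices, and the nested/threshold neighbourhoods of the markers — also create new ways to route long induced paths: a chain graph already carries an induced $P_4$ between $B_2$ and a marker class, and a single clique edge among the auxiliary vertices together with the cell-matching edges can extend it to an induced $P_6$. So the construction must be tuned carefully (by restricting how the marker neighbourhoods interact with the auxiliary class, or by thinning the auxiliary vertices per cell), and the technical heart of the proof is a case analysis of a putative induced $P_6$ (respectively $P_2+P_4$) that classifies each of its six vertices by the partition classes $B_1,W_1,B_2,W_2$ and the auxiliary class, and rules out every pattern using that the core admits at most an induced $P_3$ among core vertices, that $B_1$ and $W_1$ are independent with inclusion-ordered neighbourhoods, and that the auxiliary clique contributes at most one edge to any induced path. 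I expect this path/matching case analysis — rather than the unbounded-clique-width part, which is immediate from Theorem~\ref{thm:generalunbounded} — to be by far the most laborious and delicate step.
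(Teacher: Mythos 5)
Your proposal has the right outer framework --- build a family satisfying the hypotheses of Theorem~\ref{thm:generalunbounded}, transfer unboundedness via Facts~\ref{fact:comp} and~\ref{fact:bip}, then do a freeness case analysis --- and this is indeed the framework of the paper. But the proposal stops exactly where the theorem's actual content begins: you never fix a concrete construction, and the one you sketch does not work. Take the $H_n$ of Theorem~\ref{thm:P_6diamond} (whose $B_2$--$W_2$ core is already complete bipartite, since $G_n$ has no $B_2$--$W_2$ edges before complementation) and turn $R_2$ into a clique, as you propose. Then for $n\geq 4$ the vertices $r_{3,3},\, r_{1,2},\, b_{1,2},\, w_{4,4},\, b_{2,1},\, w_1$ induce a $P_6$, in that order: the path uses the new clique edge $r_{3,3}r_{1,2}$, the cell edge $r_{1,2}b_{1,2}$, two core edges, and the marker edge $b_{2,1}w_1$ (present since $1\leq 1$), while $w_1\not\sim b_{1,2}$ (since $2>1$) and all other pairs are non-adjacent by construction. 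This is precisely the failure mode you flag yourself (``a single clique edge among the auxiliary vertices together with the cell-matching edges can extend it to an induced $P_6$''), and deferring its resolution to ``careful tuning'' leaves unproven the part of the theorem that is genuinely hard; the Theorem~\ref{thm:generalunbounded} part is, as you say, immediate.

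The paper does not tune this construction; it uses one of a different shape. Each cell $V_{i,j}$ is a \emph{single} vertex $x_{i,j}$ --- there is no auxiliary class $R_2$ at all --- the marker classes $B=\{b_i\}$ and $W=\{w_j\}$ are made \emph{cliques}, $X$ is independent, $B$ and $W$ are non-adjacent, and markers attach to cells by the nested staircase rule ($b_k\sim x_{i,j}$ iff $i\leq k$, and $w_k\sim x_{i,j}$ iff $j\leq k$). No complementations are applied; the graphs $G_n$ themselves form the family, and all four freeness arguments are short case analyses resting on three facts: $B$ and $W$ are cliques, $X$ is independent, and no two vertices of $B$ (respectively $X$) can each have a private neighbour in $X$ (respectively $B$). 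Note in particular that your guiding invariant --- keep the graphs diamond-free so that gem-freeness comes for free --- is incompatible with this solution: the paper's $G_n$ \emph{contains} diamonds (e.g.\ $G_n[b_1,b_2,x_{1,1},x_{1,2}]$ is $\overline{2P_1+P_2}$, since both $x$'s are adjacent to both $b$'s), and $\overline{P_1+P_4}$-freeness is proved directly instead. (A small side point: your justification that the gem contains a diamond picks the wrong vertices --- the apex with the two endpoints of the $P_4$ and one of their neighbours induces a paw, not a diamond; one should take the apex together with three \emph{consecutive} vertices of the $P_4$. The claim itself is true, but the paper never needs it.)
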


\begin{proof}
Let $n\geq 1$ be an integer. 
Using the notation of Theorem~\ref{thm:generalunbounded}, we construct a
graph~$G_n$ as follows. We define vertex subsets
\begin{align*}
V_{0,0}&=\emptyset\\
V_{i,0}&=\{b_i\}\; \text{for}\; i\geq 1\\
V_{0,j}&=\{w_j\}\; \text{for}\; j\geq 1\\
V_{i,j}&=\{x_{i,j}\}\; \text{for}\; i,j\geq 1.
\end{align*}
We define edge subsets 
\begin{align*}
E_1&=\{b_ib_j\; |\; i,j \in \{1,\ldots,n\}, i \neq j\}\\
E_2&=\{w_iw_j\; |\; i,j \in \{1,\ldots,n\}, i \neq j\}\\
E_3&=\{b_kx_{i,j}\; |\; i,j,k \in \{1,\ldots,n\}, i\leq k\}\\
E_4&=\{w_kx_{i,j}\; |\; i,j,k \in \{1,\ldots,n\}, j\leq k\}.
\end{align*}
Let $V(G_n)$ be the union of the sets $V_{i,j}$ for $i,j \in \{0,\ldots,n\}$, and let
$E(G_n)=E_1\cup E_2 \cup E_3 \cup E_4$. By Theorem~\ref{thm:generalunbounded} with
$m=0$, the graph $G_n$ has clique-width at least $n$.

We define the sets
\begin{align*}
B&=\{b_i\; |\; i \in \{1,\ldots,n\}\}\\
W&=\{w_i\; |\; i \in \{1,\ldots,n\}\}\\
X&=\{x_{i,j}\; |\; i,j \in \{1,\ldots,n\}\}.
\end{align*}
Note that two vertices in $B$ (respectively $X$) cannot each have private
neighbours in $X$ (respectively~$B$). (When considering a pair of vertices
$v_1,v_2$, a {\em private neighbour} of $v_1$ is a vertex adjacent to $v_1$,
but not  to $v_2$.)
We will show that every~$G_n$ is $(3P_2,P_2+P_4,P_6,\overline{P_1+P_4})$-free.

First we show that $G_n$ is $(3P_2)$-free.
For contradiction, suppose that $G_n$ contains an induced $3P_2$.
Then, since $X$ is an independent set
and both $B$ and~$W$ are cliques, at most one of the $P_2$ components could
occur in each of $G_n[B \cup X]$ and $G_n[W \cup X]$. Since no vertex of $B$ is
adjacent to a vertex of~$W$, we find that~$G_n$ therefore cannot contain an induced $3P_2$.

We now show that $G_n$ is $(P_2+P_4)$-free.
For contradiction, suppose that $G_n$ contains an induced $P_2+P_4$. Since $X$ is an
independent set, we may assume that the $P_4$ contains at least one vertex
of~$B$.  The $P_4$ can have at most two vertices in~$B$ and if it has two such
vertices, one of them must be the end-vertex of the $P_4$; otherwise the two
vertices in $B$ would each have a private neighbour in $X$ 
which cannot happen.
Thus if the $P_4$ has a vertex in $B$ then it must have a vertex in~$X$ and
another in $W$ (since $X$ is an independent set). Thus the $P_4$ must have both
a vertex in $B$ and a vertex in $W$.
Then an independent $P_2$ cannot be found
since $B$ and $W$ are cliques and $X$ is an independent set.

We now show that $G_n$ is $P_6$-free.
For contradiction, suppose that $G_n$ contains an induced~$P_6$. Any~$P_6$ can contain at most
two vertices of $B$ (respectively~$W$), at most one of which can be adjacent to any
vertex of $X$ in the $P_6$. Let $v_1,\ldots,v_6$ be the vertices of the $P_6$
in order.  If the $P_6$ contains two vertices of~$B$ (respectively~$W$), then these
two vertices must be adjacent and one of them must be an end-vertex of the
$P_6$. In this case, assume without loss of generality that $v_1,v_2\in B$.
Then $v_3 \in X$, so $v_4 \in W$. Since $v_4$ is a middle-vertex of the~$P_6$,
neither $v_5,v_6 \not \in W$. This means $v_5,v_6 \in X$, which cannot happen
since $X$ is an independent set. This contradiction means that at most one
vertex of the~$P_6$ can be in each of $B$ and $W$, so at least four vertices of
the $P_6$ are members of~$X$. This is impossible since~$X$ is an independent
set. Thus $G_n$ is indeed $P_6$-free.

Finally, we show that $G_n$ is $(\overline{P_1+P_4})$-free.
For contradiction, suppose that~$G_n$ contains and induced $\overline{P_1+P_4}$. If the
dominating vertex of the $\overline{P_1+P_4}$ is in~$X$ then, since no vertex
in $B$ is adjacent to a vertex in $W$, the other vertices must either be all in
$B$ or all be in $W$, which is a contradiction. Thus the dominating vertex must
be (without loss of generality) in~$B$ and the other vertices in the
$\overline{P_1+P_4}$ must therefore all be in $B \cup X$. At most two of the
other vertices can be in~$X$ (since $X$ is an independent set and
$\overline{P_4}$ has independence number~2) and at most two of them can be in
$B$ (since $B$ is a clique). So exactly three vertices of the
$\overline{P_1+P_4}$ must be in $B$ and two must be in $X$. Since $X$ is an
independent set and $B$ is a clique, the two vertices in $X$ must be 
the two vertices of degree~2
in the $\overline{P_1+P_4}$. However, this means that each of these two
vertices in $X$ has a 
private neighbour in $B$, which is a contradiction. This
shows that $G_n$ is indeed $(\overline{P_1+P_4})$-free, which completes the
proof.\qed
\end{proof}

\section{Classifying Classes of $(H_1,H_2)$-Free Graphs}\label{sec:twographs}

In this section we study the boundedness of clique-width of classes of graphs defined by two
forbidden induced subgraphs. 
Recall that this study is partially motivated
by the fact that it is
easy to obtain a full classification for the boundedness of clique-width of
graph classes defined by one forbidden induced subgraph, as shown in the next
theorem.
This classification does not seem to have previously been explicitly stated in the literature.

\begin{theorem}\label{thm:single}
Let $H$ be a graph. The class of $H$-free graphs has bounded clique-width if and only if~$H$ is an induced subgraph of $P_4$.
\end{theorem}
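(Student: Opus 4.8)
The plan is to prove both directions of the equivalence. For the forward direction, I would argue the contrapositive: if $H$ is \emph{not} an induced subgraph of $P_4$, then the class of $H$-free graphs has unbounded clique-width. The key observation is that the class of \emph{all} graphs has unbounded clique-width, so it suffices to find, for each such $H$, a known family of unbounded clique-width all of whose members are $H$-free. I would split into two cases according to whether $H \in \mathcal{S}$ (the class whose components are paths or subdivided claws). If $H \notin \mathcal{S}$, then Lemma~\ref{l-classS} applies directly: the class of $H$-free graphs has unbounded clique-width. If $H \in \mathcal{S}$ but $H \not\ssi P_4$, then I need a separate witness family. Since $H \not\ssi P_4$ and $H \in \mathcal{S}$, the graph $H$ must contain as an induced subgraph a small graph on which I can pin an unbounded construction; candidates for such a ``minimal obstruction'' in $\mathcal{S}$ are $P_5$, $2P_2$, or $K_{1,3}$.

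For the backward direction, I would show that if $H \ssi P_4$ then the class of $H$-free graphs has bounded clique-width. Because $H \ssi P_4$ implies every $P_4$-free graph is also $H$-free, it suffices to treat the largest case $H = P_4$. The class of $P_4$-free graphs (cographs) is well known to have clique-width at most $2$, and this can be proved directly from the cotree decomposition: every cograph is built from single vertices by disjoint union and join, and both operations are expressible with at most two labels. Since bounded clique-width is inherited by subclasses, the cases $H \ssi P_4$ all follow.

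The main obstacle is the case $H \in \mathcal{S}$ with $H \not\ssi P_4$, where Lemma~\ref{l-classS} gives nothing. Here I would identify precisely which minimal graphs in $\mathcal{S}$ fail to embed in $P_4$. A linear forest in $\mathcal{S}$ that is not an induced subgraph of $P_4$ must induce one of $P_5$, $2P_2$, or $P_1 + P_3$; and a component that is a genuine subdivided claw contains an induced $K_{1,3}$. So it is enough to exhibit families of unbounded clique-width that avoid each of these as an induced subgraph. For $K_{1,3}$-free graphs, the line graphs of walls (equivalently, suitable subdivided structures) provide an unbounded-clique-width family of claw-free graphs. For the $P_5$, $2P_2$, and $P_1+P_3$ cases, one can use a family of dense graphs—for instance, complements of disjoint unions of suitable graphs, or bipartite complement constructions—that are simultaneously free of the relevant small pattern while retaining unbounded clique-width; alternatively, I would appeal to already-established unbounded families such as those underlying the constructions in Section~\ref{sec:unbounded} or the split/threshold-type graphs known to have large clique-width. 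The delicate point is verifying the forbidden-subgraph condition for each witness, so I would organise the argument around the smallest $H$ in each case and use that every larger $H$ merely enlarges the $H$-free class, preserving unboundedness.

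Assembling these, the proof reduces to: one direct invocation of Lemma~\ref{l-classS} for $H \notin \mathcal{S}$; a short finite analysis of the minimal obstructions in $\mathcal{S} \setminus \{H : H \ssi P_4\}$, each handled by a named unbounded family; and the elementary cograph bound for the forward implication. The cleanest presentation keeps the witness families explicit and checks each minimal obstruction only once.
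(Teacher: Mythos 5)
Your backward direction is fine and matches the paper (the class of $P_4$-free graphs has clique-width at most~$2$, and bounded clique-width passes to subclasses). The forward direction, however, has a genuine gap: your enumeration of the minimal obstructions inside $\mathcal{S}$ is wrong. You claim that a linear forest in $\mathcal{S}$ that is not an induced subgraph of $P_4$ must contain one of $P_5$, $2P_2$ or $P_1+P_3$, and that subdivided-claw components yield $K_{1,3}$. This misses the edgeless case and its relatives: $3P_1$ lies in $\mathcal{S}$, is not an induced subgraph of $P_4$ (which has independence number~$2$), yet contains none of $P_5$, $2P_2$, $P_1+P_3$ or $K_{1,3}$; the same holds for $4P_1$ and $2P_1+P_2$. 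So for $H=3P_1$ your proof produces no witness family at all, and the case analysis collapses. The correct minimal obstruction set is $\{3P_1,\,2P_2\}$: a $(3P_1,2P_2)$-free graph in $\mathcal{S}$ is a forest of independence number at most~$2$, hence has at most four vertices, and inspection leaves only $P_1,2P_1,P_2,P_1+P_2,P_3,P_4$. With that repair your plan goes through with concrete witnesses: if $2P_2\ssi H$, the class of $H$-free graphs contains all split graphs, which have unbounded clique-width (the $(C_4,C_5,2P_2)$-free case cited in Theorem~\ref{thm:classification2}); if $3P_1\ssi H$, it contains the complements of all walls (walls are triangle-free, so their complements are $3P_1$-free), which have unbounded clique-width by Lemma~\ref{l-walls} and Fact~\ref{fact:comp}. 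This repair also cures the secondary weakness that your witness families for $P_5$, $P_1+P_3$ and $K_{1,3}$ are only gestured at (``dense graphs'', ``line graphs of walls'') rather than pinned to a lemma available in the paper.

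For comparison, the paper sidesteps the case analysis entirely: if the class of $H$-free graphs has bounded clique-width, then so does the class of $\overline{H}$-free graphs (Fact~\ref{fact:comp}), so Lemma~\ref{l-classS} forces both $H\in\mathcal{S}$ and $\overline{H}\in\mathcal{S}$. Since $\overline{H}$ is then a forest, it is $(K_3,C_4)$-free, which says exactly that $H$ is a $2P_2$-free forest with independence number at most~$2$, and the same six graphs fall out. Your approach, once fixed, is essentially an unfolding of this argument with explicit unbounded families; the paper's complementation trick packages the two true obstructions $3P_1$ and $2P_2$ (i.e.\ $K_3$ and $C_4$ in the complement) into a single application of Lemma~\ref{l-classS}.
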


\begin{proof}
First suppose that $H$ is an induced subgraph of $P_4$. Then the class of $H$-free graphs
is a subclass of the class of $P_4$-free graphs. The class of $P_4$-free
graphs is precisely the class of graphs of clique-width at most 2~\cite{CO00}.

Now suppose that $H$ is a graph such that the class of $H$-free graphs has bounded clique-width. By Fact~\ref{fact:comp}, the class of
$\overline{H}$-free graphs has bounded clique-width. By Lemma~\ref{l-classS}, $H,\overline{H} \in
\mathcal{S}$.
Since $\overline{H}\in \mathcal{S}$, the graph $\overline{H}$ must be
$(K_3,C_4)$-free. Thus~$H$ must be a $2P_2$-free forest whose maximum
independent set has size at most 2. Therefore $H$ must be one of the
following graphs: $P_1, 2P_1, P_1+P_2, P_2,P_3, P_4$. All these graphs are
induced subgraphs of $P_4$.\qed
\end{proof}

We are now ready to study classes of graphs defined by two forbidden induced
subgraphs.  Given four graphs $H_1,H_2,H_3,H_4$, we say that the class of
$(H_1,H_2)$-free graphs and the class of $(H_3,H_4)$-free graphs are {\em
equivalent} if the unordered pair $H_3,H_4$ can be obtained from the unordered
pair $H_1,H_2$ by some combination of the following operations:
\begin{enumerate}
\item complementing both graphs in the pair;
\item if one of the graphs in the pair is $K_3$,
replacing it with $\overline{P_1+P_3}$ or vice versa.
\end{enumerate} 
By Fact~\ref{fact:comp} and Lemma~\ref{l-paw}, if two classes are equivalent then one has
bounded clique-width if and only if the other one does.
Given this definition, we can now classify all classes defined by two forbidden
induced subgraphs for which it is known whether or not the clique-width is
bounded. This includes both the already-known results and 
our new results.
We will later show that (up to equivalence) this leaves only 13 open cases.

\begin{theorem}\label{thm:classification2}
Let ${\cal G}$ be a class of graphs defined by two forbidden induced subgraphs. Then:
\begin{enumeratei}
\item ${\cal G}$ has bounded clique-width if it is equivalent to a class of $(H_1,H_2)$-free graphs such that one of the following holds:
\begin{enumerate1}
\item \label{thm:classification2:bdd:P4} $H_1$ or $H_2 \ssi P_4$;
\item \label{thm:classification2:bdd:ramsey} $H_1=sP_1$ and $H_2=K_t$ for some $s,t$;
\item \label{thm:classification2:bdd:P_1+P_3} $H_1 \ssi P_1+P_3$ and $\overline{H_2} \ssi K_{1,3}+3P_1,\; K_{1,3}+P_2,\;\allowbreak P_1+\nobreak S_{1,1,2},\;\allowbreak P_6$ or $S_{1,1,3}$;
\item \label{thm:classification2:bdd:2P_1+P_2} $H_1 \ssi 2P_1+P_2$ and $\overline{H_2}\ssi 2P_1+P_3,\; 3P_1+P_2$ or $P_2+P_3$;
\item \label{thm:classification2:bdd:P_1+P_4} $H_1 \subseteq_i P_1+P_4$ and $\overline{H_2} \ssi P_1+P_4$ or $P_5$;
\item \label{thm:classification2:bdd:4P_1} $H_1 \subseteq_i 4P_1$ and $\overline{H_2} \ssi 2P_1+P_3$;
\item \label{thm:classification2:bdd:K_13} $H_1,\overline{H_2} \ssi K_{1,3}$.
\end{enumerate1}
\item ${\cal G}$ has unbounded clique-width if it is equivalent to a class of $(H_1,H_2)$-free graphs such that one of the following holds:
\begin{enumerate1}
\item \label{thm:classification2:unbdd:not-in-S} $H_1\not\in {\cal S}$ and $H_2 \not \in {\cal S}$;
\item \label{thm:classification2:unbdd:not-in-co-S} $\overline{H_1}\notin {\cal S}$ and $\overline{H_2} \not \in {\cal S}$;
\item \label{thm:classification2:unbdd:K_13or2P_2} $H_1 \si K_{1,3}$ or $2P_2$ and $\overline{H_2} \si 4P_1$ or $2P_2$;
\item \label{thm:classification2:unbdd:P_1+P_4} $H_1 \si P_1+P_4$ and $\overline{H_2} \si P_2+P_4$;
\item \label{thm:classification2:unbdd:2P_1+P_2} $H_1 \si 2P_1+P_2$ and $\overline{H_2} \si K_{1,3},\; 5P_1,\; P_2+P_4$ or $P_6$;
\item \label{thm:classification2:unbdd:3P_1} $H_1 \si 3P_1$ and $\overline{H_2} \si 2P_1+2P_2,\; 2P_1+P_4,\; 4P_1+P_2,\; 3P_2$ or $2P_3$;
\item \label{thm:classification2:unbdd:4P_1} $H_1 \si 4P_1$ and $\overline{H_2} \si P_1 + P_4$ or $3P_1+P_2$.
\end{enumerate1}
\end{enumeratei}
\end{theorem}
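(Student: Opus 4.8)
The plan is to prove Theorem~\ref{thm:classification2} by assembling it from results that are already available, rather than treating it as a single monolithic argument. The statement is a catalogue: each item under~(i) asserts boundedness for a family of pairs $(H_1,H_2)$, and each item under~(ii) asserts unboundedness. I would first observe that the \emph{equivalence} relation defined just above the theorem is the right quotient to work modulo: by Fact~\ref{fact:comp} (subgraph complementation preserves boundedness) complementing both graphs changes nothing, and by Lemma~\ref{l-paw} swapping $K_3$ with $\overline{P_1+P_3}$ changes nothing. So it suffices to establish each displayed condition for \emph{one} representative of its equivalence class, and the remaining representatives follow for free. I would state this reduction explicitly at the start of the proof.

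I would then dispatch the two parts item by item. For part~(i), the plan is to match each subcase to an existing boundedness result and to the induced-subgraph monotonicity principle that forbidding a \emph{larger} graph gives a \emph{superclass}, so it is enough to prove boundedness for the maximal $H_2$ (equivalently, the largest $\overline{H_2}$) in each line. Concretely: \ref{thm:classification2:bdd:P4} is Theorem~\ref{thm:single}; \ref{thm:classification2:bdd:ramsey} is an immediate Ramsey argument, since $(sP_1,K_t)$-free graphs have bounded size by the definition of $R(s,t)$ and hence trivially bounded clique-width; the $\overline{P_1+P_3}$-paired cases \ref{thm:classification2:bdd:P_1+P_3} invoke Lemma~\ref{l-paw} to replace $H_1\ssi P_1+P_3$ by $K_3$, after which the targets $K_{1,3}+3P_1$ and $P_1+S_{1,1,2}$ are exactly Theorem~\ref{thm:K_3s_111+3P_1} and Theorem~\ref{thm:K_3s_112+P_1}, while $K_{1,3}+P_2$, $P_6$ and $S_{1,1,3}$ come from the cited results of Dabrowski--Lozin--Raman--Ries and others; the bipartite-flavoured lines \ref{thm:classification2:bdd:2P_1+P_2}--\ref{thm:classification2:bdd:4P_1} are handled by complementing and applying Lemma~\ref{l-bipartite} together with Lemma~\ref{l-addit}, which was tailored precisely to recognise these $\overline{H_2}$; and \ref{thm:classification2:bdd:K_13} is a known $(K_{1,3},\overline{K_{1,3}})$-free (split-graph-like) result from the cited literature. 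Throughout I would reduce each line to its extremal member and cite the single theorem that closes it.

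For part~(ii), the strategy is dual: reduce to the \emph{smallest} forbidden subgraphs, since forbidding a \emph{smaller} graph gives a \emph{subclass}, and an unbounded subclass forces the whole class to be unbounded. Lines \ref{thm:classification2:unbdd:not-in-S} and \ref{thm:classification2:unbdd:not-in-co-S} are Lemma~\ref{l-classS} applied to $G$ and to $\overline{G}$ respectively (using Fact~\ref{fact:comp}). The remaining lines are anchored on the new constructions and on Lemma~\ref{l-k4-p1+p-4}: line \ref{thm:classification2:unbdd:K_13or2P_2} with the $(K_4,P_1+P_4)$-free unbounded class, line \ref{thm:classification2:unbdd:P_1+P_4} with Theorem~\ref{thm:P_6diamond} (noting $\overline{2P_1+P_2}$ is $\overline{P_2+P_4}$-related and $P_6$ sits inside $P_1+P_4$-type containment), and lines \ref{thm:classification2:unbdd:2P_1+P_2}--\ref{thm:classification2:unbdd:4P_1} with Theorem~\ref{thm:P_6P_2+P_43P_2gem}, whose graphs $G_n$ were engineered to be simultaneously $(3P_2,P_2+P_4,P_6,\overline{P_1+P_4})$-free and of unbounded clique-width, so that \emph{any} pair dominated (in the $\si$ order) by these forbidden graphs inherits unboundedness. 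The main obstacle is bookkeeping rather than depth: I must verify, for each unbounded line, that the two forbidden graphs of the constructed witness family are each induced subgraphs of the asserted $H_1$ and $\overline{H_2}$, so that the witness class is genuinely a \emph{subclass} of the class in question; the directions of the $\ssi$ and $\si$ relations, and the complementations they hide, are easy to get backwards. I would therefore present part~(ii) as a short table matching each line to its witness lemma and record the containments $\overline{H_2}\si\cdots$ explicitly, which makes the verification mechanical and leaves no genuinely hard case.
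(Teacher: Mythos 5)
Your high-level architecture is exactly the paper's: quotient by the equivalence relation (Fact~\ref{fact:comp} plus Lemma~\ref{l-paw}), use induced-subgraph monotonicity to reduce each line to an extremal case, and close each line by citing a witness result. Lines (i).\ref{thm:classification2:bdd:P4}, (i).\ref{thm:classification2:bdd:ramsey}, (i).\ref{thm:classification2:bdd:P_1+P_3}, (i).\ref{thm:classification2:bdd:K_13} and (ii).\ref{thm:classification2:unbdd:not-in-S}--(ii).\ref{thm:classification2:unbdd:not-in-co-S} are matched correctly. However, several of your remaining witness assignments are wrong in a way that leaves those lines genuinely unproven, not merely mislabelled. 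In part (i) you claim lines \ref{thm:classification2:bdd:2P_1+P_2}--\ref{thm:classification2:bdd:4P_1} follow by ``complementing and applying Lemma~\ref{l-bipartite} together with Lemma~\ref{l-addit}''. That cannot work: Lemma~\ref{l-bipartite} is a statement about $H$-free \emph{bipartite} graphs, whereas the classes in these lines (for instance the $(2P_1+P_2,\overline{2P_1+P_3})$-free graphs) contain all complete graphs and are nowhere near bipartite. The paper closes these three lines by citing external theorems --- \cite{DHP0} for line \ref{thm:classification2:bdd:2P_1+P_2}, \cite{BLM04b} and \cite{BLM04} for line \ref{thm:classification2:bdd:P_1+P_4}, and \cite{BDHP15} for line \ref{thm:classification2:bdd:4P_1} --- and nothing in the present paper substitutes for them.

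In part (ii) the matching is scrambled. Line \ref{thm:classification2:unbdd:K_13or2P_2} cannot be witnessed by the $(K_4,P_1+P_4)$-free class: for that class (or its complement, the $(4P_1,\overline{P_1+P_4})$-free class) to be a subclass of the $(H_1,H_2)$-free class you would need one of its forbidden graphs to embed into $H_1$, which can be as small as $K_{1,3}$ or $2P_2$ --- impossible; the paper needs three separate witnesses here, two from \cite{BELL06} and split graphs from \cite{MR99}. Lines \ref{thm:classification2:unbdd:P_1+P_4} and \ref{thm:classification2:unbdd:2P_1+P_2} are swapped relative to your assignment: line \ref{thm:classification2:unbdd:P_1+P_4} is the one witnessed by Theorem~\ref{thm:P_6P_2+P_43P_2gem} (complemented via Fact~\ref{fact:comp}), while line \ref{thm:classification2:unbdd:2P_1+P_2} requires four distinct witnesses, one per graph in the list $K_{1,3}$, $5P_1$, $P_2+P_4$, $P_6$, namely \cite{BELL06}, \cite{DGP14}, the arXiv version of \cite{DHP0}, and Theorem~\ref{thm:P_6diamond}; a single construction cannot cover them, since when $H_1=2P_1+P_2$ and $\overline{H_2}=K_{1,3}$ every forbidden graph of Theorem~\ref{thm:P_6P_2+P_43P_2gem} (in either complementation) has at least five vertices and so embeds in neither four-vertex graph. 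Line \ref{thm:classification2:unbdd:3P_1} is where the bipartite machinery actually belongs, and it is a different argument from anything you describe: complements of bipartite graphs are $3P_1$-free, so the $(H_1,H_2)$-free class contains all complements of $\overline{H_2}$-free bipartite graphs, and Lemma~\ref{l-addit} combined with Lemma~\ref{l-bipartite} and Fact~\ref{fact:comp} shows this subclass has unbounded clique-width; Theorem~\ref{thm:P_6P_2+P_43P_2gem} is useless here because none of its forbidden graphs embeds into $3P_1$. Finally, line \ref{thm:classification2:unbdd:4P_1} needs Lemma~\ref{l-k4-p1+p-4} (the witness you spent on line \ref{thm:classification2:unbdd:K_13or2P_2}) together with \cite{DGP14}. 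So while the skeleton of your proof coincides with the paper's, the substance --- which result proves which line --- would have to be redone before this constitutes a proof.
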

\begin{proof}
We first consider the bounded cases. 
Statement~\ref{thm:classification2:bdd:P4} follows from Theorem~\ref{thm:single}.
To prove Statement~\ref{thm:classification2:bdd:ramsey} note that if $H_1=sP_1$ and $H_2=K_t$ for some $s,t$ then by Ramsey's
Theorem, all graphs in the class of $(H_1,H_2)$-free graphs have a bounded number of vertices 
and therefore the clique-width of graphs in this class is bounded.
By the definition of equivalence, when proving Statement~\ref{thm:classification2:bdd:P_1+P_3}, we may assume that $H_1=K_3$. Then Statement~\ref{thm:classification2:bdd:P_1+P_3} follows from Fact~\ref{fact:comp} combined with the fact that $(K_3,H)$-free graphs have bounded
clique-width if~$H$ is
$K_{1,3}+3P_1$ (Theorem~\ref{thm:K_3s_111+3P_1}),
$K_{1,3}+P_2$~\cite{DLRR12},
$P_1+S_{1,1,2}$ (Theorem~\ref{thm:K_3s_112+P_1}),
$P_6$~\cite{BKM06} or
$S_{1,1,3}$~\cite{DLRR12}.
Statement~\ref{thm:classification2:bdd:2P_1+P_2} follows
from Fact~\ref{fact:comp} and the fact that $(\overline{2P_1+P_2},2P_1+P_3)$-free, $(\overline{2P_1+P_2},3P_1+P_2)$-free and $(\overline{2P_1+P_2},P_2+P_3)$-free graphs have bounded
clique-width~\cite{DHP0}.
Statement~\ref{thm:classification2:bdd:P_1+P_4} follows from Fact~\ref{fact:comp} and the fact that both
$(P_1+P_4,\overline{P_1+P_4})$-free graphs~\cite{BLM04b} and
$(P_5,\overline{P_1+P_4})$-free graphs~\cite{BLM04} have bounded clique-width.
Statement~\ref{thm:classification2:bdd:4P_1} follows from Fact~\ref{fact:comp} and the fact that $(2P_1+\nobreak P_3,\allowbreak K_4)$-free graphs have bounded clique-width~\cite{BDHP15}.
Statement~\ref{thm:classification2:bdd:K_13} follows from the fact that $(K_{1,3},\overline{K_{1,3}})$-free
graphs have bounded clique-width~\cite{BL02,BM02}.

We now consider the unbounded cases.
Statements~\ref{thm:classification2:unbdd:not-in-S} and~\ref{thm:classification2:unbdd:not-in-co-S} follow from Lemma~\ref{l-classS} and Fact~\ref{fact:comp}.
Statement~\ref{thm:classification2:unbdd:K_13or2P_2} follows from the fact that the classes of
$(C_4,\allowbreak K_{1,3},\allowbreak K_4,\allowbreak \overline{2P_1+P_2})$-free~\cite{BELL06},
$(K_4,2P_2)$-free~\cite{BELL06} and
$(C_4,C_5,2P_2)$-free graphs (or equivalently, split graphs)~\cite{MR99} have unbounded clique-width. 
Statement~\ref{thm:classification2:unbdd:P_1+P_4} follows from Fact~\ref{fact:comp} and the fact that the class of
$(P_2+\nobreak P_4,\allowbreak 3P_2,\allowbreak P_6,\allowbreak \overline{P_1+P_4})$-free (Theorem~\ref{thm:P_6P_2+P_43P_2gem})
graphs have unbounded clique-width.
Statement~\ref{thm:classification2:unbdd:2P_1+P_2} follows from Fact~\ref{fact:comp} and the fact that
$(C_4,K_{1,3},K_4,\overline{2P_1+P_2})$-free~\cite{BELL06},
$(5P_1,\overline{2P_1+P_2})$-free~\cite{DGP14},
$(\overline{2P_1+P_2},P_2+P_4)$-free (see arXiv version of~\cite{DHP0})
 and
$(P_6,\overline{2P_1+P_2})$-free (Theorem~\ref{thm:P_6diamond}) graphs have unbounded
clique-width.
To prove Statement~\ref{thm:classification2:unbdd:3P_1}, suppose $H_1 \si 3P_1$ and $\overline{H_2} \si
2P_1+2P_2, 2P_1+P_4, 4P_1+P_2,3P_2$ or~$2P_3$. Then $\overline{H_1} \not \in
{\cal S}$, so $\overline{H_2} \in {\cal S}$, otherwise we are done by
Statement~\ref{thm:classification2:unbdd:not-in-co-S}. By Lemma~\ref{l-addit}, 
$\overline{H_2}$ is not an induced subgraph of any graph in
$\{K_{1,3}+3P_1,K_{1,3}+P_2, P_1+S_{1,1,3}, S_{1,2,3}\}$.
The class of $(H_1,H_2)$-free graphs contains the class of complements of
$\overline{H_2}$-free bipartite graphs. By Fact~\ref{fact:comp} and 
Lemma~\ref{l-bipartite}, this latter class has unbounded clique-width.
Statement~\ref{thm:classification2:unbdd:4P_1} follows from the
Fact~\ref{fact:comp} and the fact that the classes of
$(K_4,P_1+P_4)$-free graphs (Lemma~\ref{l-k4-p1+p-4}) and
$(4P_1,\overline{3P_1+P_2})$-free graphs~\cite{DGP14}
have unbounded clique-width.\qed
\end{proof}

As we will prove in Theorem~\ref{t-opencases}, the above classification leaves exactly 13 open cases (up to equivalence).

\begin{oproblem}\label{oprob:twographs}
Does the class of $(H_1,H_2)$-free graphs have bounded clique-width when:
\begin{enumerate}
\item \label{oprob:twographs:3P_1} $H_1=3P_1, \overline{H_2} \in \{P_1+P_2+P_3,P_1+2P_2,P_1+P_5,P_1+S_{1,1,3},P_2+P_4,\allowbreak S_{1,2,2},\allowbreak S_{1,2,3}\}$;
\item \label{oprob:twographs:2P_1+P_2} $H_1=2P_1+P_2, \overline{H_2} \in \{P_1+P_2+P_3,P_1+2P_2,P_1+P_5\}$;
\item \label{oprob:twographs:P_1+P_4} $H_1=P_1+P_4, \overline{H_2} \in \{P_1+2P_2,P_2+P_3\}$ or
\item \label{oprob:twographs:2P_1+P_3} $H_1=\overline{H_2}=2P_1+P_3$.
\end{enumerate}
\end{oproblem}
Note that the two pairs $(3P_1,\overline{S_{1,1,2}})$ and $(3P_1,\overline{S_{1,2,3}})$, or equivalently, the two pairs $(K_3,S_{1,2,2})$ and $(K_3,S_{1,2,3})$
are the only pairs that correspond to open cases in which both $H_1$ and~$H_2$ are connected. 
We also observe the following. Let  $H_2 \in \{P_1+\nobreak P_2+\nobreak P_3,\allowbreak P_1+\nobreak 2P_2,\allowbreak P_1+\nobreak P_5,\allowbreak P_1+\nobreak S_{1,1,3},\allowbreak P_2+\nobreak P_4,\allowbreak S_{1,2,2},\allowbreak S_{1,2,3}\}$.
Lemma~\ref{l-bipartite} shows that all bipartite $H_2$-free graphs 
have bounded clique-width. Moreover, the graph $P_1+2P_2$ is an induced subgraph of $H_2$. Hence, for investigating 
whether the boundedness of the clique-width of bipartite $H_2$-free graphs can be extended to $(K_3,H_2)$-free graphs, the $H_2=P_1+2P_2$ case is 
the starting case.

\begin{theorem}\label{t-opencases}
Let ${\cal G}$ be a class of graphs defined by two forbidden induced
subgraphs. Then~${\cal G}$ is not equivalent to any of the classes listed in
Theorem~\ref{thm:classification2} if and only if it is equivalent to one of the
13 cases listed in Open Problem~\ref{oprob:twographs}.
\end{theorem}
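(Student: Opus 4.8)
The plan is to prove the two directions of the equivalence separately. The ``if'' direction is a finite verification: for each of the 13 pairs in Open Problem~\ref{oprob:twographs} I would check, against each of the finitely many cases \ref{thm:classification2:bdd:P4}--\ref{thm:classification2:bdd:K_13} and \ref{thm:classification2:unbdd:not-in-S}--\ref{thm:classification2:unbdd:4P_1} of Theorem~\ref{thm:classification2}, that the required induced-subgraph containments fail, taking care that a pair may first be complemented or have an occurrence of $K_3$ swapped for $\overline{P_1+P_3}$. The substance is the ``only if'' direction, where I must show that any pair $(H_1,H_2)$ escaping every case of Theorem~\ref{thm:classification2} is equivalent to one of the 13 listed pairs.

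For the ``only if'' direction I would first invoke Statements~\ref{thm:classification2:unbdd:not-in-S} and~\ref{thm:classification2:unbdd:not-in-co-S}: unless at least one of $H_1,H_2$ lies in ${\cal S}$ and at least one of $\overline{H_1},\overline{H_2}$ lies in ${\cal S}$, the class is already covered as unbounded. After complementing and relabelling if necessary I may therefore assume $H_1\in{\cal S}$ and $\overline{H_2}\in{\cal S}$, so that $H_1$ and $\overline{H_2}$ are forests whose components are paths or subdivided claws. I would next dispose of the members of ${\cal S}$ of small independence number: a $3P_1$-free graph in ${\cal S}$ has independence number at most $2$ and is readily seen to be an induced subgraph of $P_4$ or equal to $2P_2$. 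If $H_1\ssi P_4$ (or $\overline{H_2}\ssi P_4$) we land in the bounded Statement~\ref{thm:classification2:bdd:P4}, while the case $H_1=2P_2$ (or $\overline{H_2}=2P_2$) is settled by Statement~\ref{thm:classification2:unbdd:K_13or2P_2} together with a short inspection of the $(4P_1,2P_2)$-free graphs in ${\cal S}$. Hence I may assume $H_1\si 3P_1$ and $\overline{H_2}\si 3P_1$.

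The finiteness of what remains comes from Lemma~\ref{l-addit} combined with Statement~\ref{thm:classification2:unbdd:3P_1}. Since $H_1\si 3P_1$, that statement makes the class unbounded whenever $\overline{H_2}$ contains one of $2P_1+2P_2,\,2P_1+P_4,\,4P_1+P_2,\,3P_2,\,2P_3$; as $\overline{H_2}\in{\cal S}$, the contrapositive of Lemma~\ref{l-addit} forces $\overline{H_2}=sP_1$ or $\overline{H_2}\ssi K_{1,3}+3P_1,\,K_{1,3}+P_2,\,P_1+S_{1,1,3}$ or $S_{1,2,3}$. Applying the same reasoning to the equivalent pair $(\overline{H_1},\overline{H_2})$, i.e. with the roles of the two graphs exchanged, imposes the identical restriction on $H_1$. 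The infinite family $sP_1$ (and dually $K_t$) is then handled on its own: Ramsey's theorem (Statement~\ref{thm:classification2:bdd:ramsey}) disposes of $(sP_1,K_t)$, while for $s\ge 4$ the containment $sP_1\si 4P_1$ lets the remaining pairs be absorbed by the unbounded cases, notably Statement~\ref{thm:classification2:unbdd:K_13or2P_2} (for example $\{K_s,P_5\}$ is caught via $P_5\si 2P_2$ and $\overline{K_s}=sP_1\si 4P_1$). After this pruning, both $H_1$ and $\overline{H_2}$ range over an explicit finite collection of induced subgraphs of the four graphs above.

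Finally I would enumerate the remaining finitely many pairs and, for each, decide whether it is covered by one of the bounded cases \ref{thm:classification2:bdd:ramsey}--\ref{thm:classification2:bdd:K_13} or one of the unbounded cases \ref{thm:classification2:unbdd:K_13or2P_2}--\ref{thm:classification2:unbdd:4P_1}, testing every candidate against all cases up to equivalence; the survivors should be exactly the 13 pairs of Open Problem~\ref{oprob:twographs}. I expect the main obstacle to be precisely this concluding enumeration together with the bookkeeping around the infinite families. One must ensure that every ``large'' or ``mixed'' member of ${\cal S}$ is caught by some unbounded construction before the problem is made finite; that, when testing a candidate, one searches over \emph{all} equivalence representatives and \emph{all} assignments of the two graphs to the slots of each statement (as the $\{K_s,P_5\}$ example shows, the covering case can be easy to miss); and that the equivalence relation (complementation and the $K_3\leftrightarrow\overline{P_1+P_3}$ swap) is applied consistently, so that no two survivors are identified and none is miscounted, leaving precisely 13 uncovered classes and no more. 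This is exactly the step for which the computer search mentioned in the paper's earlier footnote is valuable.
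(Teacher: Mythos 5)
Your overall strategy coincides with the paper's own proof: reduce (via Statements~\ref{thm:classification2:unbdd:not-in-S} and~\ref{thm:classification2:unbdd:not-in-co-S}) to $H_1,\overline{H_2}\in{\cal S}$, dispose of the $3P_1$-free members of ${\cal S}$, obtain finiteness from Lemma~\ref{l-addit} combined with Statement~\ref{thm:classification2:unbdd:3P_1}, and finish by a finite check. However, your handling of the infinite family $sP_1$ contains a genuine error. You claim that for $s\geq 4$ every remaining pair $(sP_1,H_2)$ with $H_2$ not complete is ``absorbed by the unbounded cases.'' This is false at $s=4$: the pairs $(4P_1,\overline{2P_1+P_2})$ and $(4P_1,\overline{2P_1+P_3})$ satisfy \emph{none} of the unbounded Statements~\ref{thm:classification2:unbdd:not-in-S}--\ref{thm:classification2:unbdd:4P_1}. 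Indeed, applied to the complemented pair $(K_4,2P_1+P_3)$, Statement~\ref{thm:classification2:unbdd:K_13or2P_2} would require $2P_1+P_3\si K_{1,3}$ or $2P_2$ (false), Statement~\ref{thm:classification2:unbdd:2P_1+P_2} would require $\overline{K_4}=4P_1\si 5P_1$ (false; this is exactly where $s\geq 5$ is needed), and Statement~\ref{thm:classification2:unbdd:4P_1} would require $2P_1+P_3\si P_1+P_4$ or $3P_1+P_2$ (false). These two classes in fact have \emph{bounded} clique-width: they are covered by Statement~\ref{thm:classification2:bdd:4P_1}, which rests on the result of~\cite{BDHP15} that $(K_4,2P_1+P_3)$-free graphs have bounded clique-width. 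This is precisely why the paper splits the edgeless case: for $k=4$ the bounded Statement~\ref{thm:classification2:bdd:4P_1} applies, while for $k\geq 5$ one has $5P_1\ssi H_1$ and $2P_1+P_2\ssi\overline{H_2}$, so the unbounded Statement~\ref{thm:classification2:unbdd:2P_1+P_2} applies after complementation. Your proposal never invokes Statement~\ref{thm:classification2:bdd:4P_1} at any point.

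The consequence is not a mere miscitation. Executed literally, your argument either asserts that $(4P_1,\overline{2P_1+P_3})$-free graphs have unbounded clique-width --- contradicting Theorem~\ref{thm:classification2} itself --- or, once the claimed verification fails, leaves these two non-equivalent pairs stranded, so your concluding enumeration would report 15 ``open'' cases rather than 13. A related, smaller slip of the same kind: pairs such as $(5P_1,\overline{P_1+P_3})$ are also caught by no unbounded statement; they are bounded because they are \emph{equivalent} to the Ramsey pairs $(sP_1,K_3)$. The paper forestalls this by normalizing $P_1+P_3$ to $3P_1$ (equivalently $\overline{P_1+P_3}$ to $K_3$) at the very start, before any pair is discarded, whereas you defer all equivalence bookkeeping to the final enumeration --- too late for pairs already removed in your $sP_1$ step. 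Both defects are repairable: split $s=4$ from $s\geq 5$ and invoke Statement~\ref{thm:classification2:bdd:4P_1} in the former case, and perform the $P_1+P_3$ normalization up front. With those repairs your argument becomes essentially the proof given in the paper.
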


\begin{proof}
It is easy to verify that none of the classes listed in Open
Problem~\ref{oprob:twographs} are equivalent to classes listed in
Theorem~\ref{thm:classification2}.

Let $H_1,H_2$ be graphs and let ${\cal G}$ be the class of $(H_1,H_2)$-free
graphs. Suppose~${\cal G}$ is not equivalent to any class listed in
Theorem~\ref{thm:classification2}.  Then $H_1 \in {\cal S}$ or $H_2 \in {\cal
S}$, otherwise Theorem~\ref{thm:classification2}.\ref{thm:classification2:unbdd:not-in-S} applies.  Similarly,
$\overline{H_1} \in {\cal S}$ or $\overline{H_2} \in {\cal S}$. If
$H_i,\overline{H_i} \in {\cal S}$ for some $i \in \{1,2\}$, then $H_i \ssi P_4$ (as shown in the proof of
Theorem~\ref{thm:single}), in which case Theorem~\ref{thm:classification2}.\ref{thm:classification2:bdd:P4}
applies.

Due to the definition of equivalence, for the remainder of the proof we may
assume without loss of generality that $H_1,\overline{H_2} \in {\cal S}$, but
neither is an induced subgraph of $P_4$. Furthermore,  we may assume that neither $H_1$ nor $\overline{H_2}$ is isomorphic
to $P_1+P_3$, as in this case the definition of equivalence would allow us to replace $P_1+P_3$ by $3P_1$. 
Also note that the situation for~$H_1$ and $\overline{H_2}$ is 
symmetric, i.e. if we exchanged these graphs, the resulting class would be
equivalent.

Suppose that $3P_1 \not \ssi H_1$. Then we must have that $H_1=2P_2$ (as $H_1\not\ssi P_4$). If
$\overline{H_2} \si K_{1,3}, 4P_1$ or $2P_2$ then
Theorem~\ref{thm:classification2}.\ref{thm:classification2:unbdd:K_13or2P_2} applies. Since $\overline{H_2} \in
{\cal S}$, we may therefore assume that $\overline{H_2}$ is a linear forest
which is $(4P_1,2P_2)$-free. This means that $\overline{H_2}$ is an induced
subgraph of $P_1+P_4$, in which case Theorem~\ref{thm:classification2}.\ref{thm:classification2:bdd:P_1+P_4}
applies (since $2P_2 \ssi P_5$).

We therefore assume that $3P_1 \ssi H_1,\overline{H_2}$. Now
$H_1,\overline{H_2}$ must be $(2P_1+\nobreak 2P_2,\allowbreak 2P_1+\nobreak P_4,\allowbreak 4P_1+\nobreak P_2,\allowbreak 3P_2,\allowbreak 2P_3)$-free,
otherwise Theorem~\ref{thm:classification2}.\ref{thm:classification2:unbdd:3P_1} would apply. Since
$H_1,\overline{H_2} \in {\cal S}$, by Lemma~\ref{l-addit}, each of
$H_1,\overline{H_2}$ must either contain no edges or be an induced subgraph of
(possibly different) graphs in $\{K_{1,3}+\nobreak 3P_1,\allowbreak K_{1,3}+\nobreak P_2,\allowbreak P_1+\nobreak S_{1,1,3},\allowbreak
S_{1,2,3}\}$. The induced subgraphs of graphs in $\{K_{1,3}+\nobreak 3P_1,\allowbreak K_{1,3}+\nobreak P_2,\allowbreak 
P_1+\nobreak S_{1,1,3},\allowbreak S_{1,2,3}\}$ are listed in Table~\ref{table:smallgraphs}.

\begin{table}
\begin{center}
\begin{tabular}{VWVWVW}
Graph & Name &
Graph & Name &
Graph & Name \\\hline\hline

\marginbox{5pt}{\scalebox{0.4}
{\begin{tikzpicture}[scale=1]
\renewcommand*{\EdgeLineWidth}{1.7pt}
\GraphInit[vstyle=Simple]
\SetVertexSimple[MinSize=6pt]
\Vertex[x=0,y=0]{a}
\Vertex[d=1,a=120]{b}
\Vertex[d=1,a=240]{c}
\Vertex[d=2,a=240]{c2}
\Vertex[d=1,a=0]{d}
\Vertex[d=2,a=0]{e}
\Vertex[d=3,a=0]{f}
\Edges(b,a,c,c2)
\Edges(a,d,e,f)
\end{tikzpicture}}}           & $S_{1,2,3}$      & 
\marginbox{5pt}{\scalebox{0.4}
{\begin{tikzpicture}[scale=1]
\renewcommand*{\EdgeLineWidth}{1.7pt}
\GraphInit[vstyle=Simple]
\SetVertexSimple[MinSize=6pt]
\Vertex[x=0,y=0]{a}
\Vertex[d=1,a=120]{b}
\Vertex[d=1,a=240]{c}
\Vertex[d=1,a=0]{d}
\Vertex[d=2,a=0]{e}
\Vertex[d=3,a=0]{f}
\Vertex[d=4,a=0]{g}
\Edges(b,a,c)
\Edges(a,d,e,f)
\end{tikzpicture}}}     & $P_1+S_{1,1,3}$  &
\marginbox{5pt}{\scalebox{0.4}
{\begin{tikzpicture}[scale=1]
\renewcommand*{\EdgeLineWidth}{1.7pt}
\GraphInit[vstyle=Simple]
\SetVertexSimple[MinSize=6pt]
\Vertices{circle}{a,b,c,d,e,f}
\Vertex[x=0,y=0]{z}
\Edges(a,z,c)
\Edge(e)(z)
\end{tikzpicture}}}     & $K_{1,3}+\nobreak 3P_1$   \\\hline

\marginbox{5pt}{\scalebox{0.4}
{\begin{tikzpicture}[scale=1]
\renewcommand*{\EdgeLineWidth}{1.7pt}
\GraphInit[vstyle=Simple]
\SetVertexSimple[MinSize=6pt]
\Vertex[x=0,y=0]{a}
\Vertex[d=1,a=120]{b}
\Vertex[d=1,a=240]{c}
\Vertex[d=1,a=0]{d}
\Vertex[d=2,a=0]{e}
\Vertex[d=3,a=0]{f}
\Edges(b,a,c)
\Edges(a,d,e,f)
\end{tikzpicture}}}           & $S_{1,1,3}$     &
\marginbox{5pt}{\scalebox{0.4}
{\begin{tikzpicture}[scale=1]
\renewcommand*{\EdgeLineWidth}{1.7pt}
\GraphInit[vstyle=Simple]
\SetVertexSimple[MinSize=6pt]
\Vertex[x=0,y=0]{a}
\Vertex[d=1,a=120]{b}
\Vertex[d=1,a=240]{c}
\Vertex[d=1,a=0]{d}
\Vertex[d=2,a=0]{e}
\Vertex[d=3,a=0]{f}
\Edges(b,a,c)
\Edges(a,d,e)
\end{tikzpicture}}}     & $P_1+S_{1,1,2}$ &
\marginbox{5pt}{\scalebox{0.4}
{\begin{tikzpicture}[scale=1]
\renewcommand*{\EdgeLineWidth}{1.7pt}
\GraphInit[vstyle=Simple]
\SetVertexSimple[MinSize=6pt]
\Vertex[x=0,y=0]{a}
\Vertex[d=1,a=120]{b}
\Vertex[d=1,a=240]{c}
\Vertex[d=1,a=0]{d}
\Vertex[d=2,a=0]{e}
\Vertex[d=3,a=0]{f}
\Edges(b,a,c)
\Edges(a,d)
\Edges(e,f)
\end{tikzpicture}}}      & $K_{1,3}+P_2$   \\

\marginbox{5pt}{\scalebox{0.4}
{\begin{tikzpicture}[scale=1]
\renewcommand*{\EdgeLineWidth}{1.7pt}
\GraphInit[vstyle=Simple]
\SetVertexSimple[MinSize=6pt]
\Vertex[x=0,y=0]{a}
\Vertex[x=0,y=1]{b}
\Vertex[x=0,y=2]{c}
\Vertex[x=1,y=0]{d}
\Vertex[x=1,y=1]{e}
\Vertex[x=1,y=2]{f}
\Edges(a,b,c)
\Edges(b,e)
\Edges(a,d)
\Edges(c,f)
\end{tikzpicture}}}   & $S_{1,2,2}$     &
\marginbox{5pt}{\scalebox{0.4}
{\begin{tikzpicture}[scale=1]
\renewcommand*{\EdgeLineWidth}{1.7pt}
\GraphInit[vstyle=Simple]
\SetVertexSimple[MinSize=6pt]
\Vertices{circle}{a,b,c,d,e,f}
\Edges(d,c)
\Edges(d,a)
\Edges(d,e)
\end{tikzpicture}}}     & $K_{1,3}+\nobreak 2P_1$  &
\marginbox{5pt}{\scalebox{0.4}
{\begin{tikzpicture}[scale=1]
\renewcommand*{\EdgeLineWidth}{1.7pt}
\GraphInit[vstyle=Simple]
\SetVertexSimple[MinSize=6pt]
\Vertices{circle}{a,b,c,d,e,f}
\Edges(f,a,b,c,d,e)
\end{tikzpicture}}}             & $P_6$           \\

\marginbox{5pt}{\scalebox{0.4}
{\begin{tikzpicture}[scale=1]
\renewcommand*{\EdgeLineWidth}{1.7pt}
\GraphInit[vstyle=Simple]
\SetVertexSimple[MinSize=6pt]
\Vertices{circle}{a,b,c,d,e,f}
\Edges(b,c,d,e,f)
\end{tikzpicture}}}       & $P_1+P_5$       &
\marginbox{5pt}{\scalebox{0.4}
{\begin{tikzpicture}[scale=1,rotate=180]
\renewcommand*{\EdgeLineWidth}{1.7pt}
\GraphInit[vstyle=Simple]
\SetVertexSimple[MinSize=6pt]
\Vertices{circle}{a,b,c,d,e,f}
\Edges(b,c)
\Edges(d,e,f,a)
\end{tikzpicture}}}       & $P_2+P_4$       &
\marginbox{5pt}{\scalebox{0.4}
{\begin{tikzpicture}[scale=1]
\renewcommand*{\EdgeLineWidth}{1.7pt}
\GraphInit[vstyle=Simple]
\SetVertexSimple[MinSize=6pt]
\Vertices{circle}{a,b,c,d,e,f}
\Edges(a,b,c)
\Edges(e,f)
\end{tikzpicture}}} & $P_1+\nobreak P_2+\nobreak P_3$   \\

\marginbox{5pt}{\scalebox{0.4}
{\begin{tikzpicture}[scale=1]
\renewcommand*{\EdgeLineWidth}{1.7pt}
\GraphInit[vstyle=Simple]
\SetVertexSimple[MinSize=6pt]
\Vertices{circle}{a,b,c,d,e,f}
\Edges(c,d,e)
\end{tikzpicture}}}      & $3P_1+P_3$      &
\marginbox{5pt}{\scalebox{0.4}
{\begin{tikzpicture}[scale=1]
\renewcommand*{\EdgeLineWidth}{1.7pt}
\GraphInit[vstyle=Simple]
\SetVertexSimple[MinSize=6pt]
\Vertices{circle}{a,b,c,d,e,f}
\end{tikzpicture}}}            & $6P_1$\\\hline

\marginbox{5pt}{\scalebox{0.4}
{\begin{tikzpicture}[scale=1]
\renewcommand*{\EdgeLineWidth}{1.7pt}
\GraphInit[vstyle=Simple]
\SetVertexSimple[MinSize=6pt]
\Vertex[x=0,y=0]{a}
\Vertex[d=1,a=120]{b}
\Vertex[d=1,a=240]{c}
\Vertex[d=1,a=0]{d}
\Vertex[d=2,a=0]{e}
\Edges(b,a,c)
\Edges(a,d,e)
\end{tikzpicture}}}           & $S_{1,1,2}$     &
\marginbox{5pt}{\scalebox{0.4}
{\begin{tikzpicture}[scale=1]
\renewcommand*{\EdgeLineWidth}{1.7pt}
\GraphInit[vstyle=Simple]
\SetVertexSimple[MinSize=6pt]
\Vertex[x=0,y=0]{a}
\Vertex[d=1,a=120]{b}
\Vertex[d=1,a=240]{c}
\Vertex[d=1,a=0]{d}
\Vertex[d=2,a=0]{e}
\Edges(b,a,c)
\Edges(a,d)
\end{tikzpicture}}}      & $K_{1,3}+P_1$   &
\marginbox{5pt}{\scalebox{0.4}
{\begin{tikzpicture}[scale=1,rotate=90]
\renewcommand*{\EdgeLineWidth}{1.7pt}
\GraphInit[vstyle=Simple]
\SetVertexSimple[MinSize=6pt]
\Vertices{circle}{a,b,c,d,e}
\Edges(d,e,a,b,c)
\end{tikzpicture}}}             & $P_5$           \\
\marginbox{5pt}{\scalebox{0.4}
{\begin{tikzpicture}[scale=1,rotate=90]
\renewcommand*{\EdgeLineWidth}{1.7pt}
\GraphInit[vstyle=Simple]
\SetVertexSimple[MinSize=6pt]
\Vertices{circle}{a,b,c,d,e}
\Edges(b,c,d,e)
\end{tikzpicture}}}       & $P_1+P_4$       &
\marginbox{5pt}{\scalebox{0.4}
{\begin{tikzpicture}[scale=1,rotate=90]
\renewcommand*{\EdgeLineWidth}{1.7pt}
\GraphInit[vstyle=Simple]
\SetVertexSimple[MinSize=6pt]
\Vertices{circle}{a,b,c,d,e}
\Edges(e,a,b)
\Edge(c)(d)
\end{tikzpicture}}}       & $P_2+P_3$       &
\marginbox{5pt}{\scalebox{0.4}
{\begin{tikzpicture}[scale=1,rotate=90]
\renewcommand*{\EdgeLineWidth}{1.7pt}
\GraphInit[vstyle=Simple]
\SetVertexSimple[MinSize=6pt]
\Vertices{circle}{a,b,c,d,e}
\Edges(e,a,b)
\end{tikzpicture}}}      & $2P_1+P_3$      \\
\marginbox{5pt}{\scalebox{0.4}
{\begin{tikzpicture}[scale=1,rotate=90]
\renewcommand*{\EdgeLineWidth}{1.7pt}
\GraphInit[vstyle=Simple]
\SetVertexSimple[MinSize=6pt]
\Vertices{circle}{a,b,c,d,e}
\Edge(b)(c)
\Edge(d)(e)
\end{tikzpicture}}}      & $P_1+2P_2$      &
\marginbox{5pt}{\scalebox{0.4}
{\begin{tikzpicture}[scale=1,rotate=90]
\renewcommand*{\EdgeLineWidth}{1.7pt}
\GraphInit[vstyle=Simple]
\SetVertexSimple[MinSize=6pt]
\Vertices{circle}{a,b,c,d,e}
\Edge(c)(d)
\end{tikzpicture}}}      & $3P_1+P_2$      &
\marginbox{5pt}{\scalebox{0.4}
{\begin{tikzpicture}[scale=1,rotate=90]
\renewcommand*{\EdgeLineWidth}{1.7pt}
\GraphInit[vstyle=Simple]
\SetVertexSimple[MinSize=6pt]
\Vertices{circle}{a,b,c,d,e}
\end{tikzpicture}}}            & $5P_1$          \\\hline

\marginbox{5pt}{\scalebox{0.4}
{\begin{tikzpicture}[scale=1]
\renewcommand*{\EdgeLineWidth}{1.7pt}
\GraphInit[vstyle=Simple]
\SetVertexSimple[MinSize=6pt]
\Vertex[x=0,y=0]{a}
\Vertex[d=1,a=120]{b}
\Vertex[d=1,a=240]{c}
\Vertex[d=1,a=0]{d}
\Edges(b,a,c)
\Edges(a,d)
\end{tikzpicture}}}            & $K_{1,3}$       &
\marginbox{5pt}{\scalebox{0.4}
{\begin{tikzpicture}[scale=1,rotate=45]
\renewcommand*{\EdgeLineWidth}{1.7pt}
\GraphInit[vstyle=Simple]
\SetVertexSimple[MinSize=6pt]
\Vertices{circle}{a,b,c,d}
\Edges(d,a,b,c)
\end{tikzpicture}}}             & $P_4$           &
\marginbox{5pt}{\scalebox{0.4}
{\begin{tikzpicture}[scale=1,rotate=45]
\renewcommand*{\EdgeLineWidth}{1.7pt}
\GraphInit[vstyle=Simple]
\SetVertexSimple[MinSize=6pt]
\Vertices{circle}{a,b,c,d}
\Edges(d,a,b)
\end{tikzpicture}}}       & $P_1+P_3$       \\
\marginbox{5pt}{\scalebox{0.4}
{\begin{tikzpicture}[scale=1,rotate=45]
\renewcommand*{\EdgeLineWidth}{1.7pt}
\GraphInit[vstyle=Simple]
\SetVertexSimple[MinSize=6pt]
\Vertices{circle}{a,b,c,d}
\Edges(b,c)
\Edges(a,d)
\end{tikzpicture}}}            & $2P_2$          &
\marginbox{5pt}{\scalebox{0.4}
{\begin{tikzpicture}[scale=1,rotate=45]
\renewcommand*{\EdgeLineWidth}{1.7pt}
\GraphInit[vstyle=Simple]
\SetVertexSimple[MinSize=6pt]
\Vertices{circle}{a,b,c,d}
\Edges(a,b)
\end{tikzpicture}}}      & $2P_1+P_2$      &
\marginbox{5pt}{\scalebox{0.4}
{\begin{tikzpicture}[scale=1,rotate=45]
\renewcommand*{\EdgeLineWidth}{1.7pt}
\GraphInit[vstyle=Simple]
\SetVertexSimple[MinSize=6pt]
\Vertices{circle}{a,b,c,d}
\end{tikzpicture}}}            & $4P_1$          \\\hline

\marginbox{5pt}{\scalebox{0.4}
{\begin{tikzpicture}[scale=1,rotate=90]
\renewcommand*{\EdgeLineWidth}{1.7pt}
\GraphInit[vstyle=Simple]
\SetVertexSimple[MinSize=6pt]
\Vertices{circle}{a,b,c}
\Edges(c,a,b)
\end{tikzpicture}}}             & $P_3$           &
\marginbox{5pt}{\scalebox{0.4}
{\begin{tikzpicture}[scale=1,rotate=90]
\renewcommand*{\EdgeLineWidth}{1.7pt}
\GraphInit[vstyle=Simple]
\SetVertexSimple[MinSize=6pt]
\Vertices{circle}{a,b,c}
\Edges(b,c)
\end{tikzpicture}}}       & $P_1+P_2$       &
\marginbox{5pt}{\scalebox{0.4}
{\begin{tikzpicture}[scale=1,rotate=90]
\renewcommand*{\EdgeLineWidth}{1.7pt}
\GraphInit[vstyle=Simple]
\SetVertexSimple[MinSize=6pt]
\Vertices{circle}{a,b,c}
\end{tikzpicture}}}            & $3P_1$          \\\hline

\marginbox{5pt}{\scalebox{0.4}
{\begin{tikzpicture}[scale=1]
\renewcommand*{\EdgeLineWidth}{1.7pt}
\GraphInit[vstyle=Simple]
\SetVertexSimple[MinSize=6pt]
\Vertices{circle}{a,b}
\Edge(a)(b)
\end{tikzpicture}}}             & $P_2$           &
\marginbox{5pt}{\scalebox{0.4}
{\begin{tikzpicture}[scale=1]
\renewcommand*{\EdgeLineWidth}{1.7pt}
\GraphInit[vstyle=Simple]
\SetVertexSimple[MinSize=6pt]
\Vertices{circle}{a,b}
\end{tikzpicture}}}            & $2P_1$          &

\marginbox{5pt}{\scalebox{0.4}
{\begin{tikzpicture}[scale=1]
\renewcommand*{\EdgeLineWidth}{1.7pt}
\GraphInit[vstyle=Simple]
\SetVertexSimple[MinSize=6pt]
\Vertex[x=0][y=0]{a}
\end{tikzpicture}}}             & $P_1$           \\
\end{tabular}
\end{center}
\caption{The induced subgraphs of $S_{1,2,3}, S_{1,1,3}+P_1$, $K_{1,3}+3P_1$ and $K_{1,3}+P_2$, arranged by number of vertices.}
\label{table:smallgraphs}
\end{table}

\begin{sloppypar}
First suppose that $H_1$ contains no edges.  Then $\overline{H_2}$  must
contain an edge, otherwise Theorem~\ref{thm:classification2}.\ref{thm:classification2:bdd:ramsey} would apply.
We first assume that $H_1=3P_1$.
If $\overline{H_2} \ssi K_{1,3}+\nobreak 3P_1,\allowbreak K_{1,3}+\nobreak P_2,\allowbreak P_1+\nobreak S_{1,1,2},\allowbreak P_6$ or
$S_{1,1,3}$, then Theorem~\ref{thm:classification2}.\ref{thm:classification2:bdd:P_1+P_3} applies.  This leaves
the cases where $\overline{H_2} \in
\{P_1+\nobreak P_2+\nobreak P_3,\allowbreak P_1+\nobreak 2P_2,\allowbreak P_1+\nobreak P_5,\allowbreak P_1+\nobreak S_{1,1,3},\allowbreak P_2+\nobreak P_4,\allowbreak S_{1,2,2},\allowbreak S_{1,2,3}\}$,
all of which are stated in Open Problem~\ref{oprob:twographs}.\ref{oprob:twographs:3P_1}. Now assume
$H_1=kP_1$ for $k\geq 4$. If $\overline{H_2} \si K_{1,3}, P_1+P_4, 3P_1+P_2$ or~$2P_2$,
Theorem~\ref{thm:classification2}.\ref{thm:classification2:unbdd:K_13or2P_2} or~\ref{thm:classification2}.\ref{thm:classification2:unbdd:4P_1}
applies.  Otherwise, $\overline{H_2}$ must be a $(P_1+P_4,3P_1+P_2,2P_2)$-free linear
forest, which (by assumption) is not an edgeless graph. As $\overline{H_2}\not\ssi P_4$ and $\overline{H_2} \neq P_1+P_3$, this means that
$\overline{H_2} \in \{2P_1+P_2,2P_1+P_3\}$. 
In both these cases, if $k=4$ then $\overline{H_2} \ssi 2P_1+P_3$, so
Theorem~\ref{thm:classification2}.\ref{thm:classification2:bdd:4P_1} applies;
if $k \geq 5$ then $2P_1+P_2 \ssi \overline{H_2}$, so
Theorem~\ref{thm:classification2}.\ref{thm:classification2:unbdd:2P_1+P_2}
applies.
\end{sloppypar}

By symmetry, we may therefore assume that neither $H_1$ nor $\overline{H_2}$
are edgeless. As stated above, in this case we may assume that both $H_1$ and
$\overline{H_2}$ are induced subgraphs of (possibly different) graphs in
$\{K_{1,3}+\nobreak 3P_1,\allowbreak K_{1,3}+\nobreak P_2,\allowbreak P_1+\nobreak S_{1,1,3},\allowbreak S_{1,2,3}\}$. Combining this with
our previous assumptions 
that neither $H_1$ nor $\overline{H_2}$ is equal to $P_1+P_3$ or an induced subgraph of $P_4$ 
means that $H_1,\overline{H_2} \in
\{K_{1,3},\allowbreak K_{1,3}+\nobreak P_1,\allowbreak K_{1,3}+\nobreak 2P_1,\allowbreak K_{1,3}+\nobreak 3P_1,\allowbreak K_{1,3}+\nobreak P_2,\allowbreak P_1+\nobreak P_2+\nobreak P_3,\allowbreak P_1+\nobreak 2P_2,\allowbreak P_1+\nobreak P_4,\allowbreak P_1+\nobreak P_5,\allowbreak P_1+\nobreak S_{1,1,2},\allowbreak P_1+\nobreak S_{1,1,3},\allowbreak 2P_1+\nobreak P_2,\allowbreak 2P_1+\nobreak P_3,\allowbreak 3P_1+\nobreak P_2,\allowbreak 3P_1+\nobreak P_3,\allowbreak P_2+\nobreak P_3,\allowbreak P_2+\nobreak P_4,\allowbreak P_5,\allowbreak P_6,\allowbreak S_{1,1,2},\allowbreak S_{1,1,3},\allowbreak S_{1,2,2},\allowbreak S_{1,2,3}\}$
(see also Table~\ref{table:smallgraphs} and recall that $3P_1\ssi\nobreak H_1,\overline{H_2}$).
In particular, this shows that the number of open cases is finite.

Suppose $H_1$ is not a linear forest. Then $K_{1,3}\ssi H_1$. If
$\overline{H_2} \si 2P_1+P_2, 4P_1$ or $2P_2$ then
Theorem~\ref{thm:classification2}.\ref{thm:classification2:unbdd:K_13or2P_2} or~\ref{thm:classification2}.\ref{thm:classification2:unbdd:2P_1+P_2}
applies.  The only remaining choice for $\overline{H_2}$ is $K_{1,3}$. 
Then, by symmetry, we may assume that $H_1$ is isomorphic to $K_{1,3}$, 
in which case Theorem~\ref{thm:classification2}.\ref{thm:classification2:bdd:K_13} applies.

We may now assume that $H_1$ and $\overline{H_2}$ are both linear forests, each
containing at least one edge. In other words, $H_1,\overline{H_2} \in
\{P_1+\nobreak P_2+\nobreak P_3,\allowbreak P_1+\nobreak 2P_2,\allowbreak P_1+\nobreak P_4,\allowbreak P_1+\nobreak P_5,\allowbreak 2P_1+\nobreak P_2,\allowbreak 2P_1+\nobreak P_3,\allowbreak 3P_1+\nobreak P_2,\allowbreak 3P_1+\nobreak P_3,\allowbreak P_2+\nobreak P_3,\allowbreak P_2+\nobreak P_4,\allowbreak P_5,\allowbreak P_6\}$.
Note that both of these graphs must therefore either be isomorphic to $P_5$ or
contain $2P_1+P_2$ as an induced subgraph. If $H_1=P_5$ then $\overline{H_2}$
must be $(4P_1,2P_2)$-free otherwise Theorem~\ref{thm:classification2}.\ref{thm:classification2:unbdd:K_13or2P_2}
applies. Thus $\overline{H_2} \in \{P_1+P_4,2P_1+P_2\}$, in which case
Theorem~\ref{thm:classification2}.\ref{thm:classification2:bdd:P_1+P_4} applies.  We may therefore assume that
neither $H_1$ nor $\overline{H_2}$ is isomorphic to $P_5$, and both must
therefore contain $2P_1+P_2$ as an induced subgraph.  Therefore, neither $H_1$
nor $\overline{H_2}$ may contain $5P_1, P_2+P_4$ or $P_6$ as an induced subgraph,
otherwise Theorem~\ref{thm:classification2}.\ref{thm:classification2:unbdd:2P_1+P_2} would apply.
We therefore conclude that $H_1,\overline{H_2} \in
\{P_1+\nobreak P_2+\nobreak P_3,\allowbreak P_1+\nobreak 2P_2,\allowbreak P_1+\nobreak P_4,\allowbreak P_1+\nobreak P_5,\allowbreak 2P_1+\nobreak P_2,\allowbreak 2P_1+\nobreak P_3,\allowbreak 3P_1+\nobreak P_2,\allowbreak P_2+\nobreak P_3\}$.

Suppose $H_1=2P_1+P_2$. If $\overline{H_2} \in \{P_1+P_4,2P_1+P_2\}$, then
Theorem~\ref{thm:classification2}.\ref{thm:classification2:bdd:P_1+P_4} would apply. If $\overline{H_2} \in \{2P_1+\nobreak P_3,\allowbreak 3P_1+\nobreak P_2,\allowbreak P_2+\nobreak P_3\}$, then Theorem~\ref{thm:classification2}.\ref{thm:classification2:bdd:2P_1+P_2} would apply. This leaves the cases where
$\overline{H_2} \in
\{P_1+\nobreak P_2+\nobreak P_3,\allowbreak P_1+\nobreak 2P_2,\allowbreak P_1+\nobreak P_5\}$, which
appear as Open Problem~\ref{oprob:twographs}.\ref{oprob:twographs:2P_1+P_2}. We now assume neither $H_1$ nor
$\overline{H_2}$ is isomorphic to $2P_1+P_2$.

Suppose $H_1=P_1+P_4$. If $\overline{H_2}\in
\{P_1+\nobreak P_2+\nobreak P_3,\allowbreak P_1+\nobreak P_5,\allowbreak 2P_1+\nobreak P_3,\allowbreak 3P_1+\nobreak P_2\}$
then Theorem~\ref{thm:classification2}.\ref{thm:classification2:unbdd:P_1+P_4} or~\ref{thm:classification2}.\ref{thm:classification2:unbdd:4P_1} would apply. If
$\overline{H_2}=P_1+P_4$, then Theorem~\ref{thm:classification2}.\ref{thm:classification2:bdd:P_1+P_4} applies. This
leaves the case where $\overline{H_2} \in
\{P_1+\nobreak 2P_2,\allowbreak P_2+\nobreak P_3\}$, both of which appear in Open
Problem~\ref{oprob:twographs}.\ref{oprob:twographs:P_1+P_4}. We may therefore assume that $H_1$ and
$\overline{H_2}$ are not isomorphic to $P_1+P_4$.

We have now that $H_1$ and $\overline{H_2} \in
\{P_1+\nobreak P_2+\nobreak P_3,\allowbreak P_1+\nobreak 2P_2,\allowbreak P_1+\nobreak P_5,\allowbreak 2P_1+\nobreak P_3,\allowbreak 3P_1+\nobreak P_2,\allowbreak P_2+\nobreak P_3\}$.  Note that
each of these graphs contains either $4P_1$ or $2P_2$ as an induced subgraph.
If either~$H_1$ or $\overline{H_2}$ contains an induced $2P_2$, then
in all these cases Theorem~\ref{thm:classification2}.\ref{thm:classification2:unbdd:K_13or2P_2} would apply. We may therefore
assume that $H_1,\overline{H_2} \in \{2P_1+P_3, 3P_1+P_2\}$. However, both
these graphs contain~$4P_1$, so if $H_1=3P_1+P_2$, then
Theorem~\ref{thm:classification2}.\ref{thm:classification2:unbdd:4P_1} applies. Therefore
$H_1=\overline{H_2}=2P_1+P_3$, which is Open Problem~\ref{oprob:twographs}.\ref{oprob:twographs:2P_1+P_3}. This completes the proof.\qed
\end{proof}

\section{Forbidding Other Patterns}\label{s-other}

Instead of forbidding one or more graphs as an induced subgraph of some other graph $G$, we could also forbid graphs under other containment relations. 
For example, a graph $G$ is
 {\it $(H_1,\ldots,H_p)$-subgraph-free} if $G$ has no subgraph isomorphic to a graph in $\{H_1,\ldots,H_p\}$.
In this section we consider this containment relation and two other well-known containment relations, which we define below.

Let $G$ and $H$ be graphs.
Then $G$ contains $H$ as a {\it minor} or {\it topological minor} if~$G$ can be modified into $H$ by 
a sequence that consists of edge contractions, edge deletions  and vertex deletions, or by  
a sequence that consists of vertex dissolutions, edge deletions and vertex deletions, respectively.
If $G$ does not contain any of the graphs $H_1,\ldots,H_p$ as a (topological) minor, we say that $G$ is
{\it $(H_1,\ldots,H_p)$-(topological-)minor-free}.

When we forbid a finite collection of either minors, subgraphs or topological minors, we can completely characterize those graph classes that have bounded clique-width. 
Before we prove these results we first state four known results, the last of which 
can be found in the textbook of Diestel~\cite{Di12}.
For a graph~$G$, let~$\tw(G)$ denote the tree-width of $G$ (see, for example, Diestel~\cite{Di12} for a definition). 

\newpage
\begin{lemma}[\cite{BL02}]\label{l-1g}
Let $H\in {\cal S}$. Then the class of $H$-subgraph-free graphs has bounded clique-width.
\end{lemma}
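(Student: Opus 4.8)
The plan is to prove the stronger statement that the class of $H$-subgraph-free graphs has bounded \emph{tree-width}; it is well known that bounded tree-width implies bounded clique-width, so this suffices. I would first reduce to a convenient universal shape. Every component of $H$ is a path or a subdivided claw on at most $h:=|V(H)|$ vertices, hence a subgraph of $S_{h,h,h}$, so $H$ is a subgraph of the disjoint union $c\cdot S_{h,h,h}$, where $c$ is the number of components of $H$. Consequently every $H$-subgraph-free graph is $(c\cdot S_{h,h,h})$-subgraph-free, and it suffices to bound the tree-width of $(c\cdot S_{h,h,h})$-subgraph-free graphs. The connected case $c=1$, i.e. forbidding a single balanced subdivided claw $T:=S_{h,h,h}$, is the heart of the matter.

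For graphs of bounded maximum degree the argument is clean. A graph of bounded maximum degree and large tree-width contains a large wall as a subgraph (a standard consequence of the excluded-grid theorem: under a degree bound a topologically embedded wall subdivision already appears as a subgraph). Any sufficiently large wall contains $c$ vertex-disjoint copies of $T$, since each degree-$3$ vertex of a wall is the origin of three internally disjoint long paths. Hence a $(c\cdot S_{h,h,h})$-subgraph-free graph of bounded degree has bounded tree-width, and by the equivalence for bounded-degree classes~\cite{GW00} it has bounded clique-width. As a sanity check it is worth recording the pure linear-forest special case: if $H$ were a linear forest then $H\ssi P_h$, so $H$-free graphs contain no path on $h$ vertices; running a depth-first search in each component yields a DFS tree of height $<h$ whose only non-tree edges join ancestors to descendants, so the tree-depth, and hence the tree-width, is bounded by a function of $h$.

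The real work is controlling vertices of unbounded degree, and this is where forbidding a \emph{subdivided claw} (rather than an arbitrary tree) is essential. The guiding observation is that a copy of $S_{h,h,h}$ needs only three internally disjoint paths of length $h$ from a single vertex; so if some high-degree vertex $v$ can reach ``depth $h$'' along three independent routes we are done. I would formalise this through a Ramsey/structural analysis of $N(v)$: a large neighbourhood yields either a large clique, which already contains short trees and, using connectivity, extends to a forbidden claw, or a large independent set from which three disjoint length-$h$ paths can be grown using the connectedness of the surrounding graph, unless the neighbourhoods of high-degree vertices are ``shallow'' (their components collapse to star-like or bounded-radius pieces). The effect is that, after deleting a bounded number of high-degree vertices, which changes clique-width by only a bounded amount by Fact~\ref{fact:del-vert}, the remainder has bounded maximum degree and we are back in the regime of the previous paragraph. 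The disjoint-union parameter $c$ is absorbed by a further Ramsey argument bounding the number of components that can simultaneously carry a copy of $S_{h,h,h}$.

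Combining the two regimes gives a uniform bound on the tree-width of $(c\cdot S_{h,h,h})$-subgraph-free graphs, and therefore on their clique-width; by the reduction this bounds the clique-width of $H$-subgraph-free graphs for every $H\in{\cal S}$. The step I expect to be genuinely delicate is the high-degree analysis: reconciling arbitrarily large degrees with a tree-width bound, i.e. showing that forbidding one fixed subdivided claw as a subgraph simultaneously caps the branching depth at a single vertex and the global path length enough to prevent a large wall from forming. The linear-forest and bounded-degree cases are, by contrast, routine.
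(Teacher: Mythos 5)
The paper does not actually prove this lemma; it imports it from~\cite{BL02}, so your proposal has to be judged on its own merits rather than against an in-paper argument.

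Your overall strategy -- prove bounded \emph{tree-width} and then apply $\cw(G)\leq 3\times 2^{\tw(G)-1}$ (Lemma~\ref{l-treewidth}) -- is sound, and your first two steps are correct: every $H\in{\cal S}$ with $h$ vertices and $c$ components is a subgraph of $c\cdot S_{h,h,h}$, so it suffices to handle $(c\cdot S_{h,h,h})$-subgraph-free graphs, and your bounded-degree and linear-forest cases go through. The genuine gap is the unbounded-degree case, which you yourself flag as ``the real work''. Your claim that after deleting a bounded number of high-degree vertices (Fact~\ref{fact:del-vert}) the remainder has bounded maximum degree is simply false for this class: take a caterpillar whose spine is $P_n$ with $n$ pendant leaves attached to every spine vertex. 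This graph is $S_{2,2,2}$-subgraph-free (from a spine vertex only the two spine directions admit paths of length~$2$, since every pendant path dies after one step, and leaves have degree~$1$), it is a tree, yet it has $n$ vertices of degree roughly $n$; no bounded number of vertex deletions yields a bounded-degree graph. Your Ramsey dichotomy for $N(v)$ cannot rescue this, because in the caterpillar \emph{every} high-degree vertex falls into your unresolved ``shallow independent neighbourhood'' case. As written, the high-degree regime is not proved.

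The irony is that the case distinction is unnecessary, and the paper already contains the tools to close the gap cleanly. The excluded-grid theorem requires no degree assumption on the \emph{host} graph: if $\tw(G)$ is large, then $G$ contains a large wall as a minor. The maximum-degree-$3$ condition is needed only for the \emph{guest}: since walls have maximum degree~$3$, Lemma~\ref{l-diestel} upgrades the wall minor to a wall topological minor, i.e.\ a subdivision of a large wall sitting in $G$ as a subgraph. In such a subdivision every branch vertex still has three internally disjoint paths of length at least $h$ (subdivision only lengthens them), and truncating them to length exactly $h$ yields $S_{h,h,h}$ as a subgraph; choosing $c$ branch vertices pairwise far apart in the wall yields $c$ disjoint copies. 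Phrased most economically with the paper's own lemmas: a $(c\cdot S_{h,h,h})$-subgraph-free graph cannot contain a sufficiently large wall $W$ even as a \emph{minor} (else Lemma~\ref{l-diestel} would produce a $W$-subdivision as a subgraph and hence the forbidden subgraph), so it is $W$-minor-free; since $W$ is planar, Lemma~\ref{l-minor} bounds its tree-width, and Lemma~\ref{l-treewidth} bounds its clique-width. Substituting this argument for your third paragraph, together with your reduction to $c\cdot S_{h,h,h}$, gives a complete and correct proof with no analysis of high-degree vertices at all.
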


\begin{lemma}[\cite{CR05}]\label{l-treewidth}
Let $G$ be a graph. Then $\cw(G) \leq 3\times 2^{\tw(G)-1}$.
\end{lemma}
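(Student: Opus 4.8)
The plan is to convert a tree decomposition of $G$ of width $k=\tw(G)$, bottom-up, into a $c$-expression with $c\le 3\cdot 2^{k-1}$. I would first fix a rooted \emph{nice} tree decomposition $(T,\{B_t\})$: every bag has at most $k+1$ vertices, and every node is a leaf, an \emph{introduce} node (adding one vertex to its child's bag), a \emph{forget} node (deleting one vertex), or a \emph{join} node (whose two children share the same bag). For a node $t$, write $G_t$ for the subgraph of $G$ induced by the vertices appearing in the subtree rooted at $t$. The crucial structural fact is that $B_t$ separates $V(G_t)\setminus B_t$ from the rest of $G$, so the only edges of $G$ still to be created above $t$ join $B_t$ to vertices introduced later.

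The engine of the proof is a labelling invariant carried along the traversal: in the expression built for $G_t$, the label of a vertex $v$ depends only on its \emph{type} $N(v)\cap B_t$, its adjacency pattern to the current bag. Since future joins depend only on adjacency to the separator $B_t$, two vertices of the same type are interchangeable and may share a label. This is exactly what keeps the number of labels bounded \emph{independently of} $|V(G)|$: although the number of vertices that still await edges may grow with the size of $G$, the number of distinct types among them is at most roughly $2^{k+1}$. A careful count --- separating vertices inside $B_t$ (which are never adjacent to themselves) from those below it, and recycling the labels of types that can no longer be extended --- sharpens this to $3\cdot 2^{k-1}$, and it is this refinement, rather than the cruder bound of about $2^{k+1}$, that I expect to be the most delicate bookkeeping.

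Leaf, introduce and forget nodes are routine to handle while preserving the invariant. To introduce $v$ I create it with a fresh label and, for every type-class all of whose members are neighbours of $v$ (within the bag, same-type vertices are uniformly adjacent or non-adjacent to $v$, since $v$ lies in their common type exactly when they are adjacent to it), apply an $\eta$ join between $v$'s label and that class; since every neighbour of $v$ present in $G_t$ must lie in $B_t$, no non-bag class is touched. A few $\rho$ relabellings then reinstate a type-based labelling for the enlarged bag. A forget node only coarsens the partition of types and so is implemented purely by $\rho$ relabellings.

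The main obstacle, and the reason why no polynomial bound in $k$ is available, is the join node. Its two children carry the same bag $B$, and $G_t$ is the union of $G_{t_1}$ and $G_{t_2}$ glued along $B$; but the sole binary operation of clique-width is the disjoint union $\oplus$, which would create two copies of every vertex of $B$. The remedy is to avoid combining the two finished expressions and instead realise the whole of $G_t$ as a single running construction: lay down $G_{t_1}$, then add the vertices of $V(G_{t_2})\setminus B$ on top of it, reconnecting each to $B$ through the type-labels that the bag vertices already carry. The separator property guarantees that $V(G_{t_1})\setminus B$ and $V(G_{t_2})\setminus B$ share no edges, so no spurious adjacencies arise, and correctness reduces to the fact that the type-labelling records precisely the information about $B$ needed to re-attach the second side. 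Verifying that the label budget of $3\cdot 2^{k-1}$ is never exceeded during this interleaving completes the argument.
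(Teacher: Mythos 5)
A preliminary remark: the paper itself gives no proof of Lemma~\ref{l-treewidth} --- it is imported verbatim from Corneil and Rotics~\cite{CR05} --- so your proposal can only be measured against that cited argument, whose broad strategy (induction over a tree decomposition, with labels recording adjacency to the current bag) your sketch does share. The first genuine gap is in your introduce step. Your invariant makes the label of \emph{every} vertex of $G_t$, including the bag vertices themselves, a function of its type $N(v)\cap B_t$, and this is unsound. Take $B_{t'}=\{u,w\}$ with $u$ and $w$ non-adjacent: both have type $\emptyset$, hence the same label. Now introduce $v$ adjacent to $u$ but not to $w$. No sequence of clique-width operations can create the edge $uv$ without also creating $wv$, because $\rho$ and $\eta$ can merge or jointly connect label classes but can never split one; vertices that once share a label have identical neighbourhoods among all vertices created later. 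Your parenthetical justification (``same-type vertices are uniformly adjacent or non-adjacent to $v$'') is circular: it holds for types with respect to the \emph{new} bag $B_t\ni v$, but those types cannot be installed before the very $\eta$-operation whose soundness is at stake. The repair is to give the at most $k+1$ bag vertices individual labels and to reserve type-labels for vertices already forgotten.

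The second gap is the join step, and it exposes a deeper confusion about what the induction carries. In your scheme the expression for $G_t$ already contains \emph{all} edges of $G_t$, so forgotten vertices are finished and only the $\leq k+1$ bag vertices await further edges --- in that scheme the type-classes of non-bag vertices do no work at all. The types only earn their keep if the recursion instead produces an expression for $G_t-B_t$ (edges into the bag deliberately withheld) whose final labels record types with respect to $B_t$: then a join node is an honest disjoint union $\oplus$ of the two children's expressions (legitimate, since $B_t$ separates the two sides and neither expression contains $B_t$), and the withheld edges from forgotten vertices to a bag vertex are created by $\eta$ at the moment that bag vertex is itself forgotten. Your actual proposal for the join --- ``lay down $G_{t_1}$, then add the vertices of $V(G_{t_2})\setminus B$ on top of it'' --- is not an available move: added one at a time, those vertices cannot receive their edges \emph{to each other} (that requires the tree-structured expression for $G_{t_2}-B$, not a sequence of single-vertex insertions), while added as a block, the only block your induction supplies is the expression for $G_{t_2}$, which contains its own copies of $B$ and so cannot be $\oplus$-ed in. Finally, even after both repairs, the count one obtains this way is roughly $2^{k+1}+1=4\cdot 2^{\tw(G)-1}+1$ labels; the constant $3\cdot 2^{\tw(G)-1}$, which is the entire content of the lemma beyond ``clique-width is bounded in terms of tree-width'', is exactly the ``delicate bookkeeping'' you defer, and it is carried out in full in~\cite{CR05}. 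As written, therefore, the proposal establishes neither the stated bound nor, without the repairs above, any bound at all.
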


\begin{lemma}[\cite{RS86}]\label{l-minor}
Let $H$ be a planar graph. Then the class of $H$-minor-free graphs has bounded tree-width.
\end{lemma}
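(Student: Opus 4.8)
The plan is to derive this from the Grid Minor Theorem, reducing the statement to two ingredients: a planar embedding fact and the deep treewidth-to-grid-minor result. Throughout, write $\Gamma_k$ for the $k\times k$ square grid. The overall strategy is a contrapositive argument: I will show that any graph of large enough tree-width must contain $H$ as a minor, so that $H$-minor-free graphs are forced to have small tree-width.

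First I would establish the purely topological fact that every planar graph $H$ is a minor of $\Gamma_k$ for all sufficiently large $k$. Fixing a planar embedding of $H$, one perturbs it so that the vertices sit at distinct points and the edges are disjoint polygonal arcs, and then overlays a grid fine enough that each vertex of $H$ is represented by a distinct grid vertex and each edge by an internally disjoint grid path. Contracting each such path and deleting the unused vertices of the grid exhibits $H$ as a minor of $\Gamma_k$, where $k$ depends only on $H$ and the complexity of the chosen embedding (a clean quantitative version follows from any of the standard $O(n)\times O(n)$ planar grid-drawing theorems). Second, I would invoke the Grid Minor Theorem: there is a function $f$ such that every graph $G$ with $\tw(G)\geq f(k)$ contains $\Gamma_k$ as a minor.

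Combining the two ingredients finishes the argument. Given a planar $H$, choose $k$ with $H$ a minor of $\Gamma_k$. If $G$ is any graph with $\tw(G)\geq f(k)$, then $G$ contains $\Gamma_k$ as a minor, and since minor containment is transitive, $G$ contains $H$ as a minor. Taking the contrapositive, every $H$-minor-free graph $G$ satisfies $\tw(G)<f(k)$, a bound depending only on $H$; hence the class of $H$-minor-free graphs has bounded tree-width. The hard part is entirely concentrated in the Grid Minor Theorem: the implication that large tree-width forces a large grid minor is the deep combinatorial core of the Robertson--Seymour Graph Minors series and is well beyond a short sketch. In the present setting this is precisely the content cited as \cite{RS86}, so I would state it as a black box; the planar-to-grid embedding step, by contrast, is routine and needs only enough care in refining the grid that the vertex-representing subgrids and edge-representing paths do not collide.
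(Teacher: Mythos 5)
Your proposal is correct and matches the paper's treatment: the paper offers no inline proof but simply cites Robertson and Seymour's ``Excluding a planar graph'' paper, whose content is exactly the Grid Minor Theorem you invoke as a black box, together with the routine observation that every planar graph is a minor of a sufficiently large grid. Your contrapositive derivation is the standard (and intended) way to obtain the lemma from that citation, so there is nothing to flag.
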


\begin{lemma}\label{l-diestel}
Let $H$ be a graph of maximum degree at most~$3$.
Then any graph that contains~$H$ as a minor contains $H$ as a topological minor.
\end{lemma}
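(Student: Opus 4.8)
The plan is to prove the two directions separately. One direction is immediate: a subdivision of $H$ occurring as a subgraph of $G$ is a special kind of minor model, so a topological minor is always a minor. The content is the converse. Recall that $G$ contains $H$ as a minor precisely when there is a \emph{minor model}: pairwise-disjoint vertex sets $\{B_v : v \in V(H)\}$ with each $G[B_v]$ connected (the \emph{branch sets}), such that for every edge $uv \in E(H)$ some edge of $G$ joins $B_u$ to $B_v$. Likewise, $H$ is a topological minor of $G$ exactly when $G$ has a subgraph isomorphic to a subdivision of $H$, equivalently when one can pick a \emph{branch vertex} $b_v \in V(G)$ for each $v \in V(H)$ and internally disjoint paths realizing the edges of $H$.

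So I would start from a minor model $\{B_v\}$ of $H$ in $G$ and, for each edge $uv \in E(H)$, fix a single edge $e_{uv}$ of $G$ between $B_u$ and $B_v$. For each vertex $v$, let $Y_v \subseteq B_v$ collect the endpoints inside $B_v$ of the chosen edges $e_{uv}$ over all neighbours $u$ of $v$; since $\deg_H(v) \le 3$, we have $|Y_v| \le 3$. As $G[B_v]$ is connected, I would take a spanning tree of it and then the minimal subtree $T_v$ meeting every vertex of $Y_v$. All leaves of $T_v$ lie in $Y_v$, so $T_v$ has at most three leaves.

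The crux, and the only point where the degree bound enters, is the structural fact that a tree with at most three leaves has at most one vertex of degree exceeding two. Hence $T_v$ is either a path (when $|Y_v| \le 2$, or when the three vertices of $Y_v$ lie on a common path of $T_v$) or a subdivided claw with a unique branch point of degree three. In each case I can designate a branch vertex $b_v \in B_v$ so that $T_v$ contains internally disjoint paths from $b_v$ to every vertex of $Y_v$: take $b_v$ to be the degree-three branch point when it exists, and otherwise the ``middle'' vertex of $Y_v$ on the path.

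Finally I would assemble the global subdivision. For each edge $uv \in E(H)$, concatenate the path in $T_u$ from $b_u$ to the endpoint of $e_{uv}$ in $Y_u$, the edge $e_{uv}$ itself, and the path in $T_v$ from that edge's endpoint in $Y_v$ to $b_v$; this is a path of $G$ joining $b_u$ and $b_v$. These paths are internally disjoint: distinct branch sets are vertex-disjoint, the paths chosen inside each $T_v$ are internally disjoint by construction, and the crossing edges $e_{uv}$ are pairwise distinct. Their union is therefore a subdivision of $H$ in $G$, so $H$ is a topological minor of $G$. I expect the main obstacle to be exactly this degree-three bookkeeping, namely confirming that a single branch vertex can serve all (at most three) attachment points through internally disjoint paths, which is precisely what breaks down as soon as some vertex of $H$ has degree at least four.
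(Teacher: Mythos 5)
The paper does not actually prove this lemma: it is stated as a known result and attributed to Diestel's textbook, so your proposal has to be judged against the standard argument rather than against a proof in the paper. Your overall strategy is exactly that standard route — fix one crossing edge $e_{uv}$ per edge of $H$, take in each branch set $B_v$ a minimal subtree $T_v$ through the attachment points $Y_v$, use the count ``at most three leaves, hence at most one vertex of degree exceeding two'' to designate a branch vertex $b_v$, and concatenate — and the degree-three bookkeeping (a tree with at most three leaves is a path or a subdivided claw) is correct. The ``two directions'' framing is superfluous, since the lemma only claims that a minor yields a topological minor, but that is harmless.

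There is, however, a genuine gap in the degenerate case where two of your chosen crossing edges meet $B_v$ at the \emph{same} vertex. Nothing prevents $e_{uv}$ and $e_{u'v}$ from sharing their endpoint $y\in B_v$, and this cannot always be avoided by choosing the crossing edges differently (take $B_v=\{y,z\}$ where $y$ is the only vertex of $B_v$ with neighbours in both $B_u$ and $B_{u'}$, and $z$ the only one with a neighbour in $B_{u''}$). Then $|Y_v|<\deg_H(v)$, and your assembly uses the path in $T_v$ from $y$ to $b_v$ \emph{twice}, once for the $H$-edge $uv$ and once for $u'v$; the two resulting $b_u$--$b_v$ and $b_{u'}$--$b_v$ paths share $y$ (and every vertex between $y$ and $b_v$) as internal vertices unless $b_v=y$. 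Your rule ``the degree-three branch point when it exists, otherwise the middle vertex of $Y_v$'' does not guarantee $b_v=y$ — with $|Y_v|=2$ there is no middle vertex, and picking $z$ breaks the construction — so the claim that ``the paths chosen inside each $T_v$ are internally disjoint by construction'' fails: internal disjointness is needed once per edge of $H$, not once per element of $Y_v$, which silently presupposes $|Y_v|=\deg_H(v)$. The repair is short: since each chosen edge has exactly one endpoint in $B_v$, at most one vertex of $B_v$ can be the common endpoint of two or more chosen edges, and whenever such a vertex exists you must take $b_v$ to be precisely that vertex (all repeated paths then become trivial); in all other cases your prescription works. With that amendment the proof is correct.
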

 
We are now ready to state the three dichotomy results.

\begin{theorem}\label{t-finite}
Let $\{H_1,\ldots,H_p\}$ be a finite set of graphs. Then the following statements hold:
\begin{enumeratei}
\begin{sloppypar}
\item \label{t-finite-subgraph}The class of $(H_1,\ldots,H_p)$-subgraph-free graphs has bounded clique-width if and only if 
$H_i\in\nobreak {\cal S}$ for some $1\leq i\leq p$.\\[-8pt]
\end{sloppypar}
\item \label{t-finite-minor}The class of $(H_1,\ldots,H_p)$-minor-free graphs has bounded clique-width if and only if 
$H_i$ is planar for some $1\leq i\leq p$.\\[-8pt]
\item \label{t-finite-top minor}The class of $(H_1,\ldots,H_p)$-topological-minor-free graphs has bounded clique-width if and only if 
$H_i$ is planar and has maximum degree at most~$3$ for some $1\leq i\leq p$.
\end{enumeratei}
\end{theorem}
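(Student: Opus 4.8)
The plan is to prove all three biconditionals by the same two-pronged strategy. For each ``if'' direction I would use that a subclass of a class of bounded clique-width again has bounded clique-width, combined with the boundedness results quoted above; for each ``only if'' direction I would exhibit inside the class a family of (possibly subdivided) walls, which has unbounded clique-width by Lemma~\ref{l-walls}.

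The ``if'' directions are immediate. For statement~(i), if some $H_i\in{\cal S}$, the class of $(H_1,\ldots,H_p)$-subgraph-free graphs is contained in the class of $H_i$-subgraph-free graphs, which has bounded clique-width by Lemma~\ref{l-1g}. For statement~(ii), if some $H_i$ is planar, the class is contained in the class of $H_i$-minor-free graphs, which has bounded tree-width by Lemma~\ref{l-minor} and hence bounded clique-width by Lemma~\ref{l-treewidth}. For statement~(iii), if some $H_i$ is planar with maximum degree at most~$3$, then by Lemma~\ref{l-diestel} a graph contains $H_i$ as a topological minor if and only if it contains $H_i$ as a minor, so the $H_i$-topological-minor-free graphs coincide with the $H_i$-minor-free graphs and we conclude as in statement~(ii).

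For two of the ``only if'' directions I would exploit that every wall (and every $k$-subdivided wall) is planar and has maximum degree at most~$3$. For statement~(ii), if no $H_i$ is planar, then since a minor of a planar graph is planar, no $H_i$ is a minor of any wall; hence every wall is $(H_1,\ldots,H_p)$-minor-free, so the class contains the walls and has unbounded clique-width. For statement~(iii), if no $H_i$ is both planar and of maximum degree at most~$3$, then each $H_i$ is either non-planar or contains a vertex of degree at least~$4$. A non-planar $H_i$ cannot be a topological minor of a planar graph, whereas an $H_i$ with a degree-$4$ vertex cannot be a topological minor of a graph of maximum degree at most~$3$, since a subdivision of $H_i$ sitting inside the host would force a host vertex of degree at least~$4$. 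Either way no $H_i$ is a topological minor of any wall, so again the class contains all walls and has unbounded clique-width.

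The main obstacle is the ``only if'' direction of statement~(i), since plain walls contain many trees as (non-induced) subgraphs and so do not suffice. Here I would use $k$-subdivided walls with $k$ chosen larger than $\max_i|V(H_i)|$, and show that $H_i\notin{\cal S}$ forces $H_i$ not to be a subgraph of any such wall. The two facts I would rely on are that a $k$-subdivided wall has girth $6(k+1)$ and that its only vertices of degree~$3$ (the images of the original wall vertices) lie pairwise at distance at least $k+1$, every remaining vertex having degree at most~$2$. In any subgraph embedding, a vertex of degree~$d$ of $H_i$ must map to a host vertex of degree at least~$d$, and a path of $H_i$ maps to a path of the same length. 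For $k$ this large it follows that a subgraph $H_i$ of the wall can contain no cycle (the girth is too large), no vertex of degree at least~$4$, and at most one degree-$3$ vertex per component (two such would have to map to distinct, hence distant, branch vertices). Consequently each component of $H_i$ is a path or a subdivided claw, that is $H_i\in{\cal S}$. Contrapositively, if every $H_i\notin{\cal S}$, then all these $k$-subdivided walls are $(H_1,\ldots,H_p)$-subgraph-free, whence the class has unbounded clique-width by Lemma~\ref{l-walls}.
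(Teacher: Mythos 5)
Your proposal is correct and follows essentially the same route as the paper: bounded cases via Lemmas~\ref{l-1g}, \ref{l-treewidth}, \ref{l-minor} and~\ref{l-diestel}, and unbounded cases by embedding (subdivided) walls, using planarity and the maximum degree~$3$ of walls for statements~(ii) and~(iii), and $k$-subdivided walls with $k$ exceeding $\max_i|V(H_i)|$ for statement~(i). The only difference is presentational: for statement~(i) the paper first extracts a cycle $C_j$ or a graph $I_j$ (two subdivided claws joined at their centres) from each $H_i\notin{\cal S}$ and argues these cannot be subgraphs of a $g$-subdivided wall, whereas you argue directly that any subgraph of a heavily subdivided wall lies in ${\cal S}$ via the girth and branch-vertex-distance bounds---the same underlying facts, spelled out slightly more explicitly.
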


\begin{proof}
We first prove~\ref{t-finite-subgraph}. 
First suppose that $H_i\in {\cal S}$ for some $i$. Then the class of $(H_1,\ldots,H_p)$-subgraph-free graphs has bounded clique-width,
by Lemma~\ref{l-1g}.
Now suppose that $H_i\notin {\cal S}$ for all~$i$.
For $j \geq 0$, let $I_j$ be the graph formed from $2P_3$ by joining the central vertices of the two~$P_3$'s by a path of length $j$ (so $I_0=K_{1,4}$). Since $H_i \notin {\cal S}$, every $H_i$ contains an induced subgraph isomorphic to some $C_j$ or to some $I_j$.
Let $g$ be the maximum number of vertices of such an induced subgraph in $H_1+\cdots +H_p$.
Then the class of $(H_1,\ldots,H_p)$-subgraph-free graphs
contains the class of $g$-subdivided walls.
Hence, it has unbounded clique-width by Lemma~\ref{l-walls}.

We now prove~\ref{t-finite-minor}.
First suppose that $H_i$ is planar for some $i$.
Then the class of $H_i$-minor-free graphs, and thus the class of $(H_1,\ldots,H_p)$-minor-free graphs, has bounded tree-width by Lemma~\ref{l-minor}. 
Consequently, it has bounded clique-width, by Lemma~\ref{l-treewidth}.
Now suppose that $H_i$ is non-planar for all~$i$.
Because planar graphs are closed under taking minors, every planar graph is $(H_1,\ldots,H_p)$-minor-free.
Hence, the class of $(H_1,\ldots,H_p)$-minor-free graphs contains the class of walls, and thus has unbounded clique-width by Lemma~\ref{l-walls}.

Finally, we prove~\ref{t-finite-top minor}.
First suppose that $H_i$ is a planar graph of maximum degree at most~3 for some~$i$.
By Lemma~\ref{l-diestel}, any $H_i$-topological-minor-free is $H_i$-minor-free. Hence, we can repeat the arguments from above to find that the class of $(H_1,\ldots,H_p)$-free graphs has bounded clique-width.
Now suppose that $H_i$ is either non-planar or contains a vertex of degree at least 4 for all~$i$.
Consider some $H_i$. 
First assume that $H_i$ is not planar.
Because planar graphs are closed under taking topological minors, every planar graph, and thus every wall, is $H_i$-topological-minor-free.
Now suppose that~$H_i$ is planar. Then $H_i$ must have maximum degree at least~4. Because
every wall has minimum degree at most~3, it is $H_i$-topological-minor-free.
We conclude that the class of $(H_1,\ldots,H_p)$-topological-minor-free graphs contains the class of walls, and thus has unbounded clique-width by
Lemma~\ref{l-walls}.\qed
\end{proof}

\section{Consequences for Colouring}\label{s-conclusions}

One of the motivations of our research was to further the study of the computational complexity of the {\sc Colouring} problem for $(H_1,H_2)$-free 
graphs. Recall that {\sc Colouring} is polynomial-time solvable on any graph class of bounded clique-width by combining results of Kobler and Rotics~\cite{KR03b} 
and Oum~\cite{Oum08}.
By combining a number of known results~\cite{BGPS12a,BGPS12,DLRR12,GP14,KKTW01,Ma13,Ra04,RS02,Sc05}
with new results,
Dabrowski, Golovach and Paulusma~\cite{DGP14} presented a summary of known results for {\sc Colouring} restricted to $(H_1,H_2)$-free 
graphs.
Combining Theorem~\ref{thm:classification2} with the results of Kobler and Rotics~\cite{KR03b} and Oum~\cite{Oum08} and incorporating 
a number of recent results~\cite{HL,HJP14,Ma}
leads to an updated summary. This updated summary
(and a proof of it) 
can be found in the recent survey paper of Golovach, Johnson, Paulusma and Song~\cite{GJPS}, 
but for completeness we also present it here.

The graph $C_3^+$ is another notation used for $\overline{P_1+P_3}$.
The graph with vertices $a,b,c,d,e$ and edges $ab,ac,ad,bc,de$ is called the {\it hammer} and is denoted by~$C_3^*$.
The graph with vertices $a,b,c,d,e$ and edges $ab,ac,ad,bc,be$ is called the {\it bull} and is denoted by~$C_3^{++}$.

\begin{theorem}\label{t-twographs}
Let $H_1$ and $H_2$ be two graphs. Then the following hold:
\begin{enumeratei}
\item {\sc Colouring} is \NP-complete for $(H_1, H_2)$-free graphs if\\[-8pt]
\begin{enumerate1}
\item  $H_1\si C_r$ for $r\geq 3$ and $H_2\si C_s$ for  $s\geq 3$
\item $H_1\si K_{1,3}$ and $H_2\si K_{1,3}$
\item  $H_1$ and $H_2$ contain a spanning subgraph of $2P_2$ as an induced subgraph
\item $H_1\si C_3^{++}$ and $H_2\si K_{1,4}$
\item  $H_1\si C_3$ and $H_2\si K_{1,r}$  for  $r\geq 5$
\item $H_1\si C_r$ for  $r\geq 4$ and $H_2\si K_{1,3}$
\item  $H_1\si C_3$ and 
$H_2\si P_{22}$
\item $H_1\si C_r$ for  $r\geq 5$ and $H_2$ contains a spanning subgraph of $2P_2$ as an induced subgraph
\item $H_1\si C_r+ P_1$ for $3\leq r\leq 4$ or $H_1\si  \overline{C_r}$ for $r\geq 6$, and
$H_2$ contains a spanning subgraph of $2P_2$ as an induced subgraph
\item  $H_1\si K_4$ or $H_1\si \overline{2P_1+P_2}$, and $H_2\si K_{1,3}$.\\[-8pt]
\end{enumerate1}
\item {\sc Colouring} is polynomial-time solvable for $(H_1, H_2)$-free graphs if\\[-8pt] 
\begin{enumerate1}
\item $H_1$ or $H_2$ is an induced subgraph of $P_1+P_3$ or of $P_4$
\item $H_1\ssi K_{1,3}$, and $H_2\ssi C_3^{++}$, $C_3^*$ or $P_5$
\item $H_1\neq K_{1,5}$ is a forest on at most six vertices or $H_1= K_{1,3}+3P_1$, and $H_2\ssi C_3^+$  
\item $H_1\ssi sP_2$ or $H_1\ssi sP_1+P_5$ for  $s\geq 1$, and $H_2=K_t$ for $t\geq 4$
\item $H_1\ssi sP_2$ or $H_1\ssi sP_1+P_5$ for $s\geq 1$, and $H_2\ssi C_3^+$
\item $H_1\ssi P_1+P_4$ or $H_1\ssi P_5$, and $H_2\ssi  \overline{P_1+P_4}$ 
\item $H_1\ssi P_1+P_4$ or $H_1\ssi P_5$,
and $H_2\ssi \overline{P_5}$
\item $H_1\ssi 2P_1+P_2$, and $H_2\ssi \overline{3P_1+P_2}$ or $H_2\ssi \overline{2P_1+P_3}$
\item $H_1\ssi \overline{2P_1+P_2}$, and $H_2\ssi {3P_1+P_2}$ or $H_2\ssi 2P_1+P_3$
\item $H_1\ssi sP_1+P_2$ for $s\geq 0$ or $H_1=2P_2$, and  $H_2\ssi \overline{tP_1+P_2}$ for $t\geq 0$
\item $H_1\ssi 4P_1$ and $H_2 \ssi \overline{2P_1+P_3}$
\item $H_1\ssi P_5$, and $H_2\ssi C_4$ or $H_2\ssi \overline{2P_1+P_3}$.
\end{enumerate1}
\end{enumeratei}
\end{theorem}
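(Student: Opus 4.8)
The plan is to treat this theorem as a large structured case analysis that assembles the complexity status of {\sc Colouring} on each relevant class of $(H_1,H_2)$-free graphs, drawing on the known hardness constructions and polynomial-time algorithms from the literature together with the new clique-width bounds established earlier in this paper. The single unifying algorithmic engine on the tractable side is the combination of Kobler and Rotics~\cite{KR03b} with Oum~\cite{Oum08}: {\sc Colouring} is solvable in polynomial time on any graph class of bounded clique-width, even when no expression is provided. Hence a large block of the polynomial cases in statement~(ii) will be obtained simply by pointing to the corresponding bounded-clique-width entry of Theorem~\ref{thm:classification2}.

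First I would dispatch the \NP-complete cases of statement~(i). Each subcase corresponds to a graph class on which {\sc Colouring}, or a suitable restriction of it, is already known to be hard, so the superclass $(H_1,H_2)$-free graphs inherits the hardness. For instance, the case $K_{1,3}\ssi H_1$ and $K_{1,3}\ssi H_2$ means that every claw-free graph is $(H_1,H_2)$-free, and {\sc Colouring} is \NP-complete on claw-free graphs since these include all line graphs and {\sc Colouring} of line graphs is edge-colouring; the cases involving forbidden cycles rely on hardness of {\sc Colouring} on graphs of large girth, and the cases where both $H_1$ and $H_2$ contain a spanning subgraph of $2P_2$ rely on the hardness results collected in~\cite{KKTW01,Ma13,Sc05} and the further references cited above. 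For each subcase the work is purely to match the forbidden pair against the hypotheses of an existing hardness theorem; no new reduction is required.

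Next I would handle statement~(ii), splitting it into two groups. The first group consists of pairs for which Theorem~\ref{thm:classification2} already certifies bounded clique-width, whence polynomial-time solvability follows at once from~\cite{KR03b,Oum08}: the $(P_1+P_4,\overline{P_1+P_4})$ and $(P_5,\overline{P_1+P_4})$ cases via Theorem~\ref{thm:classification2}.\ref{thm:classification2:bdd:P_1+P_4}, the $(2P_1+P_2,\cdot)$ cases via Theorem~\ref{thm:classification2}.\ref{thm:classification2:bdd:2P_1+P_2}, the $(4P_1,\overline{2P_1+P_3})$ case via Theorem~\ref{thm:classification2}.\ref{thm:classification2:bdd:4P_1}, and the $(K_{1,3}+3P_1,C_3^+)$ case via the new Theorem~\ref{thm:K_3s_111+3P_1} together with Lemma~\ref{l-paw}. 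The second group consists of the remaining pairs whose tractability does not come from clique-width but from dedicated {\sc Colouring} or {\sc List Colouring} algorithms, such as the $sP_2$- and $(sP_1+P_5)$-type cases combined with $K_t$ or $C_3^+$, and the $P_5$-based cases, which are imported from~\cite{HL,HJP14,Ma} and the earlier references. Again the task is one of citation and verification that the stated forbidden-subgraph relations fall within the scope of the cited algorithm.

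The main obstacle I anticipate is not any single deep argument but completeness and consistency bookkeeping: I must verify that the listed subcases of~(i) and~(ii) are mutually consistent, so that no pair is simultaneously asserted hard and tractable, and, on the tractable side, that each clique-width citation is invoked with $H_1$ and $H_2$ interpreted up to the equivalence (complementation and the $K_3\leftrightarrow\overline{P_1+P_3}$ swap) used in Theorem~\ref{thm:classification2}. A secondary subtlety is that a handful of borderline pairs require first combining a structural reduction, such as Lemma~\ref{l-paw} reducing a $\overline{P_1+P_3}$-free class to the $K_3$-free case, or reducing to connected components, with the imported algorithm before the cited result formally applies; aligning these reductions with the exact hypotheses of the sources is where most of the care will be needed.
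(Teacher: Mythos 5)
Your proposal takes essentially the same approach as the paper: the paper treats this theorem as a compilation, obtained by combining the earlier summary of~\cite{DGP14} with Theorem~\ref{thm:classification2}, the algorithmic results of Kobler and Rotics~\cite{KR03b} and Oum~\cite{Oum08}, and the recent results~\cite{HL,HJP14,Ma}, deferring the detailed case-by-case verification to the survey of Golovach, Johnson, Paulusma and Song~\cite{GJPS}. Your plan of matching each hard case to an existing hardness theorem and each tractable case either to a bounded-clique-width entry of Theorem~\ref{thm:classification2} (including the new $(K_{1,3}+3P_1,C_3^+)$ case via Theorem~\ref{thm:K_3s_111+3P_1} and Lemma~\ref{l-paw}) or to a dedicated imported algorithm is exactly the paper's route.
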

From this summary we note that not only the case when $H_1=P_4$ or $H_2=P_4$ but thirteen
other maximal classes of $(H_1,H_2)$-free graphs for which {\sc Colouring} is known to be polynomial-time solvable can be obtained by combining 
Theorem~\ref{thm:classification2} with the results of Kobler and Rotics~\cite{KR03b} and Oum~\cite{Oum08} 
(see also~\cite{GJPS}).
One of these 
thirteen 
classes  is one that we obtained in this paper (Theorem~\ref{thm:K_3s_111+3P_1}), namely the class of  $(K_{1,3}+3P_1,\overline{P_1+P_3})$-free graphs, for which {\sc Colouring} was not previously known to  be polynomial-time solvable.
Note that  Dabrowski, Lozin, Raman and Ries~\cite{DLRR12} already showed that {\sc Colouring} is polynomial-time solvable for $(\overline{P_1+P_3},P_1+S_{1,1,2})$-free graphs, but in Theorem~\ref{thm:K_3s_112+P_1} we strengthened their result by showing that the clique-width of this class is also bounded.

Theorem~\ref{t-opencases} shows that there are~13 classes of $(H_1,H_2)$-free graphs (up to equivalence) 
for which we do not know whether their clique-width is bounded.
These classes correspond to 
28+6+4+1=39 distinct classes of $(H_1,H_2)$-free graphs. As can be readily verified from Theorem~\ref{t-twographs},  the complexity of {\sc Colouring} is unknown for only 
15 of these classes. We list these cases below:

\newpage
\begin{enumerate}
\item $\overline{H_1}\in \{3P_1,P_1+P_3\}$ and $H_2\in \{P_1+S_{1,1,3},S_{1,2,3}\}$;\\[-10pt]
\item $H_1=2P_1+\nobreak P_2$ and $\overline{H_2} \in \{P_1+\nobreak P_2+\nobreak P_3,\allowbreak P_1+\nobreak 2P_2,\allowbreak P_1+\nobreak P_5\}$;\\[-10pt]
\item $H_1=\overline{2P_1+\nobreak P_2}$ and $H_2 \in \{P_1+\nobreak P_2+\nobreak P_3,\allowbreak P_1+\nobreak 2P_2,\allowbreak P_1+\nobreak P_5\}$;\\[-10pt]
\item $H_1=P_1+P_4$ and $\overline{H_2} \in \{P_1+2P_2,P_2+P_3\}$;\\[-10pt]
\item $\overline{H_1}=P_1+P_4$ and ${H_2} \in \{P_1+2P_2,P_2+P_3\}$;\\[-10pt]
\item $H_1=\overline{H_2}=2P_1+P_3$.
\end{enumerate}
Note that Case 1 above reduces to two subcases by Lemma~\ref{l-olariu}. All classes of $(H_1,H_2)$-free graphs, for which the complexity of 
{\sc Colouring} is still open and which are not listed above have unbounded clique-width. Hence, new techniques will need to be developed to deal with these classes.

\section{Conclusions}\label{s-con}

We have determined for which pairs $(H_1,H_2)$ the class of $(H_1,H_2)$-free graphs has bounded clique-width, and for which pairs $(H_1,H_2)$ it has unbounded clique-width except for~13 non-equivalent cases, which we posed as open problems. We completely classified the (un)boundedness of the clique-width of those classes of graphs in which we forbid a finite family of graphs $\{H_1,\ldots,H_p\}$ as subgraphs, minors and topological minors, respectively.
Finally, we showed the implications of our results for the complexity of the {\sc Colouring} problem restricted to $(H_1,H_2)$-free graphs. In particular we
identified all 15 additional classes of $(H_1,H_2)$-free graphs for which {\sc Colouring} could potentially be solved in polynomial time if their clique-width turns out to be bounded.

\bibliographystyle{abbrv}

\bibliography{mybib}
\end{document}